\title{Dichromatic state sum models for four-manifolds from pivotal functors}
\author{{\scshape Manuel Bärenz$^1$, John Barrett$^2$}
	\smallskip\\
	1) \begin{minipage}{0.48\textwidth}
		\itshape
		Fakultät für Mathematik\\
		Universität Wien\\
		Oskar-Morgenstern-Platz 1\\
		1090 Wien\\
		Austria
	\end{minipage}
	2) \begin{minipage}{0.48\textwidth}
		\itshape
		School of Mathematical Sciences\\
		University of Nottingham\\
		University Park\\
		NG7 2RD Nottingham\\
		United Kingdom
	\end{minipage}
}
\begin{document}

\maketitle

\begin{abstract}
	A family of invariants of smooth, oriented four-dimensional manifolds is defined via handle decompositions and the Kirby calculus of framed link diagrams.
	The invariants are parametrised by a pivotal functor from a spherical fusion category into a ribbon fusion category.

	A state sum formula for the invariant is constructed via the chain-mail procedure, so a large class of topological state sum models can be expressed as link invariants.
	Most prominently, the Crane-Yetter state sum over an arbitrary ribbon fusion category is recovered, including the nonmodular case.
	It is shown that the Crane-Yetter invariant for nonmodular categories is stronger than signature and Euler invariant.

	A special case is the four-dimensional untwisted Dijkgraaf-Witten model.
	Derivations of state space dimensions of TQFTs arising from the state sum model agree with recent calculations of ground state degeneracies in Walker-Wang models.

	Relations to different approaches to quantum gravity such as Cartan geometry and teleparallel gravity are also discussed.
\end{abstract}

\section{Introduction}

The Crane-Yetter model \cite{CraneYetterKauffman:1997177} is a state sum invariant of four-dimensional manifolds that determines a topological quantum field theory (TQFT).
The purpose of this
\begin{dichroarticle}
paper
\end{dichroarticle}
\begin{dichrothesis}
part
\end{dichrothesis}
is to give a more general construction that puts the Crane-Yetter model in a wider context and allows the exploration of new models, as well as a more thorough understanding of the Crane-Yetter model itself.
There is interest in four-dimensional TQFTs from solid-state physics, where they allow the study of topological insulators, for example in the framework of Walker and Wang \cite{WalkerWang}, which is expected to be the Hamiltonian formulation of the Crane-Yetter TQFT.
The Crane-Yetter model is also the starting point for constructing spin foam models of quantum gravity \cite{BarrettCrane:19983296}.
Therefore the main motivation for this paper is to provide a firmer and more unified basis for a variety of physical models.

A state sum model is a discretised path integral formulation for a lattice theory.
In order to calculate the transition amplitude from one lattice state to another (possibly on a different lattice), a cobordism, or spacetime, from the initial to the final lattice is discretised using a triangulation or a cell complex.
Then the amplitude is the sum of a weight function over states on the discretised cobordism.
A state is typically a labelling of the elements of the discretisation with some algebraic data, for example objects and morphisms in a certain category.

In a topological state sum model, the sum over all states is independent of the particular discretisation chosen, and thus gives rise to a TQFT.
The weight function corresponds to an action functional and is calculated locally, for example per simplex if the discretisation is a triangulation.
This property is motivated by the physical assumption of the action being local,
and is expected to have the far-reaching mathematical consequence that the resulting TQFT is `fully extendable',
which means that it is well-defined on manifolds with corners of all dimensions down to zero.

Topological state sum models are an approach for quantum gravity.
The Turaev-Viro state sum is an excellent model of three-dimensional Euclidean quantum gravity (\cite[Section V.B]{Barrett1995QGTQFT} and \cite{Barrett:200397-Geometrical-measurements}).
As Witten famously remarks \cite[Section 3]{Witten:1989113-2+1gravity},
one would expect any manifestly diffeomorphism-covariant theory to give rise to a topological quantum theory.
So far, no topological state sum has modelled four-dimensional quantum gravity in a satisfactory way.
The most prominent topological state sum model remains the $U_qsl(2)$-Crane-Yetter state sum; however this is not considered a gravity model.
It was shown to reduce to the signature \cite{CraneYetterKauffman:1997177} and the Reshetikhin-Turaev theory on the boundary \cite{Barrett:2007-Observables}.
As a consequence of this, the state spaces attached to the boundary manifolds are only one-dimensional,
whereas in a gravity theory one would expect a large state space containing many graviton modes.
The more general framework developed here suggests some different Crane-Yetter type models that may be related to approaches such as teleparallel gravity \cite{Baezderek:teleparallel}.

\subsection{The Crane-Yetter invariant and its dichromatic generalisation}
\label{Crane Yetter}

In three-dimensional topology, the Turaev-Viro state sum invariant distinguishes even some homotopy-equivalent three-manifolds:
By \cite[Proposition 2]{Sokolov1997TVLensSpaces},
the lens spaces $L(7,1)$ and $L(7,2)$,
which are homotopy equivalent, but not homeomorphic,
have different values for the Turaev-Viro invariant.
However the Crane-Yetter invariant of four-manifolds for \emph{modular} categories,
as it was originally defined,
is just a function of the signature and the Euler characteristic of the manifold
\cite[Proposition 6.2]{CraneYetterKauffman:1997177}.

A closer look at the construction reveals a possible explanation why this is the case.
By the Morse theorem, smooth manifolds admit handle decompositions.
(Additionally, there is a canonical handle decomposition determined by any triangulation, by thickening the dual complex.)
Different handle decompositions of the same manifold can be related by a sequence of handle slides and cancellations.
Thus, one can construct a manifold invariant by assigning numbers to handle decompositions;
if the numbers do not change under the handle moves, they define an invariant.

Handle decompositions can be described by Kirby diagrams.
These are framed links where the components of the link represent the 1- and 2-handles.
For the modular Crane-Yetter invariant, the components of the link are each labelled by the Kirby colour of the ribbon fusion category $\mathcal{C}$ that determines the invariant.
By the universal property of the tangle category \cite{Shum:1994TortileTensorCats}, this can be interpreted as diagrammatic calculus in $\mathcal{C}$.
Evaluating the diagram and multiplying by a normalisation gives the invariant.

Since the 2-handles are treated in the same way as the 1-handles,
there is a redundancy in the construction of the modular Crane-Yetter invariant:
it does not change if all 1-handles are replaced by 2-handles in the link diagram.
But such a replacement radically changes the topology of the manifold and ensures, for example,
that every manifold has the same modular Crane-Yetter invariant as a simply-connected one.
Consequently, the invariant cannot even detect the first homology.

The solution is to define invariants that label the 1- and 2-handles with different objects in the category.
Petit's ``dichromatic invariant'' \cite{Petit:dichromatic} does exactly this:
in addition to the ribbon fusion category,
one also chooses a full fusion subcategory and labels the 2-handles with the Kirby colour of the subcategory.
Whether this change actually improves the invariant remained unstudied at the time.
It will be shown in Section \ref{non-simply-connected} that it does indeed lead to a stronger invariant that is sensitive to the fundamental group and can thus distinguish manifolds with the same signature and Euler characteristic.
Now one can indeed pinpoint the improvement of the invariant as due to the differing labels on 1-handles and 2-handles.
As a bonus, the general Crane-Yetter invariant is recovered as a special case of the dichromatic invariant.
Previously, no description of it in terms of Kirby calculus was known for nonmodular ribbon categories.

A generalisation of the dichromatic invariant is presented here and translated into a state sum model.
Instead of a ribbon fusion subcategory,
the generalisation is to use a pivotal functor from a spherical fusion category to a ribbon fusion category.
The 1-handles are still labelled with the Kirby colour of the target category,
but the 2-handles are labelled with the Kirby colour of the source category,
with the functor applied to it.
\subsection{Outline}
\begin{dichroarticle}
In Section \ref{preliminaries}, the common definitions such as spherical and ribbon fusion categories and their graphical calculus are recalled.
\end{dichroarticle}
\begin{dichrothesis}
In Section \ref{preliminaries}, the graphical calculus of spherical and ribbon fusion categories is recalled.
\end{dichrothesis}
Various notational conventions are established.

In Section \ref{section invariant}, the \emph{sliding lemma} from spherical and ribbon fusion categories is generalised.
The original lemma allows for sliding the identity morphism of any object over an encirclement by the Kirby colour of the category.
The generalised lemma generalises this to an encirclement by the image of a Kirby colour under a pivotal functor.
This generalisation will be a key step in the proof of invariance (Section \ref{proof of invariance}) of the \emph{generalised dichromatic invariant} (Definition \ref{definvariant}) of smooth, oriented, closed four-manifolds.
The section concludes with some general properties of the invariant and a motivating special case, Petit's dichromatic invariant (Example \ref{dichromatic}).

Many functors lead to the same invariant, and a general situation in which this is the case is presented in Section \ref{sec:simplification}.
This often leads to a simplification of the invariant, especially when the functor and both categories are unitary, or when the target category is modularisable.

If the target category of the functor is modularisable, which is often the case, the generalised invariant can also be cast in the form of a state sum.
In Section \ref{SSM}, this state sum formula \eqref{ssm formula} is derived using the chain mail technique.

Section \ref{examples invariant} is a non-exhaustive survey of several different examples of the generalised dichromatic invariant.
The Crane-Yetter state sum is recovered as a special case, both for modular and nonmodular ribbon fusion categories.
For the nonmodular Crane-Yetter invariant, a chain mail construction was not previously known.
A further special case is Dijkgraaf-Witten theory without a cocycle,
implying that the invariant can be sensitive to the fundamental group.
The Dijkgraaf-Witten example is then generalised to group homomorphisms.

There is a discussion in Section \ref{other-models} of how the present framework could connect to Walker-Wang models and state sum models used in the study of quantum gravity such as spin foam models.
Relations to Cartan geometry and teleparallelism are discussed as well.

Finally, a handy overview of the different known special cases of the generalised dichromatic invariant is given as a table in Section \ref{outlook}, together with some comments on the results.

\section{Preliminaries}
\label{preliminaries}
\subsection{Monoidal categories with additional structure}

In mathematical physics, one encounters a multitude of linear monoidal categories with additional structure and functors preserving this structure.
Usually, the category $\Vect$ of finite dimensional vector spaces over $\Co$ serves as a trivial example for these.
The additional structures often arise as special cases of higher categorical structures,
for example, monoidal categories are bicategories with one object and braided categories are in some sense tricategories with one 1-morphism.
This beautiful motivation is explained more closely in the literature, e.g. \cite[section B.3]{Schommer-Pries:PhD}.
Here the definitions are given in a closely related manner by discussing their suitability for graphical calculus.
Monoidal categories are needed for a graphical calculus of one-dimensional ribbon tangles in two dimensions; similarly one needs the braided structure for evaluating tangle diagrams in three dimensions.
An overview of most commonly used definitions of monoidal categories with additional structure,
together with their graphical calculus,
can be found in \cite{Selinger:graphical}.

\subsubsection{Semisimple and linear categories}

\begin{defn}
	A $\Co$-\textbf{linear category} is a category enriched in $\Vect_\Co$.
	If not mentioned otherwise, all categories in this work are $\Co$-linear categories and all functors are \textbf{linear functors}, that is, functors in the enriched category.
	This implies that they are linear on the morphism spaces and preserve direct sums.
\end{defn}
\begin{defn}
	An object $X \in \ob \mathcal{C}$ is called \textbf{simple}
	if $\mathcal{C}(X,X) \cong \Co$.
\end{defn}
\begin{examples}
	\begin{itemize}
		\item In $\Vect$, $\Co$ is the only simple object up to isomorphism.
		\item In $\Rep(G)$, the representation category of a finite group $G$,
		the simple objects are the irreducible representations.
	\end{itemize}
\end{examples}
Note that simple objects are called scalar objects in \cite{Petit:dichromatic}.
\begin{defn}
	\label{def:semisimple}
	A linear category $\mathcal{C}$ is called \textbf{semisimple} if it has biproducts, idempotents split (i.e. it has subobjects) and there is a set of inequivalent simple objects $\Lambda_\mathcal{C}$ such that for each pair of objects $X$, $Y$, the map
	\[\Phi\colon \bigoplus_{Z\in\Lambda_\mathcal{C}} \mathcal C(X,Z)\otimes \mathcal C(Z,Y)\to \mathcal C(X,Y)\]
	obtained by composition and addition is an isomorphism.
	If the set $\Lambda_\mathcal{C}$ is finite, then the category is called \textbf{finitely semisimple}. 
\end{defn}
\begin{remark}
	The requirements of biproducts and subobjects in this definition are not very restrictive.
	According to the discussion in \cite{Mueger:FromSubfactorsI},
	any category that satisfies all of the conditions in the definition of a semisimple category except for the existence of biproducts and subobjects can be embedded as a full subcategory of a semisimple category.
\end{remark}
\begin{exam}
	For every finite group $G$, $\Rep(G)$ is finitely semisimple.
	The simple objects are the irreducible representations.
\end{exam}

\begin{lemma}
	Let $Z_1$ and $Z_2$ be two nonisomorphic simple objects.
	Then there are no nontrivial morphisms between them,
	i.e. ${\mathcal{C}(Z_1, Z_2) = 0}$.
\end{lemma}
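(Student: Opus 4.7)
The plan is to feed the defining isomorphism $\Phi$ of the semisimple structure back on itself: applied to the hom-space $\mathcal{C}(Z_1,Z_2)$, it expresses this space as a direct sum that already contains two copies of itself, which forces it to be zero.

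First I would reduce to the case $Z_1, Z_2 \in \Lambda_\mathcal{C}$. For any simple $X$, applying $\Phi$ to $X=Y$ gives
\[
\bigoplus_{Z\in\Lambda_\mathcal{C}} \mathcal{C}(X,Z)\otimes \mathcal{C}(Z,X)\;\xrightarrow{\cong}\;\mathcal{C}(X,X)\cong\mathbb{C}.
\]
So exactly one summand is nontrivial, giving morphisms $h\colon X\to Z$ and $g\colon Z\to X$ with $g\circ h = \mathrm{id}_X$. Since $Z$ is simple, $h\circ g = \lambda\,\mathrm{id}_Z$ for some $\lambda\in\mathbb{C}$, and postcomposing with $h$ gives $\lambda h = h$, forcing $\lambda=1$ and hence $X\cong Z$. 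Thus every simple is isomorphic to a (unique) element of $\Lambda_\mathcal{C}$, and we may assume $Z_1,Z_2\in\Lambda_\mathcal{C}$.

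Next I would apply $\Phi$ with $X=Z_1$, $Y=Z_2$:
\[
\bigoplus_{Z\in\Lambda_\mathcal{C}} \mathcal{C}(Z_1,Z)\otimes \mathcal{C}(Z,Z_2)\;\xrightarrow{\cong}\;\mathcal{C}(Z_1,Z_2).
\]
Isolate the two summands $Z=Z_1$ and $Z=Z_2$. Using $\mathcal{C}(Z_1,Z_1)=\mathbb{C}=\mathcal{C}(Z_2,Z_2)$, they both collapse to $\mathcal{C}(Z_1,Z_2)$, and the restriction of $\Phi$ to these two summands is the addition map $V\oplus V\to V$, $(h,g)\mapsto h+g$, where $V:=\mathcal{C}(Z_1,Z_2)$. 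Since $\Phi$ is an isomorphism, this restriction is in particular injective; but $(h,-h)\mapsto 0$ for every $h\in V$, so $V=0$.

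The one subtlety is the reduction step: one needs a Schur-type statement before the main argument, since $\Phi$ only talks about objects in $\Lambda_\mathcal{C}$. Fortunately this falls out of $\Phi$ itself applied to endomorphisms of a simple object, so no additional machinery is required. After that, the proof is an immediate consequence of the injectivity of $\Phi$ combined with the identification of the $Z_1$ and $Z_2$ summands with $\mathcal{C}(Z_1,Z_2)$.
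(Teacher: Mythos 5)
Your proof is correct and follows essentially the same route as the paper's: decompose $\mathcal{C}(Z_1,Z_2)$ via $\Phi$, observe that the $Z=Z_1$ and $Z=Z_2$ summands each contribute a copy of $\mathcal{C}(Z_1,Z_2)$, and conclude the space is zero (you via injectivity of the addition map, the paper via a dimension count — an immaterial difference). Your preliminary reduction showing every simple object is isomorphic to an element of $\Lambda_\mathcal{C}$ is a point the paper leaves implicit, and it is a worthwhile addition since $\Phi$ only indexes over $\Lambda_\mathcal{C}$.
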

\begin{proof}
	Decompose $\mathcal{C}(Z_1, Z_2)$ according to Definition \ref{def:semisimple}.
	Both $\mathcal{C}(Z_1, Z_2) \otimes \mathcal{C}(Z_2, Z_2)$ and $\mathcal{C}(Z_1, Z_1) \otimes \mathcal{C}(Z_1, Z_2)$ occur as summands.
	But since $\mathcal{C}(Z_1, Z_1) \cong \mathcal{C}(Z_2, Z_2) \cong \Co$,
	$\mathcal{C}(Z_1, Z_2) \otimes \Co^2$ is a subspace of $\mathcal{C}(Z_1, Z_2)$,
	which implies that $\mathcal{C}(Z_1, Z_2) \cong 0$.
\end{proof}
\begin{definition}
	For a simple object $Z$ and any object $X$ in a linear category, there is a bilinear pairing:
	\begin{align*}
		&(-,-) \colon \mathcal{C}(Z,X) \times \mathcal{C}(X,Z) \to \Co\\
		&(f,g) \cdot 1_Z = g \circ f
	\end{align*}
	The $-$ are placeholders.
\end{definition}
\begin{lemma}
	\label{lem:ss-pairing}
	In a semisimple category, the bilinear pairing is non-degenerate.
\end{lemma}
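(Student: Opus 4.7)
The plan is to exhibit, for every nonzero morphism in one factor of the pairing, a morphism in the other factor that pairs nontrivially with it. The key tool is the semisimple decomposition of $\mathcal{C}(X,X)$ from Definition \ref{def:semisimple} applied to the identity $1_X$, combined with the preceding lemma that kills all morphisms between nonisomorphic simples.

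First I would write, using the isomorphism $\Phi$ of Definition \ref{def:semisimple}, a decomposition
\[
1_X = \sum_{W \in \Lambda_\mathcal{C}} \sum_i \alpha_i^W \circ \beta_i^W,
\qquad \beta_i^W \in \mathcal{C}(X,W),\ \alpha_i^W \in \mathcal{C}(W,X).
\]
Fix a nonzero $f \in \mathcal{C}(Z,X)$. Composing with $1_X$ on the left gives $f = \sum_{W,i} \alpha_i^W \circ (\beta_i^W \circ f)$, and by the previous lemma every term with $W \not\cong Z$ vanishes, since $\beta_i^W \circ f \in \mathcal{C}(Z,W)=0$. Without loss of generality take $Z \in \Lambda_\mathcal{C}$; then each $\beta_i^Z \circ f \in \mathcal{C}(Z,Z)\cong\Co$ is a scalar $c_i \cdot 1_Z$, so that $f = \sum_i c_i\, \alpha_i^Z$. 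Since $f \neq 0$ and $\Phi$ is an isomorphism, at least one $c_i$ is nonzero, and the corresponding $g = \beta_i^Z$ satisfies $(f,g)\cdot 1_Z = g\circ f = c_i\cdot 1_Z \neq 0$.

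The argument for the other slot is strictly symmetric: given nonzero $g \in \mathcal{C}(X,Z)$, write $g = g \circ 1_X$, use the previous lemma to discard all summands with $W\not\cong Z$, so that $g\circ \alpha_i^Z = d_i \cdot 1_Z$ and $g = \sum_i d_i\, \beta_i^Z$; then some $d_i \neq 0$ and $f = \alpha_i^Z$ witnesses non-degeneracy in that variable.

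I do not expect any real obstacle: once the identity is expanded via $\Phi^{-1}$, the previous lemma immediately eliminates all off-diagonal contributions, and the simplicity of $Z$ turns the remaining compositions into scalars. The only mild care required is the case where $Z \notin \Lambda_\mathcal{C}$, which is handled by replacing $Z$ with its isomorphic representative in $\Lambda_\mathcal{C}$ and conjugating by the isomorphism; this does not affect the pairing up to a nonzero scalar.
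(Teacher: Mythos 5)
Your proof is correct and uses essentially the same argument as the paper: decompose $1_X$ via the semisimplicity isomorphism $\Phi$, kill all summands over simples not isomorphic to $Z$ using the preceding lemma, and read off non-degeneracy from the remaining $Z$-summand. The only cosmetic difference is that you exhibit an explicit dual partner for a nonzero morphism, whereas the paper argues contrapositively that a morphism annihilated by the pairing must vanish; the computational content is identical.
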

\begin{proof}
	Let $g\colon X \to Z$ such that all $f\colon Z \to X$ satisfy $g \circ f = 0$.
	Then decompose $1_X = \sum_{Z',i} \alpha_{Z'}^i \circ \alpha_{Z',i}$ according to Definition \ref{def:semisimple},
	which implies $g = g \circ 1_X = g \circ \sum_{Z',i} \alpha_{Z'}^i \circ \alpha_{Z',i}$.
	From the previous lemma we know that if $Z$ and $Z'$ are not isomorphic then $g \circ \alpha_{Z'}^i = 0$,
	therefore the sum reduces to $g \circ \sum_{i} \alpha_{Z}^i \circ \alpha_{Z,i}$.
	But $\alpha_{Z}^i\colon Z \to X$, so by assumption $g \circ \alpha_Z^i = 0$ and therefore $g = 0$.
	
	An analogous argument holds for $f$.
\end{proof}

\subsubsection{Monoidal categories, functors and natural transformations}

\begin{defn}
	A \textbf{monoidal category} consists of:
	\begin{itemize}
		\item A category $\mathcal{C}$,
		\item a functor ${- \otimes -\colon \mathcal{C} \times \mathcal{C} \to \mathcal{C}}$ called the \textbf{monoidal product},
		\item a unit object $I$ called the \textbf{monoidal identity},
		\item natural associativity isomorphisms $\alpha_{X,Y,Z}\colon (X \otimes Y) \otimes Z \to X \otimes (Y \otimes Z)$
		and natural unit isomorphisms $\lambda_X \colon I \otimes X \to X$
		and $\rho_X \colon X \otimes I \to X$
		subject to coherence conditions which can be found e.g. in \cite[Section 3.1]{Selinger:graphical}.
	\end{itemize}
	In a \textbf{strict monoidal category}, the coherence morphisms $\alpha$, $\lambda$ and $\rho$ are all identity morphisms.
\end{defn}
If a monoidal category is also linear,
$\otimes$ is assumed to be bilinear:
\begin{align}
	(f + g) \otimes h & = (-\otimes-)(f + g, h) = f \otimes h + g \otimes h\nonumber\\
	f \otimes (g + h) & = (-\otimes-)\mathrlap{(f, g + h)}\hphantom{(f + g, h)} = f \otimes g + f \otimes h
\end{align}

In the graphical calculus for monoidal categories,
morphisms $f\colon X \to Y$ are drawn as boxes and lines in the plane, from the bottom to the top:
\begin{align}
	1_X &=
	\tikzbo{
		\draw[diagram,directed] (0,-1) node[below] {$X$} -- (0,1);
	}
	&
	f & =
	\tikzb{
		\node[draw] (B) at (0,0) {$f$};
		\draw[diagram,directed2] (0,-1) node[below] {$X$} -- (B) -- (0,1) node[above] {$Y$};
	}
	&
	f_1 \otimes f_2 & =
	\tikzbo{
		\begin{scope}[xshift=-1cm]
			\foreach \i in {1,2} {
				\node[draw] (C\i) at (\i,0) {$f_\i$};
				\draw[diagram,directed2] (\i,-1) node[below] {$X_\i$} -- (C\i) -- (\i,1) node[above] {$Y_\i$};
			}
		\end{scope}
	}
\end{align}
The upward-pointing arrow on the lines is optional at this point but will be a useful device when duals are introduced.
The coherence morphisms are not shown in the diagrammatic calculus.
This is due to MacLane's famous coherence theorem which states that any composition of coherence morphisms between two given objects is unique \cite{MacLane}.
Hence there is no ambiguity in the way the coherence morphisms are inserted.
Also, the coherence theorem shows that every monoidal category is monoidally equivalent to a strict monoidal category.
Hence one can alternatively view the diagrammatic calculus as determining morphisms in the equivalent strict category.
Throughout the paper, monoidal categories (possibly with extra structure) will be indicated by the name of the mere category whenever standard notation for all the additional data is used,
and they will often be assumed to be strict.

\begin{defn}
	A \textbf{monoidal functor} is a tuple $(F, F^2, F^0)$, where
	\begin{itemize}
		\item $F\colon \mathcal{C} \to \mathcal{D}$ is a functor between monoidal categories,
		\item $F^2_{X,Y} \colon FX \otimes_{\mathcal D} FY \Rightarrow F(X \otimes_{\mathcal C} Y)$ is a natural isomorphism,
		\item $F^0\colon I_\mathcal{D} \to FI_\mathcal{C}$ is an isomorphism in $\mathcal{D}$.
	\end{itemize}
	$F^2$ and $F^0$ are required to commute with the coherence morphisms,
	see e.g. \cite[Section 3.1]{Selinger:graphical}.
	A \textbf{monoidal natural transformation} is a natural transformation that commutes with $F^0$ and $F^2$.
\end{defn}
Note that here $F^2$ and $F^0$ are assumed to be isomorphisms.
Such functors are also sometimes called ``strong monoidal''.

\subsubsection{Rigid and fusion categories}

\begin{defn}
	A \textbf{duality} is a quadruple $(X, Y, \ev\colon X \otimes Y \to I, \coev\colon I \to Y \otimes X)$ satisfying the ``snake identities'':
	\begin{align}
		(\ev \otimes 1_X) \circ (1_X \otimes \coev) &= 1_X\nonumber\\
		(1_Y \otimes \ev) \circ (\coev \otimes 1_Y) &= 1_Y\label{snake identities}
	\end{align}
	In this situation, $(X, \ev, \coev)$ is called the left dual of $Y$, and $(Y, \ev, \coev)$ the right dual of $X$.
	The morphisms $\ev$ and $\coev$ are called ``evaluation'' and ``coevaluation'', respectively.
	(In the context of adjunctions, they are also called ``unit'' and ``counit''.)
\end{defn}
\begin{defn}
	A monoidal category with left (right) duals for every object is called a \textbf{left (right) rigid category}.
	A \textbf{rigid}, or ``autonomous'' category is a category that is left rigid and right rigid, i.e., every object has a left and a right dual.
\end{defn}
\begin{defn}
	Finitely semisimple rigid categories with simple $I$ are known as \textbf{fusion categories}.
\end{defn}
In this work, each object $X$ in a rigid category will have a particular choice of duals.
The right dual is denoted $(X^*, \ev_X, \coev_X)$ and the left dual $(\prescript{*}{}{X}, \widetilde{\ev}_X, \widetilde{\coev}_X)$.
Pre- and postcomposing morphisms with $\ev$ and $\coev$ (resp. $\widetilde{\ev}$ and $\widetilde{\coev}$) defines right (resp. left) dual contravariant op-monoidal functors $-^*$ (resp. $\prescript{*}{}{-}$).
They are contravariant in the sense that source and target are switched,
and op-monoidal in the sense that the monoidal product is reversed via canonical isomorphisms
$\delta_{X,Y}\colon (X \otimes Y)^* \cong Y^* \otimes X^*$.

Applying a monoidal functor $\left(F, F^2, F^0\right)$ to the snake identities shows that dualities are preserved, i.e. that the following morphism is an evaluation:
\[FX \otimes FY \xrightarrow{F^2_{X,Y}} F\left(X \otimes Y\right) \xrightarrow{F\ev} FI_\mathcal{C} \xrightarrow{\left(F^0\right)^{-1}} I_\mathcal{D}\]
A similar statement holds for the coevaluation.
Proving this requires all the naturality axioms of a monoidal functor.

A standard result on dualities is that any two duals of a given object $X$ are canonically isomorphic.
Applying this to $F$ shows \cite{Pfeiffer} that there are canonical isomorphisms for the right duals
\begin{equation}
	u_X\colon F\left(X^*\right) \to (FX)^*
	\label{right dual canonical iso}
\end{equation}
determined by $F$.
These satisfy the defining equations
\begin{align}
	\ev_{FX} &= \left(F^0\right)^{-1}\circ F\ev_X \circ  F^2_{X,X^*} \circ \left(1\otimes u_X^{-1}\right)\\
	\coev_{FX} &= \left(u_X \otimes 1\right)\circ \left(F^2_{X^*,X}\right)^{-1} \circ F\coev_X \circ F^0
\end{align}

There are also separate canonical isomorphisms in a similar way for the left duals.

\subsubsection{Pivotal and spherical categories}

There exist rigid categories in which every left dual is also a right dual,
i.e. $X^* \cong {^*}\hspace{-1.4pt}X$.
Since there already exist canonical natural isomorphisms $l_X\colon X \xrightarrow{\cong} (\leftdual{X})^*$ and $\tilde{l}_X\colon X \xrightarrow{\cong} \leftdual{(X^*)}$
in any rigid category,
isomorphisms between left and right duals are equivalent to isomorphisms to the double dual, $X \cong X^{**}$.
Choosing such an isomorphism naturally and monoidally for each object leads to the following definition.
\begin{defn}
	A \textbf{pivotal category} is a right rigid category $\mathcal{C}$ (with chosen right duals) together with a monoidal natural isomorphism $i\colon 1_\mathcal{C} \to -^{**}$, the \textbf{pivotal structure}.
	They are also called ``sovereign'' categories.
\end{defn}
\begin{lem}
	A pivotal category is also left rigid, and thus rigid, with the following choice of left dual:
	\begin{align}
		{^*}\hspace{-1.4pt}X &\coloneqq X^*\\
		\widetilde{\ev}_X &\coloneqq \ev_{X^*} \circ \left( 1_{X^*} \otimes i_X \right)\\
		\widetilde{\coev}_X &\coloneqq \left( i_X^{-1} \otimes 1_{X^*} \right) \circ \coev_{X^*}
	\end{align}
\end{lem}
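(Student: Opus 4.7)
The plan is to verify the two snake identities for the proposed data, which is all that is required to show that $(X^*, \widetilde{\ev}_X, \widetilde{\coev}_X)$ is a left dual of $X$, and hence that $\mathcal{C}$ is left rigid.

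A type check comes first: since $i_X\colon X \to X^{**}$ and $X^{**} = (X^*)^*$ is the chosen right dual of $X^*$, the composite $\widetilde{\ev}_X$ has source $X^* \otimes X$ and target $I$, while $\widetilde{\coev}_X$ has source $I$ and target $X \otimes X^*$---the correct types for the evaluation and coevaluation of a left dual.

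For the snake identity $(\widetilde{\ev}_X \otimes 1_{X^*}) \circ (1_{X^*} \otimes \widetilde{\coev}_X) = 1_{X^*}$, I would substitute the definitions and use functoriality of $\otimes$ to expose the composition
\[
(\ev_{X^*} \otimes 1_{X^*}) \circ (1_{X^*} \otimes i_X \otimes 1_{X^*}) \circ (1_{X^*} \otimes i_X^{-1} \otimes 1_{X^*}) \circ (1_{X^*} \otimes \coev_{X^*}).
\]
The two middle factors cancel to the identity on $X^* \otimes X^{**} \otimes X^*$, leaving the standard snake identity for $X^{**}$ as the right dual of $X^*$, which evaluates to $1_{X^*}$.

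The other snake identity, $(1_X \otimes \widetilde{\ev}_X) \circ (\widetilde{\coev}_X \otimes 1_X) = 1_X$, requires slightly more bookkeeping, because after substitution the $i_X$ and $i_X^{-1}$ end up on non-adjacent strands and do not cancel directly. Using the interchange law I would rewrite the full composition in the form
\[
i_X^{-1} \circ (1_{X^{**}} \otimes \ev_{X^*}) \circ (\coev_{X^*} \otimes 1_{X^{**}}) \circ i_X,
\]
pushing $i_X^{-1}$ and $i_X$ out to become the outermost morphisms. The middle is exactly the snake identity for $X^{**}$ as the right dual of $X^*$, so it collapses to $1_{X^{**}}$, and what remains is $i_X^{-1} \circ i_X = 1_X$.

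The main (and only) obstacle is this interchange bookkeeping for the second identity. It is worth noting that the argument uses only that each $i_X$ is an isomorphism pointwise; neither naturality nor monoidality of the pivotal structure $i$ is needed for this lemma itself (those properties enter elsewhere, for instance to make $X \mapsto X^*$ into a left-dual functor compatible with tensor).
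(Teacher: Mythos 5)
Your proof is correct, and it is the standard verification: the paper states this lemma without proof, and the intended argument is exactly the one you give — reduce each snake identity for $(X^*,\widetilde{\ev}_X,\widetilde{\coev}_X)$ to the corresponding snake identity for the chosen right dual $(X^{**},\ev_{X^*},\coev_{X^*})$ of $X^*$, cancelling $i_X$ against $i_X^{-1}$ (adjacently in the first identity, after conjugating them to the outside via the interchange law in the second). Your closing observation is also accurate: only invertibility of the components $i_X$ is used here, while naturality and monoidality of $i$ are needed elsewhere (e.g.\ for the duality functors and the trace identities).
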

In a pivotal category,
evaluation and coevaluation morphisms are drawn as caps and cups.
The arrow in the diagram is an orientation for the line that points towards the dual object.
\begin{align}
	\ev_X &= \tikzbo{\dual{below}{X}{X^*}} &
	\coev_X &= \tikzbo{\begin{scope}[yscale=-1]\dual{above}{X^*}{X}\end{scope}} 
\end{align}
The arrow notation means that it is possible to regard the object $X$ as a label on the whole line (rather than one end of it).
The convention at the ends of the line is that an upward-pointing arrow indicates $X$ and a downward-pointing arrow $X^*$.

In this graphical calculus, the snake identities now become:
\newcommand{\snakeradius}{0.5}
\newcommand{\snakeid}[1]{
	\tikzbo{
		\begin{scope}[yscale=#1]
			\draw[directed2,thick] (-2*\snakeradius,-\snakeradius) -- (-2*\snakeradius,0) arc (180:0:\snakeradius cm) arc (180:360:\snakeradius cm) -- (2*\snakeradius,\snakeradius);
		\end{scope}
	}
	& =
	\tikzbo{
		\begin{scope}[yscale=#1]
			\draw[directed,thick] (0,-\snakeradius) -- (0,\snakeradius);
		\end{scope}
	}
}
\begin{align}
	\snakeid{1}
	&
	\snakeid{-1}
\end{align}
Indeed, every identity of strings that is true as an isotopy in the plane is true for morphisms in a pivotal category.
\begin{defn}
	Left and right traces $\tr_L, \tr_R\colon \mathcal{C}(X,X) \to \mathcal{C}(I,I) \cong \Co$ can be defined with a pivotal structure:
	\newcommand{\righttrace}{
		\node[draw] (B) at (0,0) {$f$};
		\draw[directed, thick] (B.north) arc (180:0:0.5cm) -- +($(B.south)-(B.north)$) arc (0:-180:0.5cm);
	}
	\begin{align}
		\tr_R(f) & \coloneqq \tikzb{\righttrace}
		&&= \ev_X \circ \left( f \otimes 1_{X^*} \right) \circ \widetilde{\coev}_X \nonumber\\
		&&&= \ev_X \circ \left( \left( f \circ i_X^{-1} \right) \otimes 1_{X^*} \right) \circ \coev_{X^*}
		\\
		\tr_L(f) & \coloneqq \tikzb{
			\begin{scope}[xscale=-1]
				\righttrace
			\end{scope}
		}
		&&= \widetilde{\ev}_X \circ \left( 1_{\prescript{*}{}{X}} \otimes f \right) \circ \coev_X \nonumber\\
		&&&= \ev_{\prescript{*}{}{X}} \circ \left( 1_{\prescript{*}{}{X}} \otimes \left( i_X \circ f \right) \right) \circ \coev_X
	\end{align}
\end{defn}
There are pivotal categories for which $\tr_R \neq \tr_L$ for some objects.
Spherical categories eliminate this discrepancy.
\begin{defn}
	A \textbf{spherical category} is a pivotal category with $\tr_R = \tr_L$ for every object.
	This trace will then simply be called $\tr$.
	The pivotal structure of a spherical category is also called a ``spherical structure''.
	The \textbf{dimension} of an object $X$ is defined as $\qdim{X} \coloneqq \tr\left(1_X\right)$.
	It is also called ``categorical'' dimension,
	or, for representations of Hopf algebras, ``quantum'' dimension.
\end{defn}
The diagram for the dimension of an object is a circle.
Note that because of sphericality, it is not necessary to specify a direction on the circle.
\newcommand{\dimcircle}[2][{}]{
	\node[draw,diagram,circle,label={[label distance=-3]above right:#2},#1] (B) at (0,0) {\phantom{Bla}};
}
\begin{equation}
	\qdim{X}
	= \tr (1_X) = \tikzb{\dimcircle{$X$}}
\end{equation}

Note that the dimension of a simple object is known to be nonzero in fusion categories \cite{OnFusionCategories}.
This follows from the facts that for a simple object $Z$ the spaces $\mathcal{C}(I,Z\otimes Z^*)$ and $\mathcal{C}(Z\otimes Z^*,I)$ have dimension $1$,
evaluations and coevaluation are non-zero elements of these spaces,
and Lemma \ref{lem:ss-pairing}.
\begin{rema}
	The name ``spherical'' arises from the fact that the diagram of a morphism can be embedded on the 2-sphere, and every isotopy on the sphere amounts to a relation in the category.
	The additional axiom of a spherical category corresponds to moving a strand ``around the back'' of the sphere.
	However, the spherical axiom implies further identities that don't come from isotopies on the sphere.
\end{rema}
\begin{defn}
	\label{def:sph-pair}
	Let $X$ and $Y$ be two arbitrary objects in a spherical fusion category.
	The \textbf{spherical pairing} of two morphisms $f\colon X \to Y$ and $g\colon Y \to X$ is defined as
	\begin{equation}
		\braket{f}{g} \coloneqq \tr(g \circ f) = \tr(f \circ g) \label{spherical pairing}
	\end{equation}
\end{defn}

\begin{lem}
	\label{lem:sph-pair}
	The spherical pairing on a spherical fusion category is nondegenerate.
\end{lem}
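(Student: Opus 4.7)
The plan is to reduce nondegeneracy of the spherical pairing on arbitrary hom-spaces to nondegeneracy of the simpler bilinear pairing on hom-spaces into and out of a simple object, which is already established in Lemma \ref{lem:ss-pairing}. Suppose $f\colon X\to Y$ is nonzero; I want to construct $g\colon Y\to X$ with $\tr(g\circ f)\neq 0$ (and similarly when the roles of $f$ and $g$ are reversed, by an identical argument).

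First I would apply semisimplicity to $Y$: using Definition \ref{def:semisimple}, write $1_Y=\sum_{Z',j}\beta_{Z'}^j\circ\beta_{Z',j}$, where each $\beta_{Z',j}\colon Y\to Z'$ and $\beta_{Z'}^j\colon Z'\to Y$ with $Z'$ ranging over simple objects. Composing with $f$ on the right gives $f=\sum_{Z',j}\beta_{Z'}^j\circ(\beta_{Z',j}\circ f)$. Since $f\neq 0$, there must exist some pair $(Z',j)$ for which the morphism $\beta_{Z',j}\circ f\colon X\to Z'$ is nonzero.

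Next I would invoke Lemma \ref{lem:ss-pairing} for the simple object $Z'$: the pairing $\mathcal{C}(X,Z')\times\mathcal{C}(Z',X)\to\mathbb{C}$ is nondegenerate, so there is $h\colon Z'\to X$ with $(\beta_{Z',j}\circ f)\circ h$ and $h\circ(\beta_{Z',j}\circ f)$ coming out to $\lambda\cdot 1_{Z'}$ for some $\lambda\neq 0$ (after rescaling). Now set $g\coloneqq h\circ\beta_{Z',j}\colon Y\to X$. Using the cyclic property of the trace in a spherical category, which is precisely the identity $\tr(ab)=\tr(ba)$ appearing in \eqref{spherical pairing}, I compute
\begin{equation}
\tr(g\circ f)=\tr(h\circ\beta_{Z',j}\circ f)=\tr(\beta_{Z',j}\circ f\circ h)=\lambda\cdot\tr(1_{Z'})=\lambda\cdot\qdim{Z'}.
\end{equation}
The final factor is nonzero by the remark immediately following the definition of the spherical dimension: simple objects in a fusion category have nonzero categorical dimension. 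Hence $\tr(g\circ f)\neq 0$, contradicting the assumption that $f$ pairs to zero with everything, so $f=0$.

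The only genuinely nontrivial input is the nonvanishing of $\qdim{Z'}$, which rests on the cited fact about fusion categories; everything else is bookkeeping with the semisimple decomposition of the identity and the cyclicity of the spherical trace. I do not expect any real obstacle, but the main thing to be careful about is distinguishing what is needed from the spherical structure (namely cyclicity of $\tr$) versus what is purely semisimple (the existence of $h$ pairing nontrivially with $\beta_{Z',j}\circ f$).
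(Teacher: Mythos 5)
Your proof is correct and rests on the same ingredients as the paper's: the semisimple decomposition of the identity, the nondegeneracy of the elementary pairing from Lemma \ref{lem:ss-pairing}, and the nonvanishing of $\qdim{Z'}$ for simple $Z'$. The paper decomposes both $f$ and $g$ and exhibits the spherical pairing as a $\qdim{Z}$-weighted sum of products of elementary pairings, whereas you produce an explicit dual witness $g = h\circ\beta_{Z',j}$ for a given nonzero $f$; these are routine repackagings of one another (note only that your parenthetical claim that $h\circ(\beta_{Z',j}\circ f)$ also equals $\lambda\cdot 1_{Z'}$ is a typing slip --- that composite is an endomorphism of $X$, not of $Z'$ --- but it is never used in your computation).
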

\begin{proof}
	With the notation from Definitions \ref{def:semisimple} and \ref{def:sph-pair},
	decompose
	$f=\sum_{Z,i} \beta_{Z}^i \circ \alpha_{Z,i}$ and $g = \sum_{Z',j} \delta_{Z'}^j \circ \gamma_{Z',j}$.
	Then
	\begin{align*}
		\braket{f}{g} &= \sum_{\mathclap{Z,i,Z',j}} \tr \left( \delta_{Z'}^j \circ \gamma_{Z',j} \circ \beta_Z^i \circ \alpha_{Z,i} \right)
		&& = \sum_{\mathclap{Z,i,Z',j}} \tr \left( \gamma_{Z',j} \circ \beta_Z^i \circ \alpha_{Z,i} \circ \delta_{Z'}^j \right) 
		\intertext{But $\gamma_{Z',j} \circ \beta_Z^i$ is a map from $Z$ to $Z'$, and so is non-zero only if $Z=Z'$.
		In this case, it is equal to $\left(\beta_Z^i, \gamma_{Z,j} \right) 1_Z$,
		thus the expression reduces to}
		&= \sum_{Z,i,j} \left( \beta_Z^i, \gamma_{Z,j} \right) \tr \left(\alpha_{Z,i} \circ \delta_Z^j \right)
		&& = \sum_{Z,i,j} \left( \beta_Z^i, \gamma_{Z,j} \right) \left( \delta_Z^j, \alpha_{Z,i} \right) \qdim Z
	\end{align*}
	The dimensions $\qdim{Z}$ of simple objects are nonzero,
	hence with Lemma \ref{lem:ss-pairing} this is non-degenerate.
\end{proof}

\begin{defn}
	A \textbf{pivotal functor} $F\colon \mathcal{C} \to \mathcal{D}$ is a strong monoidal functor preserving the pivotal structure
	(and thus the isomorphisms between left and right duals)
	up to canonical isomorphisms.
	More specifically, the following diagram must commute:
	\begin{equation}
		\begin{tikzcd}
			FX
			\arrow[r,"i_{FX}"]
			\arrow[d,"Fi_X", swap]
				& (FX)^{**}
				\arrow[d,"u_X^*"]\\
			F(X^{**})
			\arrow[r,"u_{X^*}"]
				& \left(F(X^*)\right){^*} \\
		\end{tikzcd}
		\label{pivotal functor}
	\end{equation}
	In this diagram, $u$ is the canonical isomorphism from \eqref{right dual canonical iso}.
\end{defn}
\begin{lem}
	Pivotal functors preserve traces and therefore dimensions and the spherical pairing.
	As elements of $\Co \cong \mathcal{C}\left(I_\mathcal{C}, I_\mathcal{C}\right) \cong \mathcal{D}\left(I_\mathcal{D}, I_\mathcal{D}\right)$,
	it follows that for any endomorphism $f\colon X \to X$ the following holds:
	\begin{equation}
		\tr(f) = \tr(Ff)
	\end{equation}
\end{lem}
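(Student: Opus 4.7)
My plan is to prove the trace-preservation identity $\tr(f) = \tr(Ff)$ first, and then deduce the two corollaries by functoriality. For the main identity I would apply $F$ to the graphical expression for $\tr_R(f) = \ev_X \circ (f \otimes 1_{X^*}) \circ (i_X^{-1} \otimes 1_{X^*}) \circ \coev_{X^*}$ coming from the formula for $\widetilde{\coev}_X$ established earlier, and then rewrite each piece in the target category $\mathcal{D}$.

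To do this I would insert $F^2$ and $(F^2)^{-1}$ at each tensor product, and use the two displayed equations characterising the canonical isomorphism $u_X\colon F(X^*) \to (FX)^*$ to convert $F\ev_X$ and $F\coev_{X^*}$ into $\ev_{FX}$ and $\coev_{(FX)^*}$ up to explicit $u$-factors and one $F^0$ at each end. The remaining factor $Fi_X^{-1}$ I would rewrite as $i_{FX}^{-1}$ using the pivotal functor square \eqref{pivotal functor}, which introduces precisely the $u$'s needed to cancel those left over from the previous step. After these cancellations, and using $F^0$ to identify $\mathcal{C}(I_\mathcal{C},I_\mathcal{C})$ with $\mathcal{D}(I_\mathcal{D},I_\mathcal{D})$, what remains is the formula for $\tr_R(Ff)$. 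The argument for $\tr_L$ is identical with the left-dual analogues of $u_X$; in a spherical category the two agree, giving $\tr(f) = \tr(Ff)$ on the nose.

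The two corollaries are then immediate. The dimension identity is
\begin{equation*}
	\qdim{FX} = \tr(1_{FX}) = \tr(F 1_X) = \tr(1_X) = \qdim{X},
\end{equation*}
using $F1_X = 1_{FX}$, and the spherical pairing identity is
\begin{equation*}
	\braket{Ff}{Fg} = \tr(Fg \circ Ff) = \tr(F(g \circ f)) = \tr(g \circ f) = \braket{f}{g},
\end{equation*}
by functoriality of $F$ and the trace identity just established.

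The main obstacle is the coherence bookkeeping: tracking $F^2$, $F^0$, and the two occurrences of $u$ (at $X$ and at $X^*$) through the multi-term composition defining $\tr_R(f)$. Graphically, however, the proof becomes the intuitive picture that the canonical isomorphisms $u$ and $u^{-1}$ can be slid around the closed loop traced out by $\tr_R(f)$ and cancelled against each other, with the pivotal functor square providing exactly the coherence needed to push $i_X$ past them.
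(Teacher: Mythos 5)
Your proposal is correct and follows essentially the same route as the paper's proof: apply $F$ to the formula $\tr_R(f) = \ev_X \circ \left(\left(f \circ i_X^{-1}\right) \otimes 1_{X^*}\right) \circ \coev_{X^*}$, insert the coherences $F^2$ and $F^0$, convert $F\ev_X$ and $F\coev_{X^*}$ via the defining equations of $u$, and use the pivotal square \eqref{pivotal functor} to trade $Fi_X^{-1}$ for $i_{FX}^{-1}$ so the $u$-factors cancel. The paper carries this out as a single explicit chain of equalities for the right trace only (which suffices), and likewise treats the dimension and pairing statements as immediate consequences.
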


	\begin{proof}
		Insert the isomorphism $\mathcal{C}(I_\mathcal{C}, I_\mathcal{C}) \xrightarrow{F} \mathcal{D}\left(FI_\mathcal{C}, FI_\mathcal{C}\right) \xrightarrow{F^0 \circ - \circ \left(F^0\right)^{-1}} \mathcal{D}\left(I_\mathcal{D}, I_\mathcal{D}\right)$ explicitly.
		It is now necessary to prove $(F\tr(f)) \circ F^0 = F^0 \circ \tr(Ff)$.
		\begin{align*}
			&\mathrel{\hphantom{=}} F\tr(f) \circ F^0 \\
			&= F\left( \ev_X \circ \left( \left( f \circ i_X^{-1} \right) \otimes 1_{X^*} \right) \circ \coev_{X^*} \right) \circ F^0 \\
			&= F \ev_X \circ F^2_{X,X^*} \circ \left( \left( Ff \circ Fi_X^{-1} \right) \otimes 1_{F\left(X^*\right)} \right) \circ \left(F^2_{X^{**},X^*}\right)^{-1} \circ F\coev_{X^*} \circ F^0\\
			&= F^0 \circ \ev_{FX} \circ \left( \left( Ff \circ Fi_X^{-1} \circ u_{X^*}^{-1} \right) \otimes u_X \right) \circ \left(F^2_{X^{**},X^*}\right)^{-1} \circ \coev_{F\left(X^*\right)}\\
			&= F^0 \circ \ev_{FX} \circ \left( \left( Ff \circ i_{FX}^{-1} \circ \left(u^*_X\right)^{-1} \right) \otimes u_X \right) \circ \coev_{F\left(X^*\right)}\\
			&= F^0 \circ \ev_{FX} \circ \left( \left( Ff \circ i_{FX}^{-1} \right) \otimes 1_{(FX)^*} \right) \circ \coev_{(FX)^*}\\
			& = F^0 \circ \tr(Ff)
		\end{align*}
	\end{proof}

\subsubsection{Braided, balanced and ribbon categories}

\begin{defn}
	A \textbf{braided monoidal category} (or simply ``braided category'') is a monoidal category $\mathcal{C}$ with a dinatural isomorphism $c$ (the ``braiding'') with components $c_{X,Y}\colon X \otimes Y \to Y \otimes X$ satisfying compatibility axioms with the monoidal product,
	called the \emph{braid axioms}, or hexagon identities:
	\begin{equation}
		\begin{tikzcd}[column sep=-0.6cm]
				& (X \otimes Y) \otimes Z
				\arrow{ld}[swap]{\alpha_{X,Y,Z}}
				\arrow{rd}{c_{X,Y} \otimes 1_Z}
			\\
			X \otimes (Y \otimes Z)
			\arrow{d}[swap]{c_{X,Y \otimes Z}}
				&
					& (Y \otimes X) \otimes Z
					\arrow{d}{\alpha_{Y,X,Z}}
			\\
			(Y \otimes Z) \otimes X
			\arrow{rd}[swap]{\alpha_{Y,Z,X}}
				&
					& Y \otimes (X \otimes Z)
					\arrow{ld}{1_Y \otimes c_{X,Z}}
			\\
				& Y \otimes (Z \otimes X)
		\end{tikzcd}
		\begin{tikzcd}[column sep=-0.6cm]
				& (X \otimes Y) \otimes Z
				\arrow{ld}[swap]{\alpha_{X,Y,Z}}
				\arrow{rd}{c^{-1}_{Y,X} \otimes 1_Z}
			\\
			X \otimes (Y \otimes Z)
			\arrow{d}[swap]{c^{-1}_{Y \otimes Z,X}}
				&
					& (Y \otimes X) \otimes Z
					\arrow{d}{\alpha_{Y,X,Z}}
			\\
			(Y \otimes Z) \otimes X
			\arrow{rd}[swap]{\alpha_{Y,Z,X}}
				&
					& Y \otimes (X \otimes Z)
					\arrow{ld}{1_Y \otimes c^{-1}_{Z,X}}
			\\
				& Y \otimes (Z \otimes X)
		\end{tikzcd}
		\label{eq:braid axioms}
	\end{equation}
\end{defn}
As the name suggests, the graphical calculus for braidings consists of strings which can cross each other:

\begin{align}
	c_{X,Y} &=
	\tikzbo{
		\draw[diagram,directed2] (1,-1) node[below] {$Y$} to[out=90,in=-90] (0,1);
		\draw[diagram,directed2] (0,-1) node[below] {$X$} to[out=90,in=-90] (1,1);
	}&
	c_{Y,X}^{-1} &=
	\tikzbo{
		\draw[diagram,directed2] (0,-1) node[below] {$X$} to[out=90,in=-90] (1,1);
		\draw[diagram,directed2] (1,-1) node[below] {$Y$} to[out=90,in=-90] (0,1);
	}
\end{align}

The coherence isomorphisms $\alpha$ are invisible in the graphical calculus.
Therefore, the braid axioms become
\newcommand{\stringdist}{0.7}
\begin{align}
	\tikzbo{
		\draw[thickdirected2,doubleknot] (1,-1) node[below] {$Y \otimes Z$} to[out=90,in=-90] +(-1,2);
		\draw[diagram,directed2] (0,-1) node[below] {$X$} to[out=90,in=-90] (1,1);
	}
	& =
	\tikzbo{
		\draw[diagram,directed] (\stringdist,-1) node[below] {$Y$} to[out=90,in=-90] +(-\stringdist,2);
		\draw[diagram,directed] (2*\stringdist,-1) node[below] {$Z$} to[out=90,in=-90] +(-\stringdist,2);
		\draw[diagram,directed] (0,-1) node[below] {$X$} to[out=90,in=-90] (2*\stringdist,1);
	}
	&
	\tikzbo{
		\draw[diagram,directed2] (0,-1) node[below] {$X$} to[out=90,in=-90] (1,1);
		\draw[thickdirected2,doubleknot] (1,-1) node[below] {$Y \otimes Z$} to[out=90,in=-90] +(-1,2);
	}
	& =
	\tikzbo{
		\draw[diagram,directed] (0,-1) node[below] {$X$} to[out=90,in=-90] (2*\stringdist,1);
		\draw[diagram,directed] (\stringdist,-1) node[below] {$Y$} to[out=90,in=-90] +(-\stringdist,2);
		\draw[diagram,directed] (2*\stringdist,-1) node[below] {$Z$} to[out=90,in=-90] +(-\stringdist,2);
	}
\end{align}

\begin{defn}
	A \textbf{balanced monoidal category} is a braided category $\mathcal{C}$ with a natural isomorphism $\theta\colon 1_\mathcal{C} \Rightarrow 1_\mathcal{C}$,
	the \textbf{twist},
	satisfying the \emph{balance equation}:
	\begin{equation}
		\theta_{X \otimes Y} = c_{Y,X} \circ c_{X,Y} \circ \left( \theta_Y \otimes \theta_X \right)\label{balanced}
	\end{equation}
\end{defn}
(This term should not be confused with the unrelated concept of a ``balanced category'',
where every morphism that is mono and epi is also an isomorphism.)
\begin{theorem}
	\label{thm:rigid balanced pivotal bijection}
	In a rigid, braided category,
	there exists a (noncanonical) bijection between twists satisfying the balance equation and pivotal structures.
	For a given pivotal structure,
	one possible balanced structure can be defined as:
	\begin{equation}
		\theta_X \coloneqq
		\tikzbo{
			\coordinate (R) at (1,0);
			\node[draw, inner sep=2pt] (M) at (0,0.9) {$i_X^{-1}$};
			\draw[diagram]
				   (M)
				-- (0,1.5);
			\draw[diagram,directed=0.95,looseness=2]
				(R)
				to[out=-90,in=-90] (M);
			\draw[diagram,looseness=2]
				(0,-1.5)
				node[below] {$X$}
				-- (0,-0.5)
				to[out=90,in=90] (R);
		}
		\label{eq:balanced from pivotal}
	\end{equation}
	For further details,
	consult e.g. \cite[Lemma 4.20]{Selinger:graphical},
	and the sources cited therein.
\end{theorem}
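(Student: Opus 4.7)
The plan is to establish the bijection in both directions: first verify that the formula \eqref{eq:balanced from pivotal} produces a genuine twist satisfying the balance equation when fed a pivotal structure, and then construct a reverse assignment from twists to pivotal structures and check that the two assignments are mutually inverse.

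For the forward direction, naturality of $\theta_X$ as defined follows from naturality of $i$ (hence of $i^{-1}$), naturality of the braiding, and the snake identities: given $f\colon X \to Y$, the box $f$ can be slid through the $i^{-1}$ box by naturality of the pivotal structure, and past the cup-cap pair via the usual duality arguments. The genuine content is the balance equation \eqref{balanced}. I would expand $\theta_{X \otimes Y}$ by invoking the monoidality axiom for the pivotal structure, which factors $i_{X \otimes Y}^{-1}$ through $i_X^{-1} \otimes i_Y^{-1}$ together with the canonical op-monoidal isomorphism $\delta$ on $(-)^*$. The remaining diagrammatic manipulation then consists of separating the single cap-cup pair wrapped around $X \otimes Y$ into independent cap-cup pairs wrapped around $X$ and around $Y$; this necessarily drags one strand around the other twice and so produces the double braiding $c_{Y,X} \circ c_{X,Y}$, with the tensor of individual twists $\theta_Y \otimes \theta_X$ appearing as the leftover factor after this reorganisation.

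For the reverse direction, given a twist $\theta$, I would define a candidate pivotal structure $i_X\colon X \to X^{**}$ as the morphism obtained by inserting $\theta_X$ on the input strand and closing it off with a cup from $I$ to $X^{**} \otimes X^*$ and a cap from $X \otimes X^*$ to $I$, configured so that the resulting diagram is isotopic to an inverse of \eqref{eq:balanced from pivotal}. Naturality of $i$ then follows from naturality of $\theta$ together with the snake identities, while monoidality of $i$ is precisely the content of the balance equation read backwards. Verifying that these two assignments compose to the identity in each direction is then a purely diagrammatic check using the snake identities to cancel the duality loops introduced in each construction.

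The main obstacle is the bookkeeping for the balance equation: one must carefully track how the monoidal coherence of the pivotal structure interacts with the op-monoidal structure $\delta$ on $(-)^*$, and then recognise that separating the single cap-cup pair wrapped around $X \otimes Y$ into pairs wrapped around $X$ and around $Y$ individually forces exactly two strand crossings that assemble into the double braiding in the order prescribed by \eqref{balanced}. The noncanonicity flagged in the statement reflects the fact that one could alternatively use $i_X$ in place of $i_X^{-1}$ in \eqref{eq:balanced from pivotal}, or reverse the orientation of the loop, producing the inverse twist; fixing a convention here fixes the bijection. For the full diagrammatic verification I would defer to \cite[Lemma 4.20]{Selinger:graphical}.
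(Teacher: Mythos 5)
The paper offers no proof of this theorem at all --- it defers entirely to \cite[Lemma 4.20]{Selinger:graphical} --- and your outline (naturality and the balance equation for the curl built from $i_X^{-1}$ in the forward direction; the inverse assignment sending a twist to a pivotal structure via the canonical braided isomorphism $X \to X^{**}$ composed with $\theta_X$, with monoidality of $i$ matching the balance equation; mutual inversion by the snake identities) is precisely the standard argument found in that reference, and is correct in outline. Since you defer the remaining diagrammatic bookkeeping to the same source the paper cites, your proposal is consistent with, and in fact more informative than, the paper's treatment.
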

There are other possibilities to construct a pivotal structure from a balanced structure,
but they will coincide in the case of the following definition.
\begin{defn}
	A \textbf{ribbon category} is a balanced monoidal, rigid category satisfying the \emph{ribbon equation}:
	\begin{equation}
		\theta_{X^*} = \theta_X^* \label{ribbon}
	\end{equation}
	Ribbon categories are also called ``tortile'' categories.
\end{defn}

The graphical representation of the twist is usually a ribbon that has been twisted by $2 \pi$.
The thickening to two-dimensional ribbons is meant to express the fact that the twist cannot be undone by an ambient isotopy in three-dimensional space.
In two-dimensional diagrams,
ribbons can still be drawn as lines
-- possibly with crossings --
when the blackboard framing is implicitly assumed.
After recognising that the pivotal structure is a coherence and can be omitted from \eqref{eq:balanced from pivotal},
the diagram for the twist becomes:
\begin{equation}
	\theta_X =
	\tikzbo{
		\coordinate (T) at (0.2,0);
		\coordinate (N) at (0,0.1);
		\draw[diagram,directed,looseness=2] (T) to[out=-90,in=-90] (N) -- (0,1);
		\draw[diagram,directed,looseness=2] (0,-1) node[below] {$X$} -- ($(0,0)-(N)$) to[out=90,in=90] (T);
	}
\end{equation}
The graphical representations of the balance equation and the ribbon equation are thus:
\begin{align}
	\tikzbo{
		\coordinate (T) at (0.4,0);
		\coordinate (N) at (0,0.2);
		\draw[thickdirected=0.7,doubleknot,looseness=2] (T) to[out=-90,in=-90] (N) -- (0,1);
		\draw[thickdirected,doubleknot,looseness=2] (0,-1) node[below] {$X \otimes Y$} -- ($(0,0)-(N)$) to[out=90,in=90] (T);
	}
	& =
	\tikzbo{
		\coordinate (T) at (0.2,0);
		\coordinate (N) at (0,0.1);
		\coordinate (C) at (0,0.3);
		\draw[diagram,directed=0.3] ($(1,0)+(C)$) to[out=90,in=-90] (0,1);
		\draw[diagram,directed=0.3] ($(C)$) to[out=90,in=-90] (1,1);
		\draw[diagram,directed=0.9] ($(1,-1)+2*(C)$) to[out=90,in=-90] ($(C)$);
		\draw[diagram,directed=0.9] ($(0,-1)+2*(C)$) to[out=90,in=-90] ($(1,0)+(C)$);
		\draw[diagram,looseness=2] ($(0,-1)+(T)+(C)$) to[out=-90,in=-90] ($(0,-1)+(N)+(C)$) -- ($(0,-1)+2*(C)$);
		\draw[diagram,looseness=2] (0,-1) node[below] {$X$} -- ($(0,-1)-(N)+(C)$) to[out=90,in=90] ($(0,-1)+(T)+(C)$);
		\draw[diagram,looseness=2] ($(1,-1)+(T)+(C)$) to[out=-90,in=-90] ($(1,-1)+(N)+(C)$) -- ($(1,-1)+2*(C)$);
		\draw[diagram,looseness=2] (1,-1) node[below] {$Y$} -- ($(1,-1)-(N)+(C)$) to[out=90,in=90] ($(1,-1)+(T)+(C)$);
	}
	&
	\tikzbo{
		\coordinate (T) at (0.2,0);
		\coordinate (N) at (0,0.1);
		\draw[diagram,directed,looseness=2] (0,1) -- (N) to[out=-90,in=-90] (T);
		\draw[diagram,directed,looseness=2] (T) to[out=90,in=90] ($(0,0)-(N)$) -- (0,-1) node[below] {$X^*$};
	}
	& =
	\tikzbo{
		\coordinate (T) at (0.2,0);
		\coordinate (N) at (0,0.1);
		\draw[diagram,directed,looseness=2] (T) to[out=-90,in=-90] (N) -- (0,0.5) arc (180:0:0.5cm) -- (1,-1) node[below] {$X^*$};
		\draw[diagram,directed,looseness=2] (-1,1) -- (-1,-0.5) arc (180:360:0.5cm) -- ($(0,0)-(N)$) to[out=90,in=90] (T);
	}
	= 
	\tikzbo{
		\coordinate (T) at (-0.2,0);
		\coordinate (N) at (0,0.1);
		\draw[diagram,directed,looseness=2] (T) to[out=90,in=90] ($(0,0)-(N)$) -- (0,-1) node[below] {$X^*$};
		\draw[diagram,directed,looseness=2] (0,1) -- (N) to[out=-90,in=-90] (T);
	}
\end{align}
The last equality introduced the graphical representation for $\theta_X^*$.
\begin{defn}
	\textbf{Ribbon fusion categories}
	are simply ribbon categories that are also fusion categories.
	They are also called ``premodular categories''.
\end{defn}
\begin{rema}
	Ribbon categories have a canonical pivotal structure that is spherical.
	The spherical condition is a consequence of \eqref{ribbon}.
	As a partial converse, the twist of a braided spherical category is ribbon structure if it is fusion.
	For more details see \cite[definition 2.29]{OnBraidedFusionCats} and the references therein.
\end{rema}

\subsubsection{Symmetric categories}
\begin{defn}
	A braided category is called \textbf{symmetric} iff $c_{X,Y} = c^{-1}_{Y,X}$.
	A symmetric category which is also fusion is called a \textbf{symmetric fusion category}.
\end{defn}
\begin{rema}
	As a consequence of \eqref{balanced}, a ribbon category is symmetric if the twist is trivial,
	although there exist symmetric ribbon categories with non-trivial twist.
\end{rema}
If the braiding is symmetric, over- and underbraiding are set equal in the diagrammatic calculus:
\begin{equation}
	c_{X,Y} =
	c_{Y,X}^{-1} =
	\tikzbo{
		\draw[thick,directed2] (1,-1) node[below] {$Y$} to[out=90,in=-90] (0,1);
		\draw[thick,directed2] (0,-1) node[below] {$X$} to[out=90,in=-90] (1,1);
	}
\end{equation}

\begin{thm}[After Deligne, \cite{Deligne:2002}]
	In a symmetric fusion category, dimensions of simple objects are integers.
	If the twist is trivial and all dimensions are positive,
	then there exists a (pivotal) fibre functor to vector spaces,
	and the symmetric fusion category is equivalent to the representations of the finite automorphism group of the fibre functor.
\end{thm}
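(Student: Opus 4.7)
The plan is to reduce all three claims to the super-Tannakian theorem of Deligne which is cited in the statement: every $\Co$-linear symmetric fusion category $\mathcal{C}$ admits a $\Co$-linear symmetric monoidal fibre functor $F\colon\mathcal{C}\to\mathcal{SV}$, where $\mathcal{SV}$ is the category of finite-dimensional super-vector spaces over $\Co$.

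Granting this, the integrality claim is immediate. The category $\mathcal{SV}$ is naturally pivotal (the Koszul sign convention), and $F$ is automatically pivotal with respect to this structure, so by the lemma on pivotal functors proved just above, $F$ preserves traces. For a simple object $X\in\mathcal{C}$ this forces $\qdim{X}=\qdim{FX}=\dim FX_{\bar 0}-\dim FX_{\bar 1}\in\mathbb{Z}$.

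For the existence of a genuine fibre functor under the stronger hypotheses, I would argue via the twist. The balanced structure on $\mathcal{SV}$ compatible with its symmetry sends $V=V_{\bar 0}\oplus V_{\bar 1}$ to the parity involution $1_{V_{\bar 0}}\oplus\left(-1_{V_{\bar 1}}\right)$. Since $F$ preserves twists and $\theta_X=1_X$ holds for every simple $X\in\mathcal{C}$, the odd part $FX_{\bar 1}$ must vanish. Hence $F$ factors through $\Vect\subset\mathcal{SV}$; the positivity of categorical dimensions then guarantees that $FX\neq 0$ on every nonzero $X$, so $F\colon\mathcal{C}\to\Vect$ is a genuine fibre functor.

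The final step is classical Tannakian reconstruction: the affine group scheme $G$ of monoidal natural automorphisms of $F$ represents $\mathcal{C}$, in the sense that $F$ lifts to a $\Co$-linear symmetric monoidal equivalence $\mathcal{C}\simeq\Rep(G)$. Finiteness of the set $\Lambda_\mathcal{C}$ of simple objects forces $\Rep(G)$ to be a finite tensor category, and over $\Co$ in characteristic zero this is equivalent to $G$ being an ordinary finite group. The one genuine obstacle in this program is Deligne's super-Tannakian theorem itself, whose proof is a delicate characteristic-zero argument involving Schur functors and the vanishing of sufficiently large antisymmetrisers in $\mathcal{C}$; it is quoted here as a black box from the cited reference, while the three steps above are essentially bookkeeping around it.
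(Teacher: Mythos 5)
First, a point of reference: the paper does not prove this statement — it is quoted from the literature (Deligne, and in the premodular setting Deligne/Doplicher--Roberts/M\"uger) — so your proposal has to stand on its own as a reduction to the cited super-Tannakian theorem. The reduction you set up is the standard one, and both the integrality step and the final Tannakian reconstruction are fine. The gap is in the middle step. The category $\mathcal{SV}$ of super vector spaces carries \emph{two} ribbon structures compatible with its symmetric braiding, corresponding under Theorem \ref{thm:rigid balanced pivotal bijection} to its two pivotal structures: one with trivial twist whose dimension is the super-dimension (possibly negative), and one whose twist is the parity involution and whose dimensions are the ordinary, positive ones. Your integrality step tacitly uses the first (you compute the trace as $\dim FX_{\bar 0}-\dim FX_{\bar 1}$), while your factoring step uses the second (twist $=$ parity), and these cannot both be "the" structure preserved by a single pivotal $F$. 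More seriously, Deligne's theorem only provides a \emph{symmetric monoidal} functor; the assertion that $F$ ``preserves twists'' when $\mathcal{SV}$ carries the parity twist is not automatic, and is in fact \emph{equivalent} to what you are trying to prove: it says precisely that $F\theta_X=1_{FX}$ coincides with the parity involution of $FX$, i.e.\ that $FX_{\bar 1}=0$. As written, the argument is circular. A symptom of this is that your positivity hypothesis is only used to conclude $FX\neq 0$, which holds for any faithful linear functor on a semisimple category — yet positivity is essential, since $\mathcal{SV}\simeq\Rep(\Z_2,-1)$ itself has trivial twist but a simple object of categorical dimension $-1$.

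The repair is short. By super-Tannakian reconstruction $\mathcal{C}\simeq\Rep(G,z)$ with $z$ central of order dividing $2$, and for simple $X$ the element $z$ acts on the irreducible representation $FX$ by a scalar $\pm1$ (Schur), so $FX$ is parity-homogeneous. Taking the trivial twist on both sides, $F$ is pivotal for the super-dimension pivotal structure on $\mathcal{SV}$, whence $\qdim{X}=\dim FX_{\bar 0}-\dim FX_{\bar 1}=\pm\dim FX$; positivity now forces the even case for every simple object, so $F$ lands in $\Vect\subset\mathcal{SV}$ and the remainder of your argument goes through unchanged.
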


\subsection{Diagrammatic calculus on spherical fusion categories}

\begin{defn}
	\label{fusion algebra}
	For a fusion category $\mathcal{C}$, let the \textbf{fusion algebra} $\Co\left[\mathcal{C}\right]$ be the complex algebra generated by its objects, modulo isomorphisms and the relations $X \oplus Y = X + Y$ and $X \otimes Y = XY$.	
\end{defn}
\begin{rema}
	If $\mathcal{C}$ is braided, $\Co\left[\mathcal{C}\right]$ is commutative.
\end{rema}

\begin{definition}
\label{def:colours}
By a handy generalisation of notation, closed loops involving only natural or extranatural transformations $\alpha$ can also be labelled with elements of the fusion algebra,
in this context called \textbf{colours},
instead of mere objects.
The evaluation of a diagram with a linear combination of objects is defined as the sum of the evaluations of the diagrams with the individual objects:
\tikzset{
	pics/trace/.style 2 args={code={
		\node[draw] (B) {#1};
		\draw[thick,directed] (B.north) arc [start angle=0,end angle=180,radius=0.5cm] node[near start,above] {#2} -- +($(B.south)-(B.north)$) arc [start angle=180,end angle=360,radius=0.5cm];
	}}
}
\begin{align}
	X &\coloneqq \sum_i \lambda_i X_i\\
	\tikzb{\pic {trace={$\alpha$}{$X$}}}
	&\coloneqq
	\sum_i \lambda_i \;
	\tikzb{\pic {trace={$\alpha_{X_i}$}{$X_i$}}}
	\label{eq:colours}
\end{align}
\end{definition}
Since braiding and twist are natural transformations, colours can be used in the diagrammatic calculus.

\begin{defn}
	The \textbf{Kirby colour} $\Omega_{\mathcal{C}} $ of a spherical fusion category $\mathcal{C}$ is defined as the sum over the simple objects in $\Lambda_\mathcal{C}$ weighted by their dimensions:
	\begin{align}
		\Omega_{\mathcal{C}} \coloneqq \sum_{X \in \Lambda_{\mathcal{C}}} \qdim{X} X
	\end{align}
	Its dimension $\qdim{\Omega_{\mathcal{C}}} = \sum_{X \in \Lambda_{\mathcal{C}}} \qdim{X}^2$ is known as the \textbf{global dimension} of the category.
	It is always positive, since the field $\Co$ has characteristic zero \cite{OnFusionCategories}.
\end{defn}

The following two lemmas are well-known, e.g., in \cite[Section 2]{CraneYetterKauffman:1997177}.
\begin{lem}[Schur's lemma]
	Any endomorphism $f\colon X \to X$ of a simple object with non-zero dimension satisfies:
	\begin{equation}
		f = 1_X \cdot \frac{\tr(f)}{\qdim{X}}
	\end{equation}
\end{lem}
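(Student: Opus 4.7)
The plan is very direct, exploiting the one-dimensionality of the endomorphism space of a simple object together with the linearity of the trace.

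First I would invoke the definition of simplicity: since $X$ is simple, $\mathcal{C}(X,X) \cong \mathbb{C}$, so this hom-space is spanned by $1_X$. Consequently any endomorphism $f\colon X \to X$ is a scalar multiple of the identity, i.e.\ there exists a unique $\lambda \in \mathbb{C}$ with
\[
f = \lambda \cdot 1_X.
\]

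Next I would apply the trace to both sides. Since the trace is linear in the morphism (it is built from composition with $\ev$, $\coev$ and tensoring with identities, all of which are linear operations), we obtain
\[
\tr(f) = \lambda \cdot \tr(1_X) = \lambda \cdot \qdim{X}.
\]
The hypothesis $\qdim{X} \neq 0$ then allows us to solve for the scalar $\lambda = \tr(f)/\qdim{X}$, and substituting back into the first displayed equation yields the claimed formula.

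There is essentially no obstacle here; the only subtlety is noticing that the statement implicitly requires $X$ to be simple (so that Schur's original statement applies) and nonzero-dimensional (so that we may divide by $\qdim{X}$), both of which are given in the hypotheses. The only other thing worth remarking is that for simple objects in a fusion category the nonvanishing of $\qdim{X}$ is automatic by the discussion preceding Definition \ref{def:sph-pair}, so in the fusion setting the dimension hypothesis is redundant; but the lemma as stated makes no such assumption on the ambient category and so the explicit nonvanishing hypothesis is needed.
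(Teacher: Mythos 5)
Your argument is correct and is essentially identical to the paper's own proof: both use simplicity to write $f = \lambda 1_X$ and then take the trace to identify $\lambda = \tr(f)/\qdim{X}$. Your extra remarks about linearity of the trace and the role of the nonvanishing-dimension hypothesis are accurate but not needed beyond what the paper already records.
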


	\begin{proof}
		Since $X$ is simple, $\mathcal{C}(X, X) \cong \Co$, so every endomorphism is a multiple of the identity.
		Taking the trace on both sides of the equation $f = \lambda 1_X$ yields the result.
	\end{proof}

\begin{lem}[Insertion lemma]
	\label{insertion lemma}
	For any object $X$ in a spherical fusion category,
	its identity morphism can be decomposed into a weighted sum of identities of simple objects $Z$:
	\newcommand{\idx}{\tikzb{\draw[diagram,directed] (0,0) -- node[right] (B) {$X$} (0,3);}}
	\newcommand{\inserted}[3]{
		\node[draw] (i) at (0,0.8) {#2};
		\node[draw] (i*) at (0,2.2) {#3};
		\draw[thick,directed] (0,0) -- node[right] {$X$} (i) -- node[right] (B) {$#1$} (i*) -- node[right] {$X$} (0,3);
	}
	\begin{align}
		\idx & = \sum_{Z \in \Lambda_\mathcal{C}} \sum_{\substack{\iota_{Z,i} \in \mathcal{C}(X,Z) \\ \braket{\iota^i_Z}{\iota_{Z,j}} = \delta_{i,j}}} \qdim Z \tikzb{\inserted{Z}{$\iota_{Z,i}$}{$\iota^i_Z$}} \label{insertion}
	\end{align}
	The $\iota_{Z,i}$ form a basis of $\mathcal{C}(X, Z)$ to which the $\iota^j_Z \in \mathcal{C}(Z, X)$ are the dual basis with respect to the spherical pairing $\braket{-}{-}$ defined in \eqref{spherical pairing}.
\end{lem}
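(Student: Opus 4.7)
The plan is to start from the semisimplicity expansion of $1_X$ and identify its coefficients via Schur's lemma applied together with the spherical pairing.

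By Definition \ref{def:semisimple} the composition-and-sum map
\[\bigoplus_{Z\in\Lambda_\mathcal{C}} \mathcal{C}(X,Z) \otimes \mathcal{C}(Z,X) \;\xrightarrow{\cong}\; \mathcal{C}(X,X)\]
is an isomorphism, so in any chosen bases $\{\iota_{Z,i}\}$ of $\mathcal{C}(X,Z)$ and $\{\iota^j_Z\}$ of $\mathcal{C}(Z,X)$ one can write $1_X = \sum_{Z,i,j} c^{Z,j}_i \cdot \iota^i_Z \circ \iota_{Z,j}$ for some scalars $c^{Z,j}_i \in \Co$. By Lemma \ref{lem:sph-pair} the spherical pairing is nondegenerate, so the bases can additionally be chosen mutually dual, i.e.\ $\braket{\iota^i_Z}{\iota_{Z,j}} = \delta_{i,j}$.

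The key auxiliary identity is that $\iota_{Z,j} \circ \iota^i_Z \in \mathcal{C}(Z,Z) \cong \Co$ is a scalar multiple of $1_Z$ by simplicity of $Z$, and Schur's lemma pins down the scalar via the trace:
\[\iota_{Z,j} \circ \iota^i_Z \;=\; \frac{\tr(\iota_{Z,j} \circ \iota^i_Z)}{\qdim Z}\, 1_Z \;=\; \frac{\braket{\iota^i_Z}{\iota_{Z,j}}}{\qdim Z}\, 1_Z \;=\; \frac{\delta_{i,j}}{\qdim Z}\, 1_Z,\]
which is well-defined since dimensions of simple objects in a fusion category are nonzero, as recalled just before the statement.

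To extract a particular coefficient $c^{Z',l}_k$, pre- and post-compose the expansion of $1_X$ by $\iota_{Z',k}$ and $\iota^l_{Z'}$ respectively. The left-hand side reduces to $\iota_{Z',k} \circ \iota^l_{Z'} = \frac{\delta_{k,l}}{\qdim{Z'}}\, 1_{Z'}$. On the right-hand side, every term with $Z \not\cong Z'$ vanishes because $\mathcal{C}(Z,Z') = 0$ for nonisomorphic simples, while the surviving $Z = Z'$ terms collapse via the Schur identity above to $\frac{c^{Z',l}_k}{\qdim{Z'}^2}\, 1_{Z'}$. Comparing the two sides yields $c^{Z',l}_k = \qdim{Z'} \cdot \delta_{k,l}$, which is precisely the claimed formula. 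The only real subtlety lies in the index bookkeeping and in remembering the ordering $(f\colon X\to Y,\, g\colon Y\to X)$ of the spherical pairing fixed by Definition \ref{def:sph-pair}; no deeper input than Schur's lemma and nondegeneracy of the pairing is required.
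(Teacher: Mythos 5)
Your proof is correct and takes essentially the same route as the paper's: start from the semisimplicity decomposition of $1_X$ and pin down the coefficients using the nondegenerate spherical pairing (the paper does this in one line by evaluating $\braket{\iota_{Z,j}}{1_X \circ \iota^k_Z}$, which is the same dual-basis test your Schur-lemma sandwich $\iota_{Z',k} \circ 1_X \circ \iota^l_{Z'}$ carries out explicitly). The only cosmetic slip is that the words ``pre-'' and ``post-compose'' are swapped relative to the composite you actually compute; the displayed formulas themselves are right.
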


	\begin{proof}
		The definition of semisimplicity \ref{def:semisimple} implies that for some $\beta_Z^i\in\mathcal C(Z,X)$,
		one can decompose:
		\begin{equation}
			1_X = \sum_{Z,i} \beta_Z^i \circ \iota_{Z,i}
		\end{equation}
		Inserting this equality into $\braket{\iota_{Z,j}}{1_X \circ \iota^k_Z}$ shows that
		$\beta_Z^i = \qdim{Z} \iota^i_Z$.
	\end{proof}
\begin{remark}
	The insertion lemma is a generalisation of the fact from linear algebra that any vector can be decomposed uniquely into a linear combination of basis vectors.

	Due to its similarity to \eqref{eq:colours},
	it is common to say that (the identity of)
	the Kirby colour $\Omega_\mathcal{C} = \sum_{Z \in \Lambda_\mathcal{C}} \qdim{Z} Z$ can always be inserted in $X$'s identity.
	This explains the particular name of the lemma.
	
\end{remark}

\subsubsection{Ribbon fusion categories}

This subsection introduces some notation and known lemmas in ribbon fusion categories.
These are also known as premodular categories.

\begin{defn}[Graphical calculus for links]
	\label{def:graphical calculus}
	Let $L$ be an oriented framed link with a partition of its components into $N$ sets.
	Choose a regular diagram of the link in the plane such that the blackboard framing from the diagram matches the original framing of the link.
	Given a labelling $(X_1, X_2,\dots{}, X_N)$ of the sets with colours from a ribbon fusion category $\mathcal{C}$,
	label the link components in each set with the colour of the set and interpret the diagram as a morphism in $\mathcal{C}$,
	in the following way:
	Insert identity morphisms for vertical lines,
	braidings for crossings, evaluations for maxima of lines and coevaluations for minima.
	They are composed and tensored according to the vertical and horizontal structure of the diagram.
	The whole procedure is explained rigorously in \cite{Shum:1994TortileTensorCats} and \cite{Selinger:graphical}.
	
	Since a link has no open ends, the resulting morphism will be an endomorphism of $I$,
	which is essentially a complex number.
	This number is denoted as $\left<L(X_1, X_2, \dots{}, X_N)\right>$
	and called the \textbf{evaluation} of the labelled link diagram
	(not to be confused with the evaluation morphisms $\ev_X$).
	A labelled link diagram will sometimes be used interchangeably with its evaluation.
\end{defn}

\begin{remarks}
	Note that the choice of diagram for the framed link doesn't matter as two diagrams only differ by isotopies and (second and third) Reidemeister moves,
	which amount to identities (e.g. naturality squares or axioms like the snake identity) in the category.
	
	It is necessary that $\mathcal{C}$ is ribbon since this ensures that the framing coefficients of the link are translated into twists of $\mathcal{C}$.
\end{remarks}

\begin{defn}
	\label{def:symmetric centre}
	An object $X$ is called \textbf{transparent} (or ``central'') in $\mathcal{C}$ if it braids trivially with any object $Y$ in $\mathcal{C}$,
	that is, $c_{Y,X} \circ c_{X,Y} = 1_{X \otimes Y}$.
	The graphical representation of this condition is found in \eqref{eq:transparent}.
	
	The full symmetric monoidal subcategory $\mathcal{C}' \subset \mathcal{C}$ with all transparent objects of $\mathcal{C}$ is called the \textbf{symmetric centre} (or ``centraliser'') of $\mathcal{C}$,
	as for example in \cite{Mueger:2003ModularCats} or \cite{OnBraidedFusionCats}.
	The set of equivalence classes of simple transparent objects in $\mathcal{C}$ is then $\Lambda_{\mathcal{C}'}$.
	Dotted lines represent transparent objects.
	\begin{equation}
		X \in \ob \mathcal{C}' \iff
		\tikzbo{
			\draw[diagram,directed=0.999] (1,-1) node[below] {$Y$} to[out=90,in=-90] (0,0);
			\draw[diagram,dotted] (1,0) to[out=90,in=-90] (0,1);
			\draw[diagram,dotted,directed=0.999] (0,-1) node[below] {$X$} to[out=90,in=-90] (1,0);
			\draw[diagram] (0,0) to[out=90,in=-90] (1,1);
		}
		=
		\tikzbo{
			\draw[diagram,directed] (1,-1) node[below] {$Y$} -- (1,1);
			\draw[diagram,dotted,directed] (0,-1) node[below] {$X$} -- (0,1);
		}
		\forall Y \in \ob \mathcal{C}
		\label{eq:transparent}
	\end{equation}
\end{defn}
\begin{defn}
	\label{transparent colour}
	Assume $X$ is a colour $X = \lambda_1 X_1 + \lambda_2 X_2 + \dots + \lambda_N X_N$ with all $X_i$ simple, further assume that $X_1$ to $X_k$ are transparent and $X_{k+1}$ to $X_N$ are not.
	Then define the \textbf{transparent colour} $X' \coloneqq \lambda_1 X_1 + \lambda_2 X_2 + \dots + \lambda_k X_k + 0 X_{k+1} + \dots + 0 X_N$.
\end{defn}
\begin{defn}
	The \textbf{transparent Kirby colour} is defined as follows.
	\begin{align}
		\Omega_{\mathcal{C}'} = \Omega_{C}' &= \sum_{X \in \Lambda_{\mathcal{C}'}} \qdim{X} X
		\intertext{In the same manner, the \textbf{transparent dimension} is defined:}
		\qdim{\Omega_{\mathcal{C}'}} =
		\tikzb{\dimcircle[dotted]{$\Omega_{\mathcal{C}'}$}}
		&= \sum_{X \in \Lambda_{\mathcal{C}'}} \qdim{X}^2
	\end{align}
\end{defn}
\begin{defn}
	A category is called \textbf{modular} if it has $\Lambda_{C'} = \{I\}$,
	i.e. the monoidal identity $I$ is the only transparent object.
\end{defn}
The transparent dimension of a modular category is therefore 1.
Note that the multifusion case, where $I$ is not a simple object, is excluded here.

\begin{rema}
	An object that is not transparent in $\mathcal{C}$ can still be transparent in a subcategory $\mathcal{B} \subset \mathcal{C}$.
\end{rema}

\subsubsection{Encirclement}

The technique of encirclement allows for many elegant and powerful calculations.
It is indispensable when defining invariants derived from ribbon fusion categories and Kirby diagrams.
Its power comes from the so-called \emph{killing property}.
This is also known as the Lickorish encircling lemma \cite{Lickorish:1993Skein}, see also \cite{Roberts:1995SkeinTheoryTV}.
It can be generalised from modular to ribbon fusion categories \cite[Lemma 1.4.2, in different notation]{Bruguieres:Modularisations}.
\begin{lem}[Killing property]
	\label{Killing property}
	In a ribbon fusion category, the following holds for any object:
	\tikzset{
		pics/killing/.style n args={3}{code={
			\coordinate (a) at (0,0.7);
			\draw[diagram,directed,#1] ($(0,0)-(a)$) node[right] {#3} -- (a);
			\draw[#2, thick] ($(0,0)-(a)$) arc[start angle=0, end angle=-70,radius=0.8cm];
			\draw[#2, thick] (a)           arc[start angle=0, end angle= 70,radius=0.8cm];
		}}
	}
	\begin{align}
		\tikzbo{
			\encircle{\pic {killing={{}}{{}}{$X$}};}
		}
		& =
		\tikzbo{\pic {killing={dotted}{dotted}{$X'$}};}
		\tikzbo{
			\dimcircle{$\Omega_\mathcal{C}$};
		}
	\end{align}
	Let in particular $X$ be simple.
	Then $X' = X$ if it is transparent, and 0 otherwise.
	In the latter case one says that $X$ is ``killed off''.
	
	Note that the orientation for the circle containing $\Omega_\mathcal{C}$ does not need to be specified since the colour is self-dual.
\end{lem}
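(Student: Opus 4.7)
The plan is to reduce to the case of $X$ simple by linearity, apply Schur's lemma to express the encirclement as a scalar multiple of the identity, and then compute that scalar separately in the transparent and non-transparent cases.

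First, both sides are $\Co$-linear in the colour $X$ by Definition \ref{def:colours}, so it suffices to verify the identity when $X = X_i$ is one of the simples in $\Lambda_{\mathcal{C}}$. The left-hand side then defines an endomorphism $\phi_{X_i}\colon X_i \to X_i$. By Schur's lemma there is a unique scalar $\lambda_{X_i}\in\Co$ with $\phi_{X_i} = \lambda_{X_i}\cdot 1_{X_i}$, and Schur's formula gives $\lambda_{X_i} = \tr(\phi_{X_i})/\qdim{X_i}$, which is the normalised Hopf link evaluation $\langle H(\Omega_{\mathcal{C}},X_i)\rangle/\qdim{X_i}$.

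If $X_i$ is transparent, Definition \ref{def:symmetric centre} gives $c_{Y,X_i}\circ c_{X_i,Y} = 1_{X_i\otimes Y}$ for every $Y\in\Lambda_{\mathcal{C}}$, and hence also for the colour $\Omega_{\mathcal{C}}$ by linearity. Diagrammatically this means the $\Omega_{\mathcal{C}}$-loop can be pulled off the $X_i$-strand, so the diagram factors into an unknotted loop labelled $\Omega_{\mathcal{C}}$ (which evaluates to $\qdim{\Omega_{\mathcal{C}}}$) disjoint from the strand $X_i$. This gives $\lambda_{X_i} = \qdim{\Omega_{\mathcal{C}}}$, matching the right-hand side since $X_i' = X_i$ in this case.

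The crux is the non-transparent case: for $X_i$ simple and not transparent, one must show $\lambda_{X_i} = 0$, which yields the right-hand side $X_i'\cdot\qdim{\Omega_{\mathcal{C}}} = 0$. The plan is to exploit naturality of the braiding to slide a test strand $Z$ through the $\Omega_{\mathcal{C}}$-encirclement, producing an identity relating $\phi_{Z\otimes X_i}$ to $\phi_{X_i}$ twisted by the double braid $c_{X_i,Z}\circ c_{Z,X_i}$. Decomposing $Z\otimes X_i$ into simples via the Insertion Lemma \ref{insertion lemma}, extracting Schur scalars on each simple summand, and summing over $Z\in\Lambda_{\mathcal{C}}$ weighted by $\qdim{Z}$ produces a linear equation whose coefficient involves the partial trace of $c_{X_i,Z}\circ c_{Z,X_i} - 1_{X_i\otimes Z}$. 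Since $X_i$ is not transparent, there exists some $Z$ on which this double braid is nontrivial, and the resulting equation forces $\lambda_{X_i} = 0$.

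The main obstacle is carrying out this last step cleanly: translating the ribbon sliding picture into an explicit algebraic identity that forces the scalar to vanish precisely when $X_i$ fails to be transparent. This is essentially the content of Brugières' Lemma 1.4.2 on premodular categories cited in the statement, on which one may rely directly, or which one may reprove by the above sliding-and-Schur strategy.
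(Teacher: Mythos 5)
The paper offers no proof of this lemma at all: it is stated as a known result with citations to Lickorish, Roberts, and Bruguières (Lemma 1.4.2), which is exactly where your proposal also lands for the one genuinely hard step (that the encirclement scalar vanishes on a non-transparent simple object). Your preliminary reductions — linearity in the colour, Schur's lemma, and unlinking the $\Omega_{\mathcal{C}}$-loop in the transparent case to produce the factor $\qdim{\Omega_{\mathcal{C}}}$ — are all correct and standard, so the proposal is essentially the same as the paper's treatment, namely a deferral to the cited literature for the crux.
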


\newcommand{\encircledstrands}[1]{
	\encircle{
		\foreach \x in {-0.6,#1,...,0.61}
		{
			\draw[diagram] (\x,-1.5) -- (\x,1.5);
		}
	}
	\coordinate (B) at (0,0);
}

\newcommand{\onlycutstrands}[1]{
	\coordinate (B) at (0,0);
	\foreach \m in {-1,1}
	{
		\foreach \x in {-0.6,#1,...,0.61}
		{
			\draw[thick] (\x,1.5*\m) -- (0,0.8*\m);
		}
	}
}
\newcommand{\cutstrands}[2]{
	\encircle{
		\onlycutstrands{#1}
		\draw[ultra thick, white] (0,-0.8) -- (0, 0.8);
		\draw[dotted, thick] (0,-0.8) node[right] {$\iota^i$} -- node[right] {#2} (0, 0.8) node[right] {$\iota_i$};
	}
}

The combination of the killing property \ref{Killing property} and the insertion lemma \ref{insertion lemma} gives the explicit morphism of an arbitrary object encircled with the Kirby colour.
\begin{lem}[Cutting strands] Let $X$ be an arbitrary object of a modular category $\mathcal{C}$.
	\label{cutting strands}
	Then:
	\newcommand{\diagheight}{1.5}
	\begin{align}
		\tikzbo{
			\encircle{
				\draw[diagram,directed] (0,-\diagheight) node[below] {$X$} -- (0,\diagheight);
			}
		}
		&=
		\sum_{Z \in \Lambda_\mathcal{C}}
		\sum_{\substack{\iota_i \in \mathcal{C}(X, Z) \\ \braket{\iota_i}{\iota^j} = \delta_{i,j}}} \qdim{Z}
		\tikzbo{
			\node[draw] (iota) at (0,0.8) {$\iota^i$};
			\node[draw] (iota*) at ($(0,0)-(iota)$) {$\iota_i$};
			\draw[thick,directed] (0,-\diagheight) node[below] {$X$} -- (iota*);
			\draw[thick,opdirected] (0,\diagheight)
				-- (iota);
			\encircle{
				\draw[diagram,directed] (iota*) -- node[right] {$Z$} (iota);
			}
		}
		&=
		\sum_{\mathclap{\substack{\iota_i \in \mathcal{C}(X, I) \\ \braket{\iota_i}{\iota^j} = \delta_{i,j}}}} \qdim{\Omega_\mathcal{C}}
		\tikzbo{
			\node[draw] (iota) at (0,0.8) {$\iota^i$};
			\node[draw] (iota*) at ($(0,0)-(iota)$) {$\iota_i$};
			\draw[thick,directed] (0,-\diagheight) node[below] {$X$} -- (iota*);
			\draw[thick,opdirected] (0,\diagheight)
				-- (iota);
		}
	\end{align}
	The last step uses the fact that in a modular category, $I$ is the only transparent object.
\end{lem}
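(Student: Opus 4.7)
The plan is to obtain the first equality by applying the insertion lemma (Lemma \ref{insertion lemma}) inside the encirclement, and the second equality by applying the killing property (Lemma \ref{Killing property}) together with the modularity hypothesis.

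First I would insert the identity $1_X$ at any point on the strand labelled $X$ that runs through the encirclement by $\Omega_\mathcal{C}$. By the insertion lemma,
\[
1_X \;=\; \sum_{Z\in\Lambda_\mathcal{C}}\;\sum_{\iota_{Z,i}}\; \qdim{Z}\;\iota^i_Z\circ \iota_{Z,i},
\]
where $\iota_{Z,i}$ ranges over a basis of $\mathcal{C}(X,Z)$ and $\iota^i_Z$ is the dual basis of $\mathcal{C}(Z,X)$ with respect to the spherical pairing. Substituting this decomposition into the strand and pulling the scalars and sums outside the diagram, the piece of strand that sits \emph{inside} the encirclement becomes a strand labelled by the simple object $Z$, with $\iota_{Z,i}$ appearing below the loop and $\iota^i_Z$ above it (these morphisms, being outside the loop, can be drawn outside the encirclement). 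This yields the first equality in the statement.

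Next I would apply the killing property to each summand. Because $Z$ is simple, Lemma \ref{Killing property} replaces the encirclement of the $Z$-strand by $\qdim{\Omega_\mathcal{C}}$ times the $Z$-strand drawn as a dotted (transparent) line if $Z$ is transparent, and by zero otherwise. Using the hypothesis that $\mathcal{C}$ is modular, the only simple transparent object is $I$, so every summand with $Z\neq I$ vanishes. For $Z=I$ the dotted loop is an empty strand and $\mathcal{C}(X,I)$ is dual to $\mathcal{C}(I,X)$ via the spherical pairing, so the remaining sum runs precisely over a basis $\iota_i\in\mathcal{C}(X,I)$ and its spherical dual $\iota^i\in\mathcal{C}(I,X)$, with an overall factor $\qdim{I}\cdot \qdim{\Omega_\mathcal{C}}=\qdim{\Omega_\mathcal{C}}$. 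This produces the second equality.

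There is no real obstacle; the only thing to check with some care is that the basis $\{\iota_{Z,i}\}$ of $\mathcal{C}(X,Z)$ and its spherical dual $\{\iota^i_Z\}\subset\mathcal{C}(Z,X)$ used by the insertion lemma are precisely the data appearing in the statement, and that the normalising factors $\qdim{Z}$ come out correctly. Both are immediate from Definition \ref{def:sph-pair} and Lemma \ref{insertion lemma}, so the lemma follows by stringing together the two previous lemmas.
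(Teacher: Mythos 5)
Your proposal is correct and follows exactly the route the paper intends: the text immediately preceding the lemma states that it is obtained by combining the insertion lemma \ref{insertion lemma} with the killing property \ref{Killing property}, and the modularity hypothesis is invoked only for the final reduction to $Z=I$, just as you do. The bookkeeping of the dual bases and the factor $\qdim{I}\cdot\qdim{\Omega_\mathcal{C}}=\qdim{\Omega_\mathcal{C}}$ is also as in the paper.
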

\begin{lem}[Cutting two strands] Let $Z_1, Z_2$ be simple objects of a modular category $\mathcal{C}$.
	Then as a special case of the previous lemma:
	\label{cutting two strands}
	
	\newcommand{\diagheight}{1}
	\begin{align}
		\tikzbo{
			\encircle{
				\foreach \x/\i in {-0.5/1,0.5/2}
				{
					\draw[diagram,directed] (\x,-\diagheight) node[below] {$Z_\i$} -- (\x,\diagheight);
				}
			}
		}
		= \delta_{Z_1^*,Z_2}\qdim{Z_1}^{-1} \qdim{\Omega_\mathcal{C}}
		\tikzbo{
			\coordinate (tl) at (-0.5,\diagheight);
			\draw[thick,opdirected,looseness=2] (tl) to[out=270,in=270] ($2*(tl -| 0,0)-(tl)$);
			\draw[thick,directed,looseness=2] ($2*(tl |- 0,0)-(tl)$) node[below] {$Z_1$} to[out=90,in=90] ($(0,0)-(tl)$) node[below] {$Z_2$};
		}
	\end{align}
	To see the prefactors, observe that $\mathcal{C}(Z_1 \otimes Z_2, I)\cong \mathcal{C}(Z_2, Z_1^*)$.
	This is isomorphic to $\Co$ if ${Z_1^* \cong Z_2}$, and 0 otherwise.
	If ${Z_1^* \cong Z_2}$, then $\mathcal{C}(Z_1 \otimes Z_2, I)$ is spanned by $\ev_{Z_1}$.
	Since $\widetilde\coev_{Z_1} \circ \ev_{Z_1} = \qdim{Z_1}$, the dual basis element must be $\widetilde\coev_{Z_1} \cdot \qdim{Z_1}^{-1}$.
\end{lem}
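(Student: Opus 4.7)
The plan is to reduce this directly to the single-strand "cutting strands" lemma (Lemma \ref{cutting strands}) applied to the composite object $X = Z_1 \otimes Z_2$, and then identify explicitly a basis and dual basis of $\mathcal{C}(Z_1 \otimes Z_2, I)$ with respect to the spherical pairing.

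First I would invoke Lemma \ref{cutting strands} with $X \coloneqq Z_1 \otimes Z_2$, interpreting the two parallel encircled strands as a single strand labelled by the tensor product. Using rigidity, $\mathcal{C}(Z_1 \otimes Z_2, I) \cong \mathcal{C}(Z_2, Z_1^*)$. Since $Z_1$ and $Z_2$ are simple and $Z_1^*$ is also simple (duals of simples are simple in a fusion category), this hom space is one-dimensional when $Z_1^* \cong Z_2$ and vanishes otherwise; this is the origin of the Kronecker delta $\delta_{Z_1^*, Z_2}$. When the hom space vanishes there are no basis vectors to sum over and the diagram is zero, as claimed.

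In the non-vanishing case, pick $\iota_1 \coloneqq \ev_{Z_1} \in \mathcal{C}(Z_1 \otimes Z_1^*, I)$ as the basis element (up to rescaling by a nonzero scalar, since the space is one-dimensional). The dual basis element $\iota^1 \in \mathcal{C}(I, Z_1 \otimes Z_1^*)$ with respect to the spherical pairing must be proportional to $\widetilde{\coev}_{Z_1}$, say $\iota^1 = \lambda \, \widetilde{\coev}_{Z_1}$. To fix $\lambda$, impose the duality condition
\begin{equation}
	\braket{\iota_1}{\iota^1} = \tr\bigl(\iota_1 \circ \iota^1\bigr) = \lambda \, \ev_{Z_1} \circ \widetilde{\coev}_{Z_1} = \lambda \, \qdim{Z_1} \stackrel{!}{=} 1,
\end{equation}
which gives $\lambda = \qdim{Z_1}^{-1}$. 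Here I used that the composite $\ev_{Z_1} \circ \widetilde{\coev}_{Z_1} \in \mathcal{C}(I,I)$ equals $\qdim{Z_1}$ by the definition of the dimension, and that the dimensions of simple objects in a fusion category are nonzero (so the inverse exists).

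Substituting this unique pair $(\iota_1, \iota^1) = (\ev_{Z_1}, \qdim{Z_1}^{-1} \widetilde{\coev}_{Z_1})$ into the right-hand side of Lemma \ref{cutting strands} yields exactly the factor $\qdim{\Omega_\mathcal{C}} \cdot \qdim{Z_1}^{-1}$ multiplying the diagram consisting of a $\widetilde{\coev}_{Z_1}$ at the bottom and $\ev_{Z_1}$ at the top, which is precisely the cup-cap picture on the right-hand side of the claimed equation. The main subtlety is just bookkeeping: being careful about orientations, about which dual coevaluation to choose (pivotal vs. right), and making sure the spherical pairing is evaluated with the trace in the correct order; none of these are real obstacles since in a spherical category the two traces agree and $\ev_{Z_1} \circ \widetilde{\coev}_{Z_1} = \qdim{Z_1}$ is unambiguous.
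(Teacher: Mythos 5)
Your proposal is correct and follows essentially the same route as the paper: specialise Lemma \ref{cutting strands} to $X = Z_1 \otimes Z_2$, note that $\mathcal{C}(Z_1 \otimes Z_2, I) \cong \mathcal{C}(Z_2, Z_1^*)$ is one-dimensional exactly when $Z_1^* \cong Z_2$, take $\ev_{Z_1}$ as basis and compute its dual under the spherical pairing to be $\qdim{Z_1}^{-1}\,\widetilde{\coev}_{Z_1}$ from $\ev_{Z_1} \circ \widetilde{\coev}_{Z_1} = \qdim{Z_1}$. This matches the justification given in the paper.
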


\subsection{4-Manifolds and Kirby calculus}
An extensive treatment of these topics is found in \cite{GompfStipsicz},
\cite{Akbulut:4manifolds} and \cite{Kirby:Topology4Manifolds}.
The essential definitions and facts are highlighted here.

\subsubsection{Handle decompositions}
Let $D^k$ denote the closed $k$-disk,
or $k$-ball.
The space $D^k \times D^{4-k}$, $k \in \{0,1,2,3,4\}$, is called a 4-dimensional $k$-handle.
All handles have the same underlying topological space, but they differ in the way they are attached to each other.
The boundary of a $k$-handle is $\partial \left(D^k \times D^{4-k}\right) = S^{k-1} \times D^{4-k} \cup D^k \times S^{3-k}$, where $S^{-1} = \emptyset$.
The first component of the boundary is called the \textbf{attaching boundary} or ``attaching region'',
the second component the \textbf{remaining boundary} or ``remaining region''.
Some examples are shown in Table \ref{table:handle boundaries}.
\begin{table}
	\begin{center}
		\begin{tabular}{llll}
			\toprule
			$k$
				& Space
					& Attaching boundary
						& Remaining boundary
			\\\midrule
			0
				& $D^0 \times D^4$
					& $\emptyset$
						& $S^3 \cong \R^3 \cup \{\infty\}$
			\\
			1
				& $D^1 \times D^3$
					& $S^0 \times D^3 \cong \{-1, 1\} \times D^3$
						& $D^1 \times S^2 \cong [-1, 1] \times S^2$
			\\
			2
				& $D^2 \times D^2$
					& $S^1 \times D^2$
						& $D^2 \times S^1$
			\\\bottomrule
		\end{tabular}
	\end{center}
	\caption{Some relevant special cases of 4-dimensional $k$-handles and their boundaries.}
	\label{table:handle boundaries}
\end{table}

Smooth manifolds admit handle decompositions.
A $k$-handle can be attached to a manifold with boundary by embedding its attaching region into the boundary of the manifold.
A $k$-handlebody is obtained by attaching a disjoint union of $k$-handles to a $k-1$ handlebody, and is thus a union of 0-, 1-, \dots{} and $k$-handles.
Note that 0-handles have no attaching region, and a 0-handlebody is just a disjoint union of 0-handles, which are $D^4$s.
Every $n$-manifold can be decomposed into handles, that is, it is diffeomorphic to an $n$-handlebody.

The handle decomposition is by no means unique.
Two handle decompositions of diffeomorphic manifolds are always related by ``handle moves'', which are either cancellations of a $k$- and a $(k+1)$-handle, or a slide of a $(k+m)$- over a $k$-handle.

For a connected manifold it is always possible to arrive at a handle decomposition with exactly one 0-handle by cancelling 0-1-handle pairs.
Similarly, for a closed connected $n$-manifold it is always possible to have exactly one $n$-handle by cancelling $(n-1)$-$n$-handle pairs.

\subsubsection{Kirby diagrams and dotted circle notation}
\label{Kirby diagrams}

\tikzmath{
	\lowknot = 0.3;
	\dknot   = 0.3;
	\width   = 0.4;
	\hddist  = 1.3;
}
\begin{figure}
	\centering
	\begin{subfigure}{0.48\textwidth}
		\centering
		\begin{tikzpicture}
			\pic at (-1,0) {sphere};
			\pic at ( 1,0) {sphere};
		\end{tikzpicture}
		\caption{A 1-handle is attached to a 0-handle by glueing the attaching boundary of the 1-handle ($\{-1, 1\} \times D^3$) to the boundary of the 0-handle ($S^3 \cong \R^3 \cup \{\infty\}$).}
	\end{subfigure}
	\quad
	\begin{subfigure}{0.48\textwidth}
		\centering
		\begin{tikzpicture}
			\pic (hd1) at (-\hddist,0) {sphere};
			\pic (hd2) at ( \hddist,0) {sphere};
			\draw[diagram]
				(      0, -\lowknot) to[out=180, in=180]
				(-\width,    \dknot);
			\draw[diagram]
				(hd1-r)              to[out=  0, in=180]
				( \width,    \dknot);
			\draw[diagram]
				(-\width,    \dknot) to[out=  0, in=180]
				(hd2-l);
			\draw[diagram]
				( \width,    \dknot) to[out=  0, in=  0]
				(      0, -\lowknot);
			\pic at (-\hddist,0) {sphere};
			\pic at ( \hddist,0) {sphere};
		\end{tikzpicture}
		\caption{A single 2-handle (possibly knotted or framed) cancels a 1-handle if it is not linked to any other handles.
		(The 1-handle may be linked to other handles.)}
		\label{subfig:cancelling 1-handle}
	\end{subfigure}
	\medskip
	\begin{subfigure}{0.48\textwidth}
		\centering
		\begin{tikzpicture}
			\pic[RP2/nolabels] {RP2};
		\end{tikzpicture}
		\caption{A Kirby diagram of a handle decomposition of $I \times \R P^3$,
		with a single 1-handle and a single 2-handle.
		To convert it into an Akbulut diagram,
		choose a cancelling 2-handle (represented by a dashed line).}
		\label{subfig: from Kirby to Akbulut}
	\end{subfigure}
	\quad
	\begin{subfigure}{0.48\textwidth}
		\centering
		\begin{tikzpicture}
			\pic {RP2Akbulut={}{}};
		\end{tikzpicture}
		\caption{In an Akbulut diagram, or special framed link,
		1-handles are represented by dotted circles.}
		\label{subfig:RP2Akbulut}
	\end{subfigure}
	\medskip
	\begin{subfigure}{0.48\textwidth}
		\centering
		\begin{tikzpicture}
			\pic {RP2};
		\end{tikzpicture}
		\caption{A Kirby diagram gives a presentation of the fundamental group.
		Here, $\pi_1\left(I \times \R P^3\right)$ is generated by $g$ and the relation $g^2 = 1$.}
		\label{subfig:Kirby example fundamental group}
	\end{subfigure}
	\quad
	\begin{subfigure}{0.48\textwidth}
		\centering
		\begin{tikzpicture}
			\pic[RP2Akbulut/yeslabels] {RP2Akbulut={$X$}{$Y$}};
		\end{tikzpicture}
		\caption{An oriented Akbulut diagram can be labelled with objects from a ribbon fusion category,
		and subsequently interpreted in its diagrammatic calculus.}
	\end{subfigure}
	\caption{Kirby diagrams and Akbulut diagrams}
	\label{fig:Kirby examples}
\end{figure}

For the 2-handlebody of a four-manifold, one can specify the handles and their attaching maps by identifying the boundary of the single 0-handle with $\R^3 \cup \infty$ and drawing pictures of the attaching regions of the 1- and 2-handles. This is explained in \cite[Section 5.1]{GompfStipsicz}.
An attachment of a 1-handle amounts to choosing two 3-balls $D^3 \times \{-1, 1\} \cong D^3 \sqcup D^3 \subset \R^3$,
which are identified by an orientation-reversing map.
A 2-handle attachment is an embedding of $D^2 \times S^1$,
which is, up to isotopy, a framed embedding of $S^1$, i.e. a framed knot.
When a part of the 2-handle is attached to a 1-handle,
the $S^1$ of the 2-handle will enter one of the 3-balls of the 1-handle and leave the other 3-ball with which the former has been identified.
The diagram of the attaching regions in $\R^3$ is called a \textbf{Kirby diagram}.
Some examples can be found in Section \ref{non-simply-connected}.

A theorem ensures that for a closed four-manifold $M$, specifying the 2-handle\-body of a handle decomposition determines $M$ up to diffeomorphism,
i.e. any way of adding the 3- and 4-handles will yield the same manifold.
Thus a closed manifold is specified uniquely (up to diffeomorphism) by its Kirby diagram.

The \textbf{dotted circle notation} for 1-handles developed by Akbulut is sometimes more convenient.
Instead of adding a 1-handle, one can add a cancelling 1-2-handle pair (as shown in Figures \ref{subfig:cancelling 1-handle} and \ref{subfig: from Kirby to Akbulut}) and,
after adding all further 2-handles, remove the cancelling 2-handle.
In the diagram, the step of adding the cancelling pair does not require any notation because it does not change the topology.
However one needs a notation to indicate how the cancelling 2-handle is removed \cite[Section 1.2]{Kirby:Topology4Manifolds}.
Recall that a 2-handle is attached by $D^2\times S^1\subset D^2\times D^2$ and so the remaining part of the boundary is  $S^1\times D^2$.
This thickened (0-framed) circle is sufficient to indicate the 2-handle and is included in the diagram to represent the 1-handle it cancels.
To distinguish the 1- and 2-handles,
dots are drawn on those circles representing 1-handles,
as in Figure \ref{subfig:RP2Akbulut}.

In Section \ref{proof of invariance},
it will be detailed which moves one can perform on handle decompositions without changing the diffeomorphism class of the manifold.
Further examples can be found in Section \ref{non-simply-connected}.

Note that the sublink consisting of only dotted circles is an unlinked union of 0-framed unknots,
but 2-handle circles can be linked with each other.
The 2-handle circles can also be linked with the dotted circles;
this happens whenever a 2-handle runs over a 1-handle.
Links of this type are called \textbf{special framed links}.

To produce an Akbulut picture from a Kirby picture \cite[Section 5.4]{GompfStipsicz},
take the two 3-balls of a 1-handle.
The cancelling 2-handle connects them with a framed interval,
or an embedding of $D^2 \times [-1,1]$, with the ends on the 3-balls.
Now instead of drawing the balls,
draw the dotted circle $S^1 \times \{0\} \subset D^2 \times [-1,1]$.
A 2-handle running over this 1-handle is then drawn as a continuous line going through the dotted circle.

\begin{definition}[Evaluation of Akbulut pictures]
	\label{def:eval of links}
	The dotted circle notation of a handle decomposition of a closed, oriented four-manifold will be important in the definition of the invariant.
	The dots specify a partition of the special framed link diagram $L$ in two sublinks,
	corresponding to the 1-handles and the 2-handles, respectively.
	After arbitrarily chosing orientations on each $S^1$,
	the two sublinks can be labelled with two colours $X$ and $Y$ of a ribbon fusion category.
	(The colours then need to be self-dual such that the chosen orientations don't matter.)
	Each dotted link component (1-handle) is labelled with the colour $X$ and each of the remaining components (2-handles) is labelled with $Y$,
	and the dots can then be removed.
	The labelled link is denoted $L(X,Y)$.
	As in Definition \ref{def:graphical calculus},
	the evaluation is then $\left<L(X,Y)\right>$.
\end{definition}

Note that the relation between the two graphical notations for 1-handles mimicks the diagrammatic representation of Lemma \ref{cutting strands}.
This will be exploited in Section \ref{sec:cutting strands},
where a definition of the invariant in terms of the Kirby diagram is given.

\subsubsection{The fundamental group}
\label{Kirby-calculus-fundamental-group}
A Kirby diagram for a manifold $M$ gives rise to a presentation of its fundamental group $\pi_1(M)$.
Each 1-handle is a generator, while the 2-handles are the relations.

More specifically, choosing a basepoint in the 0-handle and an arbitrary direction on each 1-handle,
there is a homotopy class of noncontractible curves going through a 1-handle once.
A 2-handle 
gives a way of contracting the $S^1$ on its own attaching region,
which is drawn in the Kirby diagram.
Thus the composition of the curves going through the 1-handles along which the 2-handle is attached can be equated with the contractible curve.

This can be visualised as follows.
Each 1-handle is associated to a generator.
One of its corresponding 3-balls is labelled with the generator and the other with its inverse,
thus fixing a direction on the 1-handle.
For every circle coming from a 2-handle,
choose an orientation and construct a word of generators by going once along the circle,
writing down the generator (or its inverse) when entering a 1-handle through a 3-ball.
(No action needs to be taken when leaving a ball.)
The resulting word is then a relation in the presentation of the fundamental group.
An example is given in Figure \ref{subfig:Kirby example fundamental group}.

\section{The generalised dichromatic invariant}
\label{section invariant}
\subsection{The generalised sliding property}

\begin{lem}
	\label{sliding property}
	In a ribbon fusion category $\mathcal{C}$,
	the \emph{sliding property} (in its original form due to Lickorish \cite{Lickorish:1993Skein}) holds:
	\begin{equation}
		\tikzbo{\pic {shortid=$X$};}
		\tikzbo{\pic {unthrough={$A$}{$\Omega_{\mathcal{C}}$}};} =
		\tikzbo{\pic {slid={$\Omega_{\mathcal{C}}$}{$X$}};}
	\end{equation}
\end{lem}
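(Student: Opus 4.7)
The plan is to prove the sliding property by combining the insertion lemma (Lemma \ref{insertion lemma}) with the killing property (Lemma \ref{Killing property}). The guiding idea is that the $\Omega_\mathcal{C}$-encirclement projects onto the symmetric centre, where braidings are trivial by Definition \ref{def:symmetric centre}, so an $X$-strand can be freely slid across the encircled region without affecting the evaluation.

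First I would apply the insertion lemma to the identity $1_X$, expanding it as $\sum_{Z,i}\qdim{Z}\,\iota^i_Z \circ \iota_{Z,i}$ with $Z$ ranging over $\Lambda_\mathcal{C}$ and $\iota_{Z,i}, \iota^i_Z$ dual bases with respect to the spherical pairing. By linearity of the graphical calculus, this reduces the statement to the case where the portion of the $X$-strand at the height of the $\Omega_\mathcal{C}$-loop is relabelled by a single simple object $Z$, sandwiched between the insertion morphisms.

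Next I would apply the killing property to the $\Omega_\mathcal{C}$-encirclement on both sides of the equation. The encirclement is replaced by $\qdim{\Omega_\mathcal{C}}$ times the projection onto the transparent colour of whatever it encloses. On any transparent simple summand that survives this projection, the braiding with the $Z$-strand is the identity by the defining property of the symmetric centre, so the $Z$-strand can be pulled through the encircled region without any extra morphism being introduced. Reassembling the insertion-lemma decomposition recovers $1_X$ outside the loop on both sides, establishing the equality.

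The main obstacle lies in matching the two transparent projections that result from the killing property: the projection onto the transparent part of the combined enclosed strand (on the side where $Z$ lies inside the loop) is not a priori the same as the transparent projection of $A$ tensored with $1_Z$. The point is that after the $Z$-strand is pulled out through the trivial braiding, the killed non-transparent summands drop out of the sum consistently on both sides, and the weighted sum over $Z, i$ with factors $\qdim{Z}$ reconstitutes exactly the morphism $1_X$ outside the encirclement rather than any twisted variant.
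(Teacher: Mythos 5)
Your core strategy --- use the killing property to replace the encircled strand by its transparent part, observe that transparent objects braid trivially so the $X$-strand slides across for free, then undo the killing --- is exactly the paper's proof, which consists of nothing more than two applications of Lemma \ref{Killing property} with the slide itself being an isotopy through the now-transparent (dotted) strand. Two remarks, though. First, the insertion-lemma decomposition of $1_X$ is superfluous: transparency (Definition \ref{def:symmetric centre}) says the double braiding with \emph{any} object $Y$ is the identity, so there is no need to reduce $X$ to simple summands; it is the encircled strand $A$, not $X$, that gets decomposed into transparent simples by the killing property.

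Second, and more importantly, the ``main obstacle'' you describe does not actually arise, and the resolution you offer would not work if it did. In the slid diagram the new $X$-loop is a $0$-framed push-off of the $\Omega_{\mathcal{C}}$-circle: it encircles the $A$-strand at a point adjacent to the circle but is unlinked from the circle itself, so the $\Omega_{\mathcal{C}}$-encirclement encloses only $A$ in \emph{both} diagrams, the two applications of the killing property produce literally the same transparent projection of $A$, and there is no ``combined enclosed strand'' to compare. Had the $X$-loop genuinely passed through the circle, the transparent part of $Z\otimes A\otimes Z^{*}$ would in general differ from $Z\otimes A'\otimes Z^{*}$ (for a non-transparent simple $A$ the latter vanishes, while the former can contain a copy of $I$ whenever $A$ is a summand of $Z^{*}\otimes Z$), so the assertion that ``the killed non-transparent summands drop out of the sum consistently on both sides'' is not an argument and is false as a general statement. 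The correct justification of the final step is simply the killing property applied in reverse to the $A$-strand of the slid diagram.
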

\begin{proof}
	\begin{equation*}
		\tikzbo{\pic {shortid=$X$};}
		\tikzbo{\pic {unthrough={$A$}{$\Omega_{\mathcal{C}}$}};} =
		\tikzbo{\pic {shortid=$X$};}
		\tikzbo{\pic[/slid/throughstyle/.append style={dotted}] {unthrough={$A$}{$\Omega_{\mathcal{C}}$}};} =
		\tikzbo{\pic[/slid/throughstyle/.append style={dotted}] {slid={$\Omega_{\mathcal{C}}$}{$X$}};} =
		\tikzbo{\pic {slid={$\Omega_{\mathcal{C}}$}{$X$}};}
	\end{equation*}
	The killing property \ref{Killing property} has been used twice.
\end{proof}

As the diagrams suggest, the sliding property will later ensure that the invariant doesn't change under handle slides.
To label 2-handles differently from 1-handles, it is necessary to generalise the sliding property of Lemma \ref{sliding property} to ensure invariance under the 2-2-handle slide.
The idea will be to label the 2-handles with $F\Omega_\mathcal{C}$, where $F$ is a suitable functor.
Then encirclements with $F\Omega_\mathcal{C}$ must also satisfy a sliding property.

Lemma \ref{insertion lemma}, which states that the Kirby colour can be inserted into the identity of any object, can be generalised.
\begin{lem}[Generalised insertion lemma]
	\label{generalised insertion lemma}
	Let $F\colon \mathcal{C} \to \mathcal{D}$ be a pivotal functor,
	and $X$ an object in $\mathcal{C}$.
	Then the identity of $FX$ decomposes over $F\Omega_{\mathcal{C}} = \bigoplus_X \qdim{X} FX$.
	In this situation, we say that $F\Omega_{\mathcal{C}}$ can be ``inserted'' into the identity of $FX$.
\end{lem}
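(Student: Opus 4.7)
The plan is to derive the generalised insertion as an immediate consequence of the original insertion lemma (Lemma \ref{insertion lemma}) applied to $X$ in $\mathcal{C}$, combined with the functoriality and pivotality of $F$.

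First, invoke Lemma \ref{insertion lemma} to obtain in $\mathcal{C}$ the decomposition
\[
1_X \;=\; \sum_{Z \in \Lambda_\mathcal{C}} \sum_i \qdim{Z}\, \iota^i_Z \circ \iota_{Z,i},
\]
where $\{\iota_{Z,i}\}$ and $\{\iota^i_Z\}$ are dual bases of $\mathcal{C}(X,Z)$ and $\mathcal{C}(Z,X)$ with respect to the spherical pairing. Then apply $F$ to both sides. Since $F$ is by hypothesis a linear functor, it is additive and preserves scalar multiplication, composition and identities; hence
\[
1_{FX} \;=\; F(1_X) \;=\; \sum_{Z \in \Lambda_\mathcal{C}} \sum_i \qdim{Z}\, F(\iota^i_Z) \circ F(\iota_{Z,i}).
\]
Every term on the right factors as $FX \xrightarrow{F\iota_{Z,i}} FZ \xrightarrow{F\iota^i_Z} FX$, so the sum is precisely a decomposition of $1_{FX}$ through the simple summands $FZ$ of $F\Omega_\mathcal{C} = \bigoplus_{Z \in \Lambda_\mathcal{C}} \qdim{Z}\, FZ$, weighted by the expected coefficients.

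Finally, to check that these weights are the "right" ones for $F\Omega_\mathcal{C}$, observe that the dimensions $\qdim{Z}$ computed in $\mathcal{C}$ coincide with $\qdim{FZ}$ computed in $\mathcal{D}$, by the preservation of traces under pivotal functors established earlier in the preliminaries. Therefore the decomposition is genuinely over the image colour $F\Omega_\mathcal{C}$, justifying the terminology that $F\Omega_\mathcal{C}$ can be inserted into $1_{FX}$.

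There is essentially no obstacle: the substantive content is entirely contained in the scalar insertion lemma, and the generalisation is a functoriality statement. The only subtlety worth flagging explicitly in the write-up is that, in contrast with Lemma \ref{insertion lemma}, the morphisms $F(\iota_{Z,i})$ need not form a basis of $\mathcal{D}(FX,FZ)$, nor need $FZ$ be simple in $\mathcal{D}$; the lemma asserts only the \emph{existence} of a decomposition of $1_{FX}$ routed through $F\Omega_\mathcal{C}$, not its uniqueness.
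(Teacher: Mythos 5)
Your proof is correct and follows essentially the same route as the paper: apply $F$ to both sides of the insertion lemma \eqref{insertion} and use that pivotal functors preserve traces, hence dimensions and dual bases. Your closing caveat (that $FZ$ need not be simple and the $F\iota_{Z,i}$ need not form a basis in $\mathcal{D}$, only that a decomposition through $F\Omega_\mathcal{C}$ exists) is a worthwhile clarification of what the lemma does and does not assert.
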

\begin{proof}
	Apply $F$ to both sides of \eqref{insertion} in the insertion lemma.
	Since pivotal functors preserve traces, they also preserve (categorical) dimensions and dual bases.
\end{proof}

The sliding property can also be generalised in a similar way.
\begin{lem}[Generalised sliding property]
	\label{generalised sliding property}
	Let $F\colon \mathcal{C} \to \mathcal{D}$ be a pivotal functor from a spherical fusion category to a ribbon fusion category.
	Then the following generalisation of the sliding property holds for all objects $X \in \ob \mathcal{C}, A \in \ob \mathcal{D}$:
	\begin{align}
		\tikzbo{\draw[diagram,directed] (0,-1) node[below] {$FX$} -- (0,1);}
		\tikzbo{\encircle[angle=30,label={$F\Omega_{\mathcal{C}}$}]{\draw[diagram,directed] (0,-1) node[below] {$A$} -- (0,1);}} = 
		\tikzbo{
			\pic {slid={$F\Omega_{\mathcal{C}}$}{$FX$}};
		}
	\end{align}
\end{lem}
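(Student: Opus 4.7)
The plan is to generalise the proof of Lemma \ref{sliding property} by replacing the killing property with the generalised insertion lemma (Lemma \ref{generalised insertion lemma}). Where the original argument made the encircled strand transparent (so that the free strand could slide through it at no cost), here the free strand $FX$ will instead be decomposed through the simples of $\mathcal{C}$ before being merged with the $F\Omega_\mathcal{C}$ loop.

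First I would apply the generalised insertion lemma to the $FX$ strand of the LHS, rewriting
\[
1_{FX}=\sum_{Z\in\Lambda_{\mathcal{C}},\,i}\qdim{Z}\,F(\iota_{Z}^{i})\circ F(\iota_{Z,i}),
\]
where the $\iota_{Z,i}\in\mathcal{C}(X,Z)$ form a basis and the $\iota_{Z}^{i}\in\mathcal{C}(Z,X)$ are the spherical dual basis in $\mathcal{C}$. Substituted into the LHS, each summand becomes a diagram in $\mathcal{D}$ in which an $FZ$ strand, flanked by the $F$-images of these basis morphisms, runs alongside the $F\Omega_{\mathcal{C}}$-encirclement of $A$.

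The central step is to show that in each such summand the $FZ$ strand may be slid through the $F\Omega_{\mathcal{C}}$ loop around $A$. This rests on the fusion-algebra identity $\Omega_{\mathcal{C}}\otimes Z\cong\qdim{Z}\,\Omega_{\mathcal{C}}$ in $\mathcal{C}$ (a consequence of Frobenius reciprocity applied to the fusion coefficients), which, transported by the monoidal pivotal functor $F$, yields $F\Omega_{\mathcal{C}}\otimes FZ\cong\qdim{Z}\,F\Omega_{\mathcal{C}}$ in $\mathcal{D}$. A diagrammatic argument using this isomorphism, the naturality of the braiding in $\mathcal{D}$ and the monoidal coherence of $F$ should then realise the sliding move at the level of morphisms. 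Once this is done for each summand, invoking the generalised insertion lemma in reverse reassembles the $FZ$ strand (now inside the loop) into the full $FX$ strand, producing the RHS.

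The main obstacle is this central step: the fusion-algebra equality is only an identity of objects in $\Co[\mathcal{D}]$, so realising it as a diagrammatic sliding requires choosing the implementing morphisms explicitly and verifying, via the ribbon structure of $\mathcal{D}$ and the pivotal naturality of $F$ (the square in diagram \eqref{pivotal functor}), that the choice is compatible with the encirclement configuration. In effect, this replaces the simple topological move through a transparent strand used in Lemma \ref{sliding property} by a genuinely algebraic manipulation mediated by the pivotal functor.
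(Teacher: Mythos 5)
There is a genuine gap at exactly the point you flag as ``the central step''. The fusion-algebra identity $\Omega_{\mathcal{C}}\otimes Z\cong\qdim{Z}\,\Omega_{\mathcal{C}}$ is an isomorphism of objects (equivalently, an identity in $\Co[\mathcal{C}]$), but the sliding move is an equality of \emph{morphisms} in $\mathcal{D}$: the encircling is a closed loop built from evaluations, coevaluations and braidings, and an object-level isomorphism does not tell you that the two closed diagrams evaluate to the same scalar-valued morphism. Your proposal therefore reduces the lemma for general $X$ to the lemma for simple $Z$ --- but the statement is no easier for simple objects, so the preliminary decomposition of $1_{FX}$ over $\Lambda_{\mathcal{C}}$ buys nothing, and the actual content of the proof is left as ``a diagrammatic argument\ldots should then realise the sliding move''.

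What that missing argument has to be is the heart of the paper's proof, and it is organised differently. There one first expands the encircling colour as $\sum_{Y\in\Lambda_{\mathcal{C}}}\qdim{Y}\,FY$ and then applies the generalised insertion lemma not to $FX$ alone but to the pair of strands $FX\otimes(FY)^*$, decomposing over morphisms $\iota_i\in\mathcal{C}(Z,X\otimes Y^*)$. After an isotopy (naturality of the braiding), one bends the legs to reinterpret these as morphisms $\tilde\iota_i\in\mathcal{C}(Y,Z^*\otimes X)$; the crucial and nontrivial point is that the bent families $\{\tilde\iota_i\}$, $\{\tilde\iota^j\}$ are \emph{again dual bases} with respect to the spherical pairing. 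This is precisely where pivotality of $F$ (preservation of traces) and sphericality of $\mathcal{D}$ are used, via an explicit graphical computation. Only then can the insertion lemma be run in reverse --- now summing over $Y$ instead of $Z$, so the two copies of $F\Omega_{\mathcal{C}}$ swap roles --- to reassemble the right-hand side. Your proposal names the right ingredients (naturality of the braiding, the pivotality square \eqref{pivotal functor}), but without the dual-basis-preservation argument the proof does not close, and the route through $\Omega_{\mathcal{C}}\otimes Z\cong\qdim{Z}\,\Omega_{\mathcal{C}}$ does not supply it.
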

	\newcommand{\tikzoffset}{\node at (1.8,0) {};}
	\newcommand{\Aid}{\draw[diagram,directed] (0,-1.5) node[below] {$A$} -- (0,1.5); \tikzoffset}
	\newcommand{\minipagewidth}{6.5cm}
	\begin{proof}
		The proof proceeds diagrammatically.
		\renewcommand{\minipagewidth}{5.6cm}
		\begin{align}
			& & \tikzbo{
				\draw[diagram,directed] (-\x,-1.5) node[below] {$FX$} -- (-\x,1.5);
				\encircle[angle=30,label={$F\Omega_{\mathcal{C}}$}]{\Aid}
			}\nonumber\\
			& = \sum_{\mathclap{Y \in \Lambda_\mathcal{C}}} \qdim{Y}
			& \tikzbo{
				\draw[diagram,directed] (-\x,-1.5) node[below] {$FX$} -- (-\x,1.5);
				\encircle[angle=35,label=FY]{\Aid}
			}
			& \; \begin{minipage}{\minipagewidth}
				Using the definition of $\Omega_{\mathcal{C}}$.
			\end{minipage} \\
			& = \sum_{\mathclap{\substack{Y, Z \in \Lambda_\mathcal{C} \\ \iota_i \in \mathcal{C}(Z, X \otimes Y^*) \\ \braket{\iota^i}{\iota_j} = \delta_{i,j}}}} \qdim{Y}\qdim{Z}
			& \tikzb{
				\node[morbox] (i) at (-0.5,-0.45) {$F\iota_i$};
				\node[morbox] (i*) at (-0.5,0.45) {$F\iota^i$};
				\draw[diagram,directed] (-\x,-1.5) node[below] {$FX$} to[out=90, in=270] (i.245);
				\draw[diagram,directed] (i) -- node[left] (B) {$FZ$} (i*);
				\draw[diagram,directed] (i*.115) to[out=90,in=270] (-\x,1.5);
				\coordinate (Y) at (0.5,0);
				\draw[diagram,directed=0.1] (Y) to[out=90,in=90,looseness=2] (i*.65);
				\Aid
				\draw[diagram] (i.295) to[out=270,in=270,looseness=2] (Y) node[right] {$FY$};
			}
			& \; \begin{minipage}{\minipagewidth}Insertion of $F\Omega_{\mathcal{C}}$, according to Lemma \ref{generalised insertion lemma}.\end{minipage} \\
			& = \sum_{\mathclap{\substack{Y, Z \in \Lambda_\mathcal{C} \\ \iota_i \in \mathcal{C}(Z, X \otimes Y^*) \\ \braket{\iota^i}{\iota_j} = \delta_{i,j}}}} \qdim{Y}\qdim{Z}
			& \tikzb{
				\node[morbox] (i) at (0.5,-0.6) {$F\iota_i$};
				\node[morbox] (i*) at (0.5,0.6) {$F\iota^i$};
				\draw[diagram,directed] (i*.115) to[out=90,in=270] (-\x,1.5);
				\draw[diagram] (-0.3,0) coordinate (M) node[left] (B) {$FZ$} to[out=90,in=270] (i*);
				\Aid
				\draw[diagram,directed] (-\x,-1.5) node[below] {$FX$} to[out=90, in=270] (i.245);
				\draw[diagram,opdirected=0.01] (M) to[out=270,in=90] (i);
				\draw[diagram,directed,looseness=2.5] (i.295) to[out=270,in=270] (1.35,0) node[right] {$FY$} to[out=90,in=90] (i*.65);
			}
			& \;
				\begin{minipage}{\minipagewidth}
					Naturality of the braiding as isotopy.
				\end{minipage}
			\\
			& = \sum_{\mathclap{\substack{Y, Z \in \Lambda_\mathcal{C} \\ \tilde \iota_i \in \mathcal{C}(Y, Z^* \otimes X) \\ \braket{\tilde \iota^i}{\tilde \iota_j} = \delta_{i,j}}}} \qdim{Y} \qdim{Z}
			& \tikzb{
				\node[morbox] (i) at (0.5,-0.4) {$F \tilde \iota_i$};
				\node[morbox] (i*) at (0.5,0.4) {$F \tilde \iota^i$};
				\draw[diagram,directed] (i*.65) to[out=90,in=270] (-\x,1.5);
				\draw[diagram,directed=0.1,looseness=1.4] (-0.5,0) coordinate (M) node[left] (B) {$FZ$} to[out=90,in=90] (i*.115);
				\Aid
				\draw[diagram,directed] (-\x,-1.5) node[below] {$FX$} to[out=90, in=270] (i.295);
				\draw[diagram,looseness=1.4] (M) to[out=270,in=270] (i.245);
				\draw[diagram,directed] (i) to[out=90,in=270] node[right] {$FY$} (i*);
			}
			& \;
				\begin{minipage}{\minipagewidth}
					Isomorphism $\iota_i \mapsto \tilde\iota_i$ between $\mathcal{C}(Z, X \otimes Y^*)$ and $\mathcal{C}(Y, Z^* \otimes X)$ given by composition with evaluation and coevaluation.
					The $\tilde \iota$ form dual bases because of pivotality of $F$ and sphericality of $\mathcal{D}$.
					(Explained below.)
				\end{minipage}
				\label{TwistingFIntertwinersAround}
			\\
			& =  \sum_{\mathclap{Z \in \Lambda_\mathcal{C}}} \qdim{Z}
			& \tikzbo{
				\draw[diagram] (-\x,1.5) to[out=-90,in=90] (\x,0);
				\encircle[label=FZ,angle=150]{\Aid}
				\draw[diagram,directed=0.95] (-\x,-1.5) node[below] {$FX$} to[out=90,in=-90] (\x,0);
			}
			& \;
				\begin{minipage}{\minipagewidth}
					Inverse insertion of $F \Omega_{\mathcal{C}}$.
					Note that the two $F\Omega_{\mathcal{C}}$'s have swapped roles during the process.
				\end{minipage}
			\\
			& = & \tikzbo{
				\draw[diagram] (-\x,1.5) to[out=-90,in=90] (\x,0);
				\encircle[angle=150,label={$F\Omega_{\mathcal{C}}$}]{\Aid}
				\draw[diagram,directed=0.95] (-\x,-1.5) node[below] {$FX$} to[out=90,in=-90] (\x,0);
			}
			&
		\end{align}
		The non-obvious part of the calculation is \eqref{TwistingFIntertwinersAround}.
		The assumption that $\{\iota_i\}$ and $\{\iota^i\}$ form dual bases with respect to the spherical pairing looks like this in the graphical calculus:
		\begin{equation}
		\tikzbo{
			\node[morbox] (A) at (0,0.4) {$\iota_i$};
			\node[morbox] (C) at (0,-0.4) {$\iota^j$};
			\draw[thick,directed] (C.115) to[out=90,in=270] (A.255);
			\draw[thick,opdirected] (C.65) to[out=90,in=270] (A.295);
			\draw[thick,opdirected] (C.270) to[out=270,in=0] (-0.2,-0.8) to[out=180,in=180] (-0.2,0.8) to[out=0,in=90] (A.90);
		} = \delta_{i,j}
		\end{equation}
		It is necessary to show that this property is also true for $\{\tilde \iota_i\}$ and $\{\tilde \iota^j\}$.
		After composing the $F\iota_i$ and $F\iota^j$ with evaluations and coevaluations, this again results in $F$ applied to morphisms $\{\tilde \iota_i\}$ and $\{\tilde \iota^j\}$
		since $F$ is monoidal and therefore preserves duals (up to the natural isomorphism $F^2$ which is implicit here):
		\tikzset{
			intylabel/.style={above},
			intxzlabel/.style={below},
			inttopdown/.style={
				intylabel/.style={below},
				intxzlabel/.style={above}
			},
			pics/intertwiner/.style 2 args={code={
				\node[morbox] (B) at (0,0) {#1};
				\coordinate (Z) at (-0.32,0.5);
				\coordinate (Y) at (0,-0.5);
				\coordinate (X) at (0.32,0.5);
				\draw[thick,#2] (B) -- (Y) node[intylabel] {$FY$};
				\draw[thick,#2] (B) -- (Z) node[intxzlabel] {$F\!Z$};
				\draw[thick,#2] (X) node[intxzlabel] {$F\!X$} -- (B);
			}},
			pics/intertwinercupcap/.style 2 args={code={
				\node[morbox] (B) at (0,0) {#1};
				\coordinate (X) at (-0.2,0.5);
				\draw[thick,#2] (B) -- (0,-0.25) arc (0:-180:0.4cm) -- +(0,0.75) node[intxzlabel] {$F\!Z$};
				\draw[thick,#2] (X) node[intxzlabel] {$F\!X$} -- (B);
				\draw[thick,#2] (B) -- (0.15,0.2) arc (150:0:0.4cm) -- +(0,-0.75) coordinate node[intylabel] {$FY$};
			}}
		}
		\newcommand{\intertwinereq}[3]{\tikzbo{
				\begin{scope}[scale=1.2]
					\pic[#1] (int-) {intertwiner={$F \tilde \iota#2$}{#3}};
				\end{scope}
			}
			\coloneqq
			\tikzbo{
				\begin{scope}[scale=1.2]
					\pic[#1] (intcc-) {intertwinercupcap={$F\iota#2$}{#3}};
				\end{scope}
			}
		}
		\begin{equation}
			\intertwinereq{yscale=-1}{_i}{directed} \qquad \intertwinereq{inttopdown}{^j}{opdirected}
			\label{def:intertwiner tilde}
		\end{equation}
		It is necessary to show now that $\{\tilde \iota_i\}$ and $\{\tilde \iota^j\}$ are dual bases again.
		But this follows from pivotality of $F$ (preservation of traces) and sphericality of $\mathcal{D}$:
		\tikzset{
			/dualbasesbare/leftconn/.style={opdirected},
			/dualbasesbare/rightconn/.style={opdirected},
			dualbasesbare/.pic={
				\node[morbox] (A) at (0,0.4) {$#1_i$};
				\node[morbox] (C) at ($(0,0)-(A)$) {$#1^j$};
				\coordinate (side) at (-0.1,0);
				\coordinate (top) at (0,0.18);
				\coordinate (right) at (0.43,0);
				\draw[thick,/dualbasesbare/leftconn] (A.255) to[out=270,in=90] (C.115);
				\draw[thick,directed] (A) to[out=90,in=0] ($(side)+(A.90)+(top)$) to[out=180,in=180] ($(side)+(C.270)-(top)$) to[out=0,in=270] (C);
			},
			dualbases/.pic={
				\pic () {dualbasesbare=#1};
				\draw[thick,/dualbasesbare/rightconn] (A.295) to[out=270,in=90] (C.65);
			},
			dualbasesaround/.pic={
				\pic () {dualbasesbare=#1};
				\draw[thick,directed=0.51,looseness=1.3] (A.295) to[out=270,in=220] ($(right)+(top)$) to [out=40,in=0] ($(side)+(A.90)+2*(top)$) to[out=180,in=180] ($(side)+(C.270)-2*(top)$) to [out=0,in=-40]  ($(right)-(top)$) to[out=140,in=90] (C.65);
			},
			dualbasesturnip/.pic={
				\pic () {dualbasesbare=#1};
				\draw[thick,directed] (A.295) to[out=270,in=220] ($(right)+(top)$) to [out=40,in=180] ($1.5*(right)+3*(top)$) to[out=0,in=0] ($1.5*(right)-3*(top)$) to [out=180,in=-40]  ($(right)-(top)$) to[out=140,in=90] (C.65);
			},
			flipleftconn/.style={/dualbasesbare/leftconn/.style={directed}},
			fliprightconn/.style={/dualbasesbare/rightconn/.style={directed}}
		}
		\begin{align}
			\hphantom{\stackrel{\eqref{def:intertwiner tilde}}{=}} &\tikzbo{\pic[flipleftconn] {dualbases=\tilde\iota};}
			&&\stackrel{\hphantom{\text{sphericality}}}{=} F\left(\tikzbo{\pic[flipleftconn] {dualbases=\tilde\iota};}\right)
			&&\stackrel{\text{pivotality}}{=} \tikzbo{\pic[flipleftconn] {dualbases=F\tilde\iota};}
			\nonumber\\
			\stackrel{\eqref{def:intertwiner tilde}}{=} &\tikzbo{\pic {dualbasesaround=F\iota};}
			&& \stackrel{\text{sphericality}}{=} \tikzbo{\pic {dualbasesturnip=F\iota};} 
			&& \stackrel{\hphantom{\text{pivotality}}}{=} \tikzbo{\pic[fliprightconn] {dualbases=F\iota};}
			\nonumber\\
			\stackrel{\text{pivotality}}{=} &F\left(\tikzbo{\pic[fliprightconn] {dualbases=\iota};}\right)
			&&\stackrel{\hphantom{\text{sphericality}}}{=} \tikzbo{\pic[fliprightconn] {dualbases=\iota};}
			&&\stackrel{\hphantom{\text{pivotality}}}{=} \delta_{i,j}
		\end{align}
		In words, pivotal functors preserve dual bases (with respect to the spherical pairing).
\end{proof}
\begin{lemma}
	\label{lem:generalised sliding with braidings}
	The previous lemma holds as well when the encircling by $F\Omega_\mathcal{C}$ is an arbitrary framed knot,
	and also if the single strand $A$ is generalised to multiple strands
	(i.e. a tensor product),
	which may be arbitrarily linked with the encircling.
\end{lemma}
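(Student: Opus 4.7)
The plan is to reduce both generalisations to Lemma \ref{generalised sliding property} by observing that the sliding move is local in nature: the relevant small disk around the point where $FX$ is pushed through the encircling loop is completely unaffected by the global topology of the rest of the diagram.

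For the generalisation to multiple strands, I would group the strands passing through the local disk at the height of the slide into a single bundle and view it as a single strand in $\mathcal{D}$ labelled with the tensor product $A \coloneqq A_1 \otimes \cdots \otimes A_n$. Since Lemma \ref{generalised sliding property} applies to any object $A \in \ob \mathcal{D}$, the slide follows immediately in this bundled picture. Any braidings or linkings among the $A_i$ and the encircling $F\Omega_\mathcal{C}$ that occur outside the local disk are carried along unchanged: they do not enter either the insertion step or the braiding-naturality step of the original proof, so they simply commute past the local rewriting.

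For the generalisation to an arbitrary framed knot, I would localise in the same way. Inside a small neighbourhood of the slide, the encircling already looks like a standard round loop labelled $F\Omega_\mathcal{C}$, and the previous lemma applies verbatim. Outside this neighbourhood the knot may have arbitrary self-crossings, non-zero framing, and further linkings with the $A_i$; in the graphical calculus these correspond to braidings of $F\Omega_\mathcal{C}$ with itself, twists $\theta_{F\Omega_\mathcal{C}}$, and further braidings with the $A_i$, none of which lie in the portion of the diagram modified by the slide. They therefore appear identically on both sides of the claimed equation.

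The main subtlety is making the word \emph{local} precise: one must verify that each step in the proof of Lemma \ref{generalised sliding property} — insertion of $F\Omega_\mathcal{C}$ via Lemma \ref{generalised insertion lemma}, naturality of the braiding, the twisting of intertwiners using pivotality of $F$ and sphericality of $\mathcal{D}$, and the inverse insertion — is a rewriting rule that applies inside a fixed disk of the ribbon diagram without interacting with anything outside it. This is automatic because all the identities invoked are themselves local in the ribbon calculus of $\mathcal{D}$, so the whole argument goes through with the rest of the diagram (the knotting, framing and linkings) carried along as an inert black box.
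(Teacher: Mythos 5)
Your reduction of the multiple-strand case to the single-strand case by bundling the parallel strands into one tensor product $A_1\otimes\cdots\otimes A_n$ is fine as far as it goes, and is consistent with the paper's statement of Lemma \ref{generalised sliding property} for arbitrary $A\in\ob\mathcal{D}$. But the locality argument for the framed-knot case has a genuine gap, and in fact proves the wrong equation. The content of this lemma is precisely that the knotting, framing and linking of the encircling component are \emph{not} inert: after the slide, the strand $FX$ must follow the encircling knot as a parallel copy in the blackboard framing (cf.\ the caption of Table \ref{table:handle moves}). Your claim that the self-crossings, the twist $\theta$ and the external linkings ``appear identically on both sides'' asserts instead that they stay on the single encircling strand while $FX$ is pushed through only locally. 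These two right-hand sides differ, and the difference is detectable: for a $+1$-framed encircling the slid strand acquires a twist, and $\tr(\theta_{FX})\neq\qdim{FX}$ in general, so the ``local slide'' version of the identity is false.

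The reason locality fails is that the proof of Lemma \ref{generalised sliding property} is not a rewriting confined to a disk meeting the encircling in a single arc. The insertion and inverse-insertion steps place the morphisms $F\iota_i$, $F\iota^i$ (and later $F\tilde\iota_i$, $F\tilde\iota^i$) at two points of the encircling loop, and between the first insertion and the final inverse insertion the entire original encircling strand $FY$ is contracted to the short segment joining the two boxes. Any twist or braiding sitting on that strand must therefore be pushed through the boxes by naturality before the inverse insertion can be applied, and naturality converts $\theta_{FY}$ into $\theta_{FZ^*\otimes FX}$ on the two-strand cable consisting of the new encircling and the slid strand; the balance equation \eqref{balanced} then distributes this as individual twists plus a mutual crossing. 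This cabling step, which your argument omits, is exactly how the paper proves the lemma and exactly what makes the 2-2-handle slide with nonzero framing come out right. The same remark applies to strands $A_i$ linked nontrivially with the encircling: after the slide they must also link the slid strand, rather than being ``carried along unchanged''.
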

\begin{proof}
	Braidings and twists are natural transformations and can therefore be pushed past the $F \tilde \iota_i$,
	so they will be passed on to the new encircling morphism and the slid handle.
	Assume e.g. that after \eqref{TwistingFIntertwinersAround},
	there still is a twist on $FY$.
	Then the right hand side of the diagram is:
	\tikzmath{
		\height = 1;
		\length = 0.6;
		\width  = 0.7;
		\boxwidth = 0.12;
	}
	\begin{align}
		\tikzbo{
			\node[morbox] (i*) at (0, \height) {$F\tilde{\iota}^i$};
			\node[morbox] (i)  at (0,-\height) {$F\tilde{\iota}_i$};
			\draw[thick] (i*.65)  -- +(0,2*\length);
			\draw[thick] (i*.115) -- +(0,2*\length);
			\draw[thick]  (i.245) -- +(0, -\length);
			\draw[thick]  (i.295) -- +(0, -\length);
			\draw[thick]
				(i*) to[out=270, in=270]
				(\width, 0)
					node[right] {$FY$};
			\draw[diagram]
				(i)  to[out= 90, in= 90]
				(\width, 0);
		}
		=
		\tikzbo{
			\node[morbox] (i*)   at (0, \height+\length) {$F\tilde{\iota}^i$};
			\node[morbox] (i)    at (0, \height)         {$F\tilde{\iota}_i$};
			\node         (ilow) at (0,-\height)         {\phantom{$F\tilde{\iota}_i$}};
			\draw[thick]   (i)      -- (i*);
			\draw[thick]   (i*.65)  -- +(0, \length);
			\draw[thick]   (i*.115) -- +(0, \length);
			\draw[thick] (ilow.245) -- +(0,-\length);
			\draw[thick] (ilow.295) -- +(0,-\length);
			\draw[thick] (ilow.245) -- (ilow.115);
			\draw[thick] (ilow.295) -- (ilow.65);
			\draw[thick, looseness=2]
				(\width+\boxwidth,0) to[out=270, in=270]
				(i.245);
			\draw[thick, looseness=1.5]
				(\width-\boxwidth,0) to[out=270, in=270]
				(i.295);
			\draw[diagram, looseness=2]
				(ilow.115)           to[out= 90, in= 90]
				(\width+\boxwidth,0);
			\draw[diagram, looseness=1.5]
				(ilow.65)            to[out= 90, in= 90]
				(\width-\boxwidth,0);
		}
	\end{align}
	The single strand of the encircling is replaced by the sliding strand and the new encircling strand, by \emph{cabling}.
	Therefore, the encircling may have an arbitrary framing,
	which is passed on to the sliding strand.
	
	This argument can be easily generalised to braidings,
	and thus holds for knots and links.
\end{proof}

\begin{remark}
	It is remarkable that it is not necessary to demand $\mathcal{C}$ is ribbon, neither that $F$ is braided or ribbon.
	In fact, the proof this lemma stems from a better-known sliding lemma in spherical fusion categories used for example in understanding the Hilbert spaces assigned to surfaces in the Turaev-Viro-Barrett-Westbury-TQFT \cite[Corollary 3.5]{Kirillov2011:Stringnet}.
	
	For $F$ the identity of a ribbon fusion category, the sliding lemma would have followed directly from the killing property, as demonstrated in Lemma \ref{sliding property}.
	But if $F$ is not the identity, it is unclear whether there is an analogue to the killing property.
\end{remark}

\subsection{The definition}

The definition of the generalised dichromatic invariant can now be given.
\begin{defn}
	\label{definvariant}
	Assume the following:
	\begin{itemize}
		\item Let $\mathcal{C}$ be a spherical fusion category.
		\item Let $\mathcal{D}$ be a ribbon fusion (premodular) category with trivial twist on all transparent objects.
		\item Let $F\colon \mathcal{C} \to \mathcal{D}$ be a pivotal functor.
		\item Let $L$ be the special framed link obtained from a handlebody decomposition of a smooth, oriented, closed four-manifold $M$.
	\end{itemize}
	Then the \textbf{generalised dichromatic invariant} of $L$ associated with $F$ is defined as:
	\begin{align}
		I_F(L) \coloneqq \frac{\left<L\left(\Omega_{\mathcal{D}},F\Omega_{\mathcal{C}}\right)\right>}{\qdim{\Omega_{\mathcal{C}}}^{h_2-h_1} \left(\qdim{\Omega_{\mathcal{D}}}\qdim{\left(F\Omega_{\mathcal{C}}\right)'}\right)^{h_1}}
		\label{the invariant}
	\end{align}
	Here, $h_i$ is the number of $i$-handles of the handle decomposition, or, the number of components in the first, respective, second set of the special framed link.
	$\left<L\left(\Omega_{\mathcal{D}},F\Omega_{\mathcal{C}}\right)\right>$ is the evaluation of the special framed link diagram as an endomorphism of $I_\mathcal{D}$,
	or equivalently a complex number,
	as in Definition \ref{def:eval of links}.
	The 1-handles are labelled with $\Omega_{\mathcal{D}}$,
	the 2-handles with $F\Omega_{\mathcal{C}}$.
	$\left(F\Omega_{\mathcal{C}}\right)'$ is the transparent part of $F\Omega_{\mathcal{C}}$,
	as in Definition \ref{transparent colour}.

\end{defn}
\begin{rema}
	It might be counter-intuitive that the unknotted, 0-framed, unlinked 1-handles are labelled by $\Omega_{\mathcal{D}}$,
	while the 2-handles are labelled by $F\Omega_{\mathcal{C}}$,
	but $\mathcal{D}$ is the ribbon category (which has algebraic counterparts of knots and framings) and $\mathcal{C}$ is only spherical.
	But this is indeed a valid definition,
	while a functor in the other direction does not lead to an invariant in an obvious way.
	
	Note also that $F\Omega_\mathcal{D}$ does not depend on the monoidal coherences $F^2$ and $F^0$.
	Two functors with different coherences will give the same invariant.
	Furthermore, any two isomorphic functors will also yield the same invariant.
\end{rema}

From now on,
the conditions in the definition will be assumed,
unless stated otherwise.

\subsection{Proof of invariance}

\label{proof of invariance}

\begin{lem}[Multiplicativity under disjoint union]
	\label{multiplicativity under disjoint union}
	For two links $L_1$ and $L_2$, $I_F$ is multiplicative under disjoint union $\sqcup$:
	\begin{equation}
		I_F(L_1 \sqcup L_2) = I_F(L_1) \cdot I_F(L_2)
	\end{equation}
\end{lem}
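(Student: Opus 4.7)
The plan is to verify the identity by factoring the numerator and denominator of \eqref{the invariant} separately.

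For the numerator, I would use that the graphical calculus of Definition \ref{def:graphical calculus} interprets a labelled special framed link diagram as an element of $\mathcal{D}(I_\mathcal{D},I_\mathcal{D})\cong\Co$. Given disjoint diagrams, one may arrange $L_1$ and $L_2$ side by side with no strands interacting, so that the resulting morphism is the monoidal product of two endomorphisms of $I_\mathcal{D}$. Under the coherence isomorphism $I_\mathcal{D}\otimes I_\mathcal{D}\cong I_\mathcal{D}$, the monoidal product of endomorphisms of the unit corresponds to ordinary multiplication of scalars. Hence
\begin{equation}
\bigl\langle (L_1\sqcup L_2)(\Omega_\mathcal{D},F\Omega_\mathcal{C})\bigr\rangle
=\bigl\langle L_1(\Omega_\mathcal{D},F\Omega_\mathcal{C})\bigr\rangle\cdot\bigl\langle L_2(\Omega_\mathcal{D},F\Omega_\mathcal{C})\bigr\rangle.
\end{equation}

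For the denominator, set $h_i^{(k)}\coloneqq h_i(L_k)$ for $k=1,2$ and $i=1,2$. Since each component of $L_1\sqcup L_2$ belongs to exactly one of $L_1$ or $L_2$, the handle counts add: $h_i(L_1\sqcup L_2)=h_i^{(1)}+h_i^{(2)}$. Then the exponent law $a^{x+y}=a^x a^y$ gives
\begin{align*}
\qdim{\Omega_\mathcal{C}}^{h_2-h_1}
&=\qdim{\Omega_\mathcal{C}}^{h_2^{(1)}-h_1^{(1)}}\cdot\qdim{\Omega_\mathcal{C}}^{h_2^{(2)}-h_1^{(2)}},\\
\bigl(\qdim{\Omega_\mathcal{D}}\qdim{(F\Omega_\mathcal{C})'}\bigr)^{h_1}
&=\bigl(\qdim{\Omega_\mathcal{D}}\qdim{(F\Omega_\mathcal{C})'}\bigr)^{h_1^{(1)}}\cdot\bigl(\qdim{\Omega_\mathcal{D}}\qdim{(F\Omega_\mathcal{C})'}\bigr)^{h_1^{(2)}}.
\end{align*}
Combining with the factorisation of the numerator yields the claim.

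There is really no obstacle here, since the statement is a bookkeeping lemma; the only point that is not entirely mechanical is the factorisation of the link evaluation, which relies on the fact that the graphical calculus is monoidal with respect to horizontal juxtaposition. This point is cleanest to phrase by appealing to the universal property of the ribbon tangle category cited in Definition \ref{def:graphical calculus}: the disjoint union of tangle diagrams corresponds to the monoidal product, and the functor to $\mathcal{D}$ is monoidal, so the evaluations multiply.
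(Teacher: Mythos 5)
Your proposal is correct and follows essentially the same route as the paper's proof: the numerator factors because the evaluation is a monoidal functor sending juxtaposition of disjoint diagrams to multiplication of scalars under $I_\mathcal{D}\otimes I_\mathcal{D}\cong I_\mathcal{D}$, and the denominator factors because the handle counts $h_i$ are additive under disjoint union. No gaps.
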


	\begin{proof}
		Evaluation of the graphical calculus is multiplicative under disjoint union:
		A link corresponds to an endomorphism of $\Co$, so two links correspond to an endomorphism of $\Co \otimes \Co$.
		The evaluation is a monoidal functor with coherence isomorphism $- \cdot - \colon \Co \otimes \Co \to \Co$, so the numerator of $I_F$ is multiplicative.
		Obviously, $h_i(L_1 \sqcup L_2) = h_i(L_1) + h_i(L_2)$, so the denominator is multiplicative as well.
	\end{proof}

\tikzset{
	/slid/throughlabel=,
	/slid/throughstyle/.style={diagram},
	/slid/aroundstyle/.style={diagram},
	slide11/.style={/slid/arounddot/.style={},/slid/circledot/.style={}},
	slide21/.style={/slid/circledot/.style={}},
	pics/handle21/.style n args={3}{code={
		\encircle[angle=160,label=#1]{
			\begin{scope}[xshift=1.4cm,rotate=-30]
				\trefoil{#2}
			\end{scope}
			\node[right=0.5cm] at (C21) {#3};
		}
	}}
}

Given two different handle decompositions of a manifold can be transformed into each other by a series of handle slides and handle cancellations, as described for example in \cite[Theorem 4.2.12]{GompfStipsicz}.
The relevant moves for link diagrams of four-manifolds have been studied in \cite{deSa} and are explained further in \cite[Section 5.1]{GompfStipsicz}.
They are shown in Table \ref{table:handle moves}.
\begin{table}[!ht]
	\begin{center}
		\begin{tabular}{lcc}
			\toprule
			Handle move
				& before
					& after
			\\\midrule
			1-1-handle slide
				& \tikzbo{\pic[slide11] {unaround={}};} \tikzbo{\pic[slide11] {unthrough handle={}{}};}
					& \tikzbo{\pic[slide11] {slid handle={}{}};}
			\\\midrule
			2-1-handle slide
				& \tikzbo{\pic {unaround={}};} \tikzbo{\pic[slide21] {unthrough handle={}{}};}
					& \tikzbo{\pic[slide21] {slid handle={}{}};}
			\\\midrule
			2-2-handle slide
				& \tikzbo{\pic {unaround={}};} \tikzbo{\pic {unthrough handle={}{}};}
					& \tikzbo{\pic {slid handle={}{}};}
			\\\midrule
			1-2-handle cancellation
				& \tikzbo{\pic {handle21={}{}{}}; \node[dot] at (-1,0) {};}
					& (empty)
			\\\midrule
			2-3-handle cancellation
				& \tikzbo{\dimcircle{}{}}
					& (empty)
			\\\bottomrule
		\end{tabular}
	\end{center}
\caption{Handle moves and cancellations for 4-handlebodies.
As usual, a dot denotes a 1-handle.
The grey area stands for an arbitrary number of 1- and 2-handles passing through.
Note, that for the 1-2-handle cancellation,
the 2-handle may be knotted arbitrarily,
but not linked to other handles.
In the 2-2-handle slide,
the 2-handle on the right hand side can be arbitrarily knotted and linked,
in which case the sliding handle needs to follow the blackboard framing.
}
\label{table:handle moves}
\end{table}

\begin{thm}[Independence of handlebody decomposition]
	The generalised dichromatic invariant is independent of the handlebody decomposition and is thus an invariant of smooth four-manifolds.
\end{thm}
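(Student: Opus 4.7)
The plan is to verify invariance of $I_F$ under each of the five moves tabulated in Table~\ref{table:handle moves}, since any two handle decompositions of the same closed, oriented, smooth four-manifold are connected by a finite sequence of such moves together with ambient isotopy of the link (and the link evaluation is already isotopy-invariant by Definition~\ref{def:graphical calculus}). The three slide moves preserve $h_1$ and $h_2$, so for them it suffices to check that the numerator $\langle L(\Omega_\mathcal{D},F\Omega_\mathcal{C})\rangle$ is unchanged. The 1-1- and 2-1-handle slides have encircling colour $\Omega_\mathcal{D}$ and sliding colours $\Omega_\mathcal{D}$ and $F\Omega_\mathcal{C}$ respectively, so both follow from the ordinary sliding property (Lemma~\ref{sliding property}) applied in $\mathcal{D}$. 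The 2-2-handle slide has encircling colour $F\Omega_\mathcal{C}$, which is precisely the setting of the generalised sliding property (Lemma~\ref{generalised sliding property}); its promotion to arbitrarily framed, knotted, and multi-stranded situations is provided by Lemma~\ref{lem:generalised sliding with braidings}.

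For the two cancellation moves, I would verify that the diagrammatic factor produced by removing the cancelled components matches the change in the normalisation. The 2-3-handle cancellation removes a $0$-framed, unknotted, unlinked 2-handle, contributing $\qdim{F\Omega_\mathcal{C}}=\qdim{\Omega_\mathcal{C}}$ (since pivotal functors preserve dimensions), which is exactly the factor by which $\qdim{\Omega_\mathcal{C}}^{h_2-h_1}$ in the denominator shrinks when $h_2$ drops by one. For the 1-2-handle cancellation, $h_2-h_1$ is unchanged and $h_1$ drops by one, so the required diagrammatic factor is $\qdim{\Omega_\mathcal{D}}\,\qdim{(F\Omega_\mathcal{C})'}$.

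Producing this factor for the 1-2-handle cancellation is the main obstacle, because the cancelling 2-handle may be arbitrarily knotted and framed (only its linking with \emph{other} handles is constrained to be trivial). The strand where it passes through the dotted 1-handle is encircled by $\Omega_\mathcal{D}$, so the killing property (Lemma~\ref{Killing property}) projects the colour $F\Omega_\mathcal{C}$ onto its transparent part $(F\Omega_\mathcal{C})'$ and yields a scalar factor $\qdim{\Omega_\mathcal{D}}$. What remains is a closed loop, possibly knotted and framed, coloured by $(F\Omega_\mathcal{C})'$. Since transparent objects braid trivially with everything, all self-crossings can be eliminated by isotopy, and the standing hypothesis that $\mathcal{D}$ has trivial twist on transparent objects makes the framing contribute nothing either; the remaining loop therefore evaluates to $\qdim{(F\Omega_\mathcal{C})'}$. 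Finally, Lemma~\ref{multiplicativity under disjoint union} (or, equivalently, multiplicativity of the diagrammatic evaluation over disjoint closed components) lets me factor this scalar out of the rest of $\langle L(\Omega_\mathcal{D},F\Omega_\mathcal{C})\rangle$, giving exactly the ratio needed for invariance and completing the argument.
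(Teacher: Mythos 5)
Your proposal is correct and follows essentially the same route as the paper: checking invariance move-by-move, using Lemma \ref{sliding property} for the 1-1- and 2-1-slides, Lemmas \ref{generalised sliding property} and \ref{lem:generalised sliding with braidings} for the 2-2-slide, and the killing property together with triviality of the twist on transparent objects plus multiplicativity to reduce each cancellation to the normalisation factor. The only cosmetic difference is that you make the role of the killing property and the trivial-twist hypothesis in the 1-2-cancellation slightly more explicit than the paper's diagrammatic computation does.
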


	\begin{proof}
		It is only necessary to check invariance of $I_F$ under each of the handle moves in order to prove the theorem.
		\begin{itemize}
			\item Invariance under the 1-1-handle slide and the 2-1-handle slide are ensured by the sliding property \ref{sliding property}. Since 1- and 2-handles are labelled with objects in $\mathcal{D}$, they can slide over a 1-handle which is labelled with $\Omega_\mathcal{D}$.
			\item Invariance under the 2-2-handle slide is ensured by the generalised sliding property \ref{generalised sliding property},
				and its adaption to arbitrary knots and links in Lemma \ref{lem:generalised sliding with braidings}.
				Every object in the image of $F$ can slide over $F\Omega_\mathcal{C}$, so since 2-handles are labelled with $F\Omega_\mathcal{C}$, they can slide over each other.
			\item The 1-2-handle cancellation leaves $I_F$ invariant because of its normalisation.
				Assume that there is a linked pair of a 1-handle and a 2-handle that is not linked to the rest of the diagram.
				Then it will be shown that $I_F$ does not change if the pair is removed from the diagram.
				The 2-handle can be knotted, as is illustrated here with a trefoil knot.
				Since $I_F$ is multiplicative under disjoint union of link diagrams, it only remains to show that the invariant of the pair of handles evaluates to 1.
				The numerator is just the evaluation of the graphical calculus:
				\begin{align}
					\left<
					\tikzbo{
						\pic {handle21={}{}{}};
						\node[dot] at (-1,0) {};
					}
					\right>
					&=
					\tikzbo{
						\node at (-1,0) {};
						\pic {handle21={$\Omega_\mathcal{D}$}{}{$F\Omega_\mathcal{C}$}};
					}
					\nonumber\\
					&=
					\tikzbo{
						\pic {handle21={$\Omega_\mathcal{D}$}{,dotted}{$\left(F\Omega_\mathcal{C}\right)'$}};
					}\nonumber\\
					&=
					\tikzbo{\dimcircle{$\Omega_\mathcal{D}$}}
					\tikzbo{\dimcircle[dotted]{$\left(F\Omega_\mathcal{C}\right)'$}}
					\nonumber\\
					&= \qdim{\Omega_\mathcal{D}}\qdim{\left(F\Omega_\mathcal{C}\right)'}
				\end{align}
				The number of 1-handles and 2-handles are both 1,
				so the denominator equals the above expression,
				thus the invariant is 1.
				
				Note that it was necessary here to demand that the twist is trivial on transparent objects.
			\item Invariance under the 2-3-handle cancellation is even easier to show:
				3-handles don't appear in the link picture.
				A 2-3-handle cancellation thus amounts to the removal of an unlinked, unknotted 2-handle.
				By a similar argument as before, one can evaluate the invariant on the link diagram of such a 2-handle and find that it is 1 as well.
		\end{itemize}
	\end{proof}

\begin{remark}
	For a manifold $M$ and a special framed link $L$ representing a handle decomposition of it,
	$I_F$ has now been shown not to depend on the choice of the link $L$.
	From here on, the notation $I_F(M)$ will be used,
	and stands for $I_F(L)$
	with an arbitrary choice of $L$.
\end{remark}

\begin{rema}
	Pivotality of $F$ is essential for the invariance of $I_F$.
	As an easy counterexample, take the category of super vector spaces,
	which is defined as follows.
	As monoidal category, choose the category of finite dimensional representations of $\Z_2$.
	Choose the pivotal structure such that the sign representation $\sigma$ has dimension $-1$.
	
	There is an obvious forgetful strong monoidal functor $U$ to vector spaces sending both simple objects to $\Co$.
	This functor is \emph{not} pivotal since the dimension of $\Co$ is $+1$.
	One finds that the evaluation of the (undotted) unknot is
	\begin{align*}
		\qdim{U\Omega_{\Rep(\Z_2)}}
		&= \sum_{\mathclap{X \in \Lambda_{\Rep(\Z_2)}}} \qdim{X} \qdim{UX}\\
		&= 1 \cdot 1 + (-1) \cdot 1 = 0
	\end{align*}
	However, the corresponding manifold is $S^4$ and the empty diagram
	(which would result from cancelling the single 2-handle with a 3-handle)
	evaluates to 1.
	It is apparent now that a non-pivotal functor can break invariance.
\end{rema}

\subsection{Simply-connected manifolds and multiplicativity under connected sum}

\label{ssection:sc manifolds and multiplicativity}

As was shown in Lemma \ref{multiplicativity under disjoint union},
the generalised dichromatic invariant is multiplicative under disjoint union of link diagrams.
This operation, in turn, corresponds to connected sum of manifolds.
As a consequence, for two manifolds $M^1$ and $M^2$,
the invariant satisfies $I_F(M^1 \connsum M^2) = I_F(M^1) \cdot I_F(M^2)$,
where $\connsum$ denotes connected sum.
This has far reaching consequences,
as is shown in the following known lemma.

\begin{lem}
	\label{simply-connected}
	Assume $I$ is any invariant of oriented, closed four-manifolds that is multiplicative under connected sum on simply-connected manifolds.
	Furthermore, assume that $I\left(\CP 2\right)$ and $I\left(\opCP 2\right)$ are invertible.
	Then $I$ is given on a \emph{simply-connected} four-manifold $M$ by
	\begin{align}
		I(M) &= \left(I\left(\CP 2\right) I\left(\opCP 2\right)\right)^{-1+\frac{\chi(M)}{2}} \left(\frac{I\left(\CP 2\right)}{I\left(\opCP 2\right)}\right)^{\frac{\sigma(M)}{2}}
		\label{eq:simply connected invariant}
	\end{align}
	$\chi$ and $\sigma$ are Euler characteristic and signature, respectively.
\end{lem}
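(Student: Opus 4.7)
The plan is to reduce the general simply-connected case to evaluating $I$ on copies of $\CP 2$ and $\opCP 2$, by comparing $I$ to the right-hand side of \eqref{eq:simply connected invariant}. Writing $a = I(\CP 2)$, $b = I(\opCP 2)$ and
$f(M) \coloneqq (ab)^{-1+\chi(M)/2}(a/b)^{\sigma(M)/2}$,
the strategy is to verify that $f$ is multiplicative under connected sum, agrees with $I$ on the ``generators'' $\CP 2$, $\opCP 2$ and $S^4$, and then to invoke a classical stabilisation theorem for simply-connected smooth four-manifolds to conclude $I = f$ throughout.

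First I would check that $f$ is multiplicative under $\connsum$. The additivity of signature $\sigma(M \connsum N) = \sigma(M) + \sigma(N)$ and the Euler-characteristic formula $\chi(M \connsum N) = \chi(M) + \chi(N) - 2$ (since $\chi(S^4) = 2$) make the two exponents in $f$ additive under $\connsum$, so $f(M \connsum N) = f(M) f(N)$ drops out immediately. A direct calculation with $\chi(\CP 2) = 3$, $\sigma(\CP 2) = 1$ gives $f(\CP 2) = (ab)^{1/2}(a/b)^{1/2} = a$, and symmetrically $f(\opCP 2) = b$, $f(S^4) = 1$. That $I(S^4) = 1$ holds as well follows from applying the assumed multiplicativity to $\CP 2 \cong \CP 2 \connsum S^4$ and cancelling the invertible factor $a$.

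The main topological input is the stabilisation statement that for every simply-connected closed oriented smooth four-manifold $M$ there exist non-negative integers $p, q, r, s$ with
\begin{equation*}
M \connsum p\, \CP 2 \connsum q\, \opCP 2 \;\cong\; r\, \CP 2 \connsum s\, \opCP 2
\end{equation*}
as smooth manifolds. This can be extracted from Wall's stable-diffeomorphism theorem, which asserts that simply-connected closed smooth four-manifolds with isomorphic intersection forms become diffeomorphic after sufficiently many connected sums with $S^2 \times S^2$, combined with the standard diffeomorphism $(S^2 \times S^2) \connsum \CP 2 \cong \CP 2 \connsum \CP 2 \connsum \opCP 2$, which lets one exchange $S^2 \times S^2$ stabilisations for $\CP 2$--$\opCP 2$ stabilisations once a single $\CP 2$ summand is available.

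Granted this, applying both $I$ and $f$ to the stabilisation diffeomorphism, and using multiplicativity on simply-connected pieces (every summand appearing is simply-connected), gives
$I(M)\, a^p b^q = a^r b^s$ on the one hand and $f(M)\, a^p b^q = a^r b^s$ on the other; cancelling the invertible factor $a^p b^q$ yields $I(M) = f(M)$, which is exactly \eqref{eq:simply connected invariant}. The only real obstacle I foresee is the topological stabilisation ingredient, which is a genuinely non-trivial input from four-manifold theory; the algebraic verifications, though slightly fiddly because of the half-integer exponents, are routine once multiplicativity of $f$ and the normalisation $f(\CP 2) = a$, $f(\opCP 2) = b$ are in hand.
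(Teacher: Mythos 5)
Your proposal is correct and follows essentially the same route as the paper: both arguments rest on the stable decomposition of simply-connected four-manifolds into connected sums of $\CP 2$ and $\opCP 2$ (the paper cites this directly as \cite[Corollary 9.1.14]{GompfStipsicz}) together with multiplicativity and invertibility of $I(\CP 2)$, $I(\opCP 2)$. The only cosmetic difference is that the paper extracts $b_2^\pm(M)$ explicitly by comparing intersection forms and then substitutes $b_2^\pm = (\chi\pm\sigma)/2 - 1$, whereas you compare $I$ against the candidate multiplicative function $f$ on both sides of the stabilisation, which amounts to the same bookkeeping.
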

\begin{proof}
	The first, and by Poincar\'e duality third, homologies of $M$ are trivial, so the Euler characteristic $\chi(M)$ is equal to $2 + b_2(M)$, where $b_2(M) = b^+_2(M) + b^-_2(M)$ is the rank of the second homology and $b^\pm_2(M)$ are the dimensions of the subspaces on which the intersection form is positive or negative.
	Since the signature is $\sigma(M) = b^+_2(M) - b^-_2(M)$, then it follows that $b^\pm_2(M) = (\chi(M) \pm \sigma(M))/2 - 1$.
	
	But it is well-known \cite[Corollary 9.1.14]{GompfStipsicz} that simply-connected manifolds stably decompose into $\CP 2$ and $\opCP 2$ under connected sum,
	i.e. there exist natural numbers $m, n^+, n^-$ such that:
	\begin{equation}
		 M \connsum^m \CP 2 \connsum^m \opCP 2
		 \cong \connsum^{n^+} \CP 2 \connsum^{n^-} \opCP 2
	\end{equation}
	($M \connsum^n N$ denotes the connected sum of $M$ and $n$ copies of $N$.)
	By comparing the intersection forms on both sides,
	one sees that that the numbers of positive and negative eigenvalues are $b^\pm_2(M) = n^\pm - m$.
	Therefore by multiplicativity under connected sum:
	\begin{align*}
		I(M) I\left(\CP 2\right)^m I\left(\opCP 2\right)^m
		&= I\left(\CP 2\right)^{n^+} I\left(\opCP 2\right)^{n^-}\\
		\implies I(M) &= I\left(\CP 2\right)^{b^+_2(M)} I\left(\opCP 2\right)^{b^-_2(M)}
	\end{align*}
	Now \eqref{eq:simply connected invariant} follows easily.
\end{proof}
\begin{remark}
	Such invariants cannot distinguish the homotopy-inequivalent manifolds $S^2 \times S^2$ and $\Co P^2 \connsum \opCP 2$.
	In particular, these manifolds have different intersection forms,
	but the same signature.
	Effectively, invariants with the above properties are insensitive to this homotopical information.
\end{remark}

\begin{lem}
	The generalised dichromatic invariant is invertible on $\CP 2$ and $\opCP 2$.
\end{lem}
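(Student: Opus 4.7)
The plan is to specialise Definition \ref{definvariant} to the minimal handle decompositions of these two manifolds, reduce the claim to the non-vanishing of two Gauss-sum-like scalars $\tau_F^\pm$, and then pin down the product of those scalars as a strictly positive real.

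Both $\CP 2$ and $\opCP 2$ admit a handle decomposition with a single $0$-handle, a single $2$-handle attached along a $\pm 1$-framed unknot, and a single $4$-handle; hence $h_1 = 0$, $h_2 = 1$, and the special framed link of Definition \ref{definvariant} reduces to a single $\pm 1$-framed unknot coloured by $F\Omega_\mathcal{C}$. A framed unknot of framing $n$ carrying a colour $Y \in \mathcal{D}$ evaluates in the graphical calculus to $\tr_\mathcal{D}(\theta_Y^{n})$, so unpacking the Kirby colour yields
\begin{equation*}
	I_F\!\left(\CP 2\right) \;=\; \frac{\tau_F^+}{\qdim{\Omega_\mathcal{C}}},\qquad I_F\!\left(\opCP 2\right) \;=\; \frac{\tau_F^-}{\qdim{\Omega_\mathcal{C}}},\qquad \tau_F^\pm \;\coloneqq\; \sum_{X \in \Lambda_\mathcal{C}} \qdim{X}\, \tr_\mathcal{D}\!\left(\theta_{FX}^{\pm 1}\right).
\end{equation*}
The claim thus reduces to $\tau_F^+ \neq 0$ and $\tau_F^- \neq 0$.

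I would then compute the product $\tau_F^+\tau_F^-$ in a second way, using Lemma \ref{multiplicativity under disjoint union} to recognise it as the numerator of $I_F\!\left(\CP 2 \connsum \opCP 2\right)$. The Kirby diagram of that manifold is the disjoint union of a $+1$- and a $-1$-framed unknot, and sliding one component over the other via Lemma \ref{lem:generalised sliding with braidings} converts this to a Hopf link with framings $+1$ and $0$. Its evaluation splits according to transparency of the two labels: whenever at least one of the two Hopf components is labelled by a transparent summand of $F\Omega_\mathcal{C}$, the braiding is trivial by the defining property of transparency in Definition \ref{def:symmetric centre} and the contribution collapses to a product of positive dimensions weighted by $\theta^{\pm 1}$, which is $+1$ under the standing hypothesis that $\mathcal{D}$ has trivial twist on every transparent simple. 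Since $I_\mathcal{D}$ is transparent and appears in $F\Omega_\mathcal{C}$ with non-zero coefficient via $F^0\colon I_\mathcal{D} \xrightarrow{\sim} FI_\mathcal{C}$, this produces a strictly positive subsum of $\tau_F^+\tau_F^-$, and the remaining contributions from pairs of non-transparent simples are controlled by the sliding machinery of Section \ref{section invariant} combined with the killing property of Lemma \ref{Killing property}.

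The main obstacle is isolating the positive transparent contribution from the a priori complex-valued non-transparent contributions: the latter could in principle cancel the former. The key leverage point is the triviality-of-twist hypothesis on the symmetric centre of $\mathcal{D}$, which pins the transparent subsum to the real positive quantity $\qdim{(F\Omega_\mathcal{C})'}^2$, and the bookkeeping of twists introduced by the handle slide then has to combine coherently with this. Without that hypothesis, the Gauss-type sum could vanish through phase cancellations among non-transparent simples, and a genuinely more delicate argument would be required.
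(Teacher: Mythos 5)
Your reduction follows the paper's route up to the decisive step: both arguments use multiplicativity to pass to $I_F\left(\CP 2\right)\cdot I_F\left(\opCP 2\right)=I_F\left(\CP 2\connsum\opCP 2\right)$ and evaluate the resulting Hopf link with framings $0$ and $\pm 1$, both components coloured by $F\Omega_\mathcal{C}$ (the paper gets this diagram directly from $\CP 2\connsum\opCP 2\cong S^2\tilde\times S^2$ rather than by a handle slide, but that is cosmetic). Where the two part ways is in how that Hopf link is evaluated. The paper treats the $0$-framed component as an encirclement of the $\pm 1$-framed one and applies the killing property: every non-transparent simple summand of the encircled colour is annihilated outright, so the evaluation collapses to $\qdim{F\Omega_\mathcal{C}}\sum_{X\in\Lambda_\mathcal{C}}\qdim{X}\,\tr\bigl(\theta_{(FX)'}\bigr)$; triviality of the twist on transparent objects and pivotality of $F$ then give $I_F\left(\CP 2\right)\cdot I_F\left(\opCP 2\right)=\qdim{(F\Omega_\mathcal{C})'}/\qdim{\Omega_\mathcal{C}}$, which is nonzero since $F\Omega_\mathcal{C}$ contains the monoidal unit (and this quantity already sits in the normalising denominator of Definition \ref{definvariant}). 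There is thus no competition between a positive "transparent subsum" and possibly cancelling non-transparent terms: the latter vanish individually before any summation is performed.

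Your proposal never reaches that point. You stratify the double sum by transparency, claim the transparent stratum is the strictly positive quantity $\qdim{(F\Omega_\mathcal{C})'}^2$, and then assert the remaining terms are "controlled by the sliding machinery combined with the killing property" --- while yourself conceding that they could in principle cancel the positive part. That concession is exactly the gap: no mechanism is supplied that forbids the cancellation, so the argument does not close. Two further inaccuracies compound this. First, the "at least one label transparent" stratum is not $\qdim{(F\Omega_\mathcal{C})'}^2$; the mixed transparent/non-transparent pairs contribute Gauss-sum-like factors $\sum_X\qdim{X}\tr(\theta_{FX}^{\pm 1})$ that are not obviously positive or even real. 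Second, simple objects of a spherical fusion category need not have positive dimension, so "a product of positive dimensions" is unwarranted at this level of generality. The repair is precisely the paper's move: do not stratify the expanded double sum at all, but recognise one Hopf component as an encircling of the other and invoke the killing property to eliminate the non-transparent summands in one stroke.
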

\begin{proof}
	This is best seen by directly calculating the invariants on these manifolds.
	It is known that $\CP 2 \connsum \opCP 2 \cong S^2 \tilde{\times} S^2$,
	where the latter denotes the total space of a twisted $S^2$-bundle over $S^2$,
	which has the following Kirby diagram \cite[Figure 4.34]{GompfStipsicz}:
	\tikzmath{
		\radiusS2  = 0.54;
		\widthkink = 0.23;
	}
	\begin{equation*}
		S^2 \tilde{\times} S^2 =
		\tikzbo{
			\draw[diagram]
				(0,0) arc [radius=\radiusS2, start angle=0, end angle=170]
				to[out=260, in=270, looseness=1.5] (\widthkink-2*\radiusS2,0);
			\dimcircle{}{}
			\draw[diagram]
				(0,0) arc [radius=\radiusS2, start angle=360, end angle=190]
				to[out=100, in=90, looseness=1.5] (\widthkink-2*\radiusS2,0);
		}	
	\end{equation*}
	To the show the invertibility of both $I\left(\CP 2\right)$ and $I\left(\opCP 2\right)$,
	calculate the following:
	\begin{align*}
		I\left(\CP2\right) \cdot I\left(\opCP2\right)
		&= I\left(\CP2 \connsum \opCP2\right)\\
		&= \frac{\left<L_{S^2 \tilde{\times} S^2} \left(\Omega_{\mathcal{D}},F\Omega_{\mathcal{C}}\right) \right>}{\qdim{\Omega_{\mathcal{C}}}^{h_2-h_1} \left(\qdim{\Omega_{\mathcal{D}}}\qdim{\left(F\Omega_{\mathcal{C}}\right)'}\right)^{h_1}}
		\intertext{The killing property \ref{Killing property} and the handle numbers $h_1 = 0$, $h_2 = 2$ give:}
		&= \frac{\qdim{F\Omega_{\mathcal{C}}} \sum_{X \in \Lambda_\mathcal{C}} \tr \left(\theta_{(FX)'}\right)}{\qdim{\Omega_{\mathcal{C}}}^2}
		\intertext{Recall that the twist is required to be trivial on transparent objects in $\mathcal{D}$.
		Furthermore, $F$ is pivotal and preserves quantum dimensions.}
		&= \frac{\qdim{(F\Omega_{\mathcal{C}})'}}{\qdim{\Omega_{\mathcal{C}}}}\numberthis
	\end{align*}
	Since $F\Omega_{\mathcal{C}}$ contains at least the monoidal unit,
	the result cannot be 0.
\end{proof}
\begin{corollary}
	Lemma \ref{simply-connected} applies to the generalised dichromatic invariant.
\end{corollary}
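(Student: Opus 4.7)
The plan is to observe that the corollary is an immediate combination of two facts established earlier in the section, so essentially no new work is required — only a verification that the hypotheses of Lemma \ref{simply-connected} hold for $I_F$.

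First I would verify multiplicativity under connected sum of closed oriented four-manifolds. This follows from Lemma \ref{multiplicativity under disjoint union} together with the standard fact that a Kirby/Akbulut diagram of $M^1 \connsum M^2$ is obtained by taking the disjoint union of the Akbulut diagrams of $M^1$ and $M^2$ (after absorbing the two 0-handles into a single 0-handle, which does not alter the link). Thus
\[
I_F(M^1 \connsum M^2) = I_F(L_1 \sqcup L_2) = I_F(L_1)\, I_F(L_2) = I_F(M^1)\, I_F(M^2),
\]
which is stronger than what Lemma \ref{simply-connected} requires (multiplicativity only on simply-connected manifolds).

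Second, invertibility of $I_F(\CP 2)$ and $I_F(\opCP 2)$ is exactly the content of the lemma immediately preceding the corollary: the explicit calculation there shows that $I_F(\CP 2)\, I_F(\opCP 2)$ equals $\qdim{(F\Omega_\mathcal{C})'}/\qdim{\Omega_\mathcal{C}}$, which is nonzero because $F\Omega_\mathcal{C}$ contains the monoidal unit $I_\mathcal{D}$ as a transparent summand with nonzero dimension, and $\qdim{\Omega_\mathcal{C}}$ is a positive real number. Since the product of two complex numbers is nonzero only if both factors are nonzero, each of $I_F(\CP 2)$ and $I_F(\opCP 2)$ is an invertible element of $\Co$.

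Both hypotheses of Lemma \ref{simply-connected} are therefore satisfied, and the corollary follows. There is no real obstacle here; the only mild subtlety is the identification of connected sum of manifolds with disjoint union of Akbulut diagrams, but this is a standard fact about handle decompositions and has already been invoked implicitly at the start of Subsection \ref{ssection:sc manifolds and multiplicativity}.
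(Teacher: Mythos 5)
Your proof is correct and follows essentially the same route as the paper: the paper's own proof of this corollary simply cites the preceding lemma for the invertibility of $I\left(\CP 2\right) \cdot I\left(\opCP 2\right)$ and Lemma \ref{multiplicativity under disjoint union} for multiplicativity under connected sum. Your additional remarks (that a nonzero product forces each factor to be nonzero, and that connected sum of manifolds corresponds to disjoint union of Akbulut diagrams) merely make explicit what the paper leaves implicit.
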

\begin{proof}
	${I\left(\CP 2\right) \cdot I\left(\opCP 2\right)}$ is invertible due to the previous lemma.
	Multiplicativity under connected sum has already been shown in Lemma \ref{multiplicativity under disjoint union}.
\end{proof}
It remains to calculate the invariants of $\CP 2$ and $\opCP 2$ in order to be able to give concrete values for simply-connected manifolds.
$\CP 2$ can be composed of a 0-handle, a 2-handle and a 4-handle.
A link diagram for it is given by an unknotted circle with framing $+1$,
denoted by $L_{+1}$.
The value of the invariant is therefore:
\begin{align}
	I\left(\CP 2\right)
	&= \frac{\left<L_{+1} \left(\Omega_{\mathcal{D}},F\Omega_{\mathcal{C}}\right) \right>}{\qdim{\Omega_{\mathcal{C}}}^{h_2-h_1} \left(\qdim{\Omega_{\mathcal{D}}}\qdim{\left(F\Omega_{\mathcal{C}}\right)'}\right)^{h_1}}\nonumber\\
	&= \frac{\sum_{X \in \Lambda_\mathcal{C}} \tr \left(\theta_{FX}\right)}{\qdim{\Omega_{\mathcal{C}}}}
	\label{eq:CP2}
	\intertext{Analogously:}
	I\left(\opCP 2\right)
	&= \frac{\sum_{X \in \Lambda_\mathcal{C}} \tr \left(\theta^{-1}_{FX}\right)}{\qdim{\Omega_{\mathcal{C}}}}
\end{align}
For many cases of $F$, more concrete values can be calculated.
This is done in Section \ref{sssection:sc manifolds}.

\subsection{Petit's dichromatic invariant and Broda's invariants}
\label{petit broda}
Broda defined two invariants of four-manifolds using the category of tilting modules for $U_qsl(2)$ at a root of unity \cite{Broda,Roberts:1995SkeinTheoryTV}.
The original invariant, called here the \textbf{Broda invariant},
labelled both 1- and 2-handles with simple objects in this category (the ``spins''),
whereas the \textbf{refined Broda invariant} labelled 2-handles with just the integer spins.
The Broda invariants were investigated by Roberts \cite{Roberts:1995SkeinTheoryTV, Roberts1997:RefinedInvariants},
who showed that the Broda invariant depends on the signature of the four-manifold whereas the refined Broda invariant detects also the first Betti number with $\Z_2$ coefficients,
and is sensitive to the second Stiefel-Whitney class (which decides whether the manifold admits a spin structure).

Generalising Broda's constructions to other ribbon fusion categories leads to the following two classes of examples,
which will turn out to be special cases of the generalised dichromatic invariant.
\begin{exam}
	\label{identity}
	Petit recovers \cite[Remark 4.4]{Petit:dichromatic},
	up to a factor depending on the Euler characteristic,
	a \textbf{generalised Broda invariant} for a (not necessarily modular) ribbon fusion category $\mathcal D$ satisfying the conditions of Definition \ref{definvariant}.
	Petit shows that this invariant depends only on the signature (and Euler characteristic) of the four-manifold.
	
	This invariant will turn out to be the generalised dichromatic invariant associated to the identity functor $\eins_\mathcal{D}\colon \mathcal{D} \to \mathcal{D}$,
	as can be seen from the next, more general example.
\end{exam}
\begin{exam}
	\label{dichromatic}
	The refined Broda invariant,
	which will be discussed again in Section \ref{sssec:refined Broda invariant},
	can be generalised to arbitrary ribbon fusion subcategories.

	Let $\mathcal{C}$ and $\mathcal{D}$ be ribbon fusion categories,
	satisfying the conditions of Definition \ref{definvariant}.
	(Still, neither category is required to be modular.)
	For a full ribbon inclusion functor $F\colon \mathcal{C} \hookrightarrow \mathcal{D}$,
	\textbf{Petit's dichromatic invariant} $I_0$ \cite[(4.4)]{Petit:dichromatic} is recovered,
	again up to a factor depending on the Euler characteristic $\chi(M)$,
	which will be calculated in the following.
	
	The notation in \cite{Petit:dichromatic} is subtly different:
	$\mathcal{C}'$ denotes an arbitrary subcategory there,
	not necessarily the symmetric centre.
	Also, the notation for categorical dimensions is different from this presentation.
	Redefining the symbols from \cite{Petit:dichromatic} in the notation established here gives $\Delta_\mathcal{C} \coloneqq \qdim{\Omega_\mathcal{C}}$ and $\Delta_{\mathcal{D},\mathcal{C}}'' \coloneqq \qdim{\left(F\Omega_\mathcal{C}\right)'}$.
	
	The nullity of the linking matrix of the link diagram has to be introduced,
	but since $M$ is closed, it equals $h_3$, the number of 3-handles.
	Petit's invariant is then in the present notation:
	\begin{equation}
		I_0(L) \coloneqq \frac{\left<L\left(\Omega_\mathcal{D}, \Omega_\mathcal{C}\right)\right>}{\qdim{\Omega_\mathcal{C}}^{h_3}\left(\qdim{\Omega_\mathcal{D}} \qdim{(F\Omega_\mathcal{C})'}\right)^{\frac{h_1+h_2-h_3}{2}}}
	\end{equation}
	Note that the numerators of $I_F$ and $I_0$ do not differ, but the normalisations do.
	To compare the normalisation of invariants,
	their ratio is calculated using a handle decomposition with exactly one 0-handle and 4-handle.
	
	The ratio of invariants is then
	\begin{align}
		\frac{I_F(M)}{I_0(M)} & = \frac{\qdim{\Omega_\mathcal{C}}^{h_3} \cdot \left(\qdim{\Omega_{\mathcal{D}}}\qdim{\left(F\Omega_{\mathcal{C}}\right)'}\right)^{\frac{h_1+h_2-h_3}{2}} }{\qdim{\Omega_\mathcal{C}}^{h_2-h_1} \cdot \left(\qdim{\Omega_{\mathcal{D}}}\qdim{\left(F\Omega_{\mathcal{C}}\right)'}\right)^{h_1}}\nonumber\\
		& = \left(\frac{\sqrt{\qdim{\Omega_{\mathcal{D}}}\qdim{\left(F\Omega_{\mathcal{C}}\right)'}}}{\qdim{\Omega_\mathcal{C}}}\right)^{\chi(M)-2}
		\label{eq:petit euler factor}
	\end{align}
	\label{petit}
	The same calculation can be used to show that the refined Broda invariant from \cite{Broda} is Petit's invariant $I_0$ for the subcategory of integer spins in the category of tilting modules of $U_qsl(2)$.
\end{exam}
\begin{rema}
	Whenever a full inclusion into a ribbon category is encountered, it will be assumed that the subcategory inherits braiding and ribbon structures from the bigger category.
	Also, it will be assumed that the canonical pivotal structure is chosen on both sides,
	which is then automatically preserved.
\end{rema}

\begin{rema}
	Petit called his invariant ``dichromatic'' since the special framed link arising from the handle decomposition is labelled with two different Kirby colours.
	The invariant presented here uses two different colours as well,
	so it seems appropriate to keep the name ``dichromatic'',
	but to point out that it is somewhat more general.
\end{rema}

\section{Simplification of the invariant}
\label{sec:simplification}

Here it is shown that a general argument allows the generalised dichromatic invariant to be simplified in many cases.
\begin{pro}
	\label{invariant reduces}
	Let $\mathcal{A} \xrightarrow{F} \mathcal{B} \xrightarrow{G} \mathcal{C} \xrightarrow{H} \mathcal{D}$ be a chain of pivotal functors on spherical fusion categories.
	Let furthermore $\mathcal{C} \xrightarrow{H} \mathcal{D}$ be ribbon,
	and let the symmetric centres (Definition \ref{def:symmetric centre}) $\mathcal{C}'$ and $\mathcal{D}'$ have trivial twist.
	Assume these three conditions on $F$ and $H$, for some $m, n \in \Co$:
	\begin{align*}
		F\Omega_\mathcal{A} &= n \Omega_\mathcal{B}\\
		H\Omega_\mathcal{C} &= m \Omega_\mathcal{D}\\
		H\left(\left(G\Omega_\mathcal{B}\right)'\right) &= \left(HG\Omega_\mathcal{B}\right)'
	\end{align*}
	Then $I_{HGF} = I_G$.
\end{pro}
	\begin{proof}
		Note that since $F$ and $H$ are pivotal,
		the values of $m$ and $n$ can be inferred by taking the dimensions on each side of the first two conditions:
		\begin{align*}
			\qdim{\Omega_\mathcal{A}} &= n \cdot \qdim{\Omega_\mathcal{B}}\\
			\qdim{\Omega_\mathcal{D}} &= m^{-1} \cdot \qdim{\Omega_\mathcal{C}}
			\intertext{Let now $L$ be a special framed link for the four-manifold $M$.}
			\left<L\left(\Omega_\mathcal{D}, HGF\Omega_\mathcal{A}\right)\right> &= \left<L\left(H\Omega_\mathcal{C}, HG\Omega_\mathcal{B}\right)\right> \cdot m^{-h_1} n^{h_2}\\
			&= \left<L\left(\Omega_\mathcal{C}, G\Omega_\mathcal{B}\right)\right> \cdot m^{-h_1} n^{h_2}
			\intertext{The first two assumptions were inserted,
			then it was used that $H$ is ribbon,
			to arrive at the enumerator of $I_F(M)$ up to the factors of $m$ and $n$.
			Using the first and the third assumption,
			the missing part in the denominator of $I_F(M)$ can be calculated:}
			\qdim{\left(HGF\Omega_\mathcal{A}\right)'} = n \cdot \qdim{\left(HG\Omega_\mathcal{B}\right)'} &= n \cdot \qdim{H \left(G\Omega_\mathcal{B}\right)'} = n \cdot \qdim{\left(G\Omega_\mathcal{B}\right)'}
		\end{align*}
		It is easy to see now that all factors of $n$ and $m$ cancel.
	\end{proof}

In the following, it is shown that there is an abundance of functors satisfying these conditions,
allowing a simplification of the generalised dichromatic invariant in many cases.
Examples include cases where either $H$ or $F$ is the identity functor.

\subsection{Simplification for unitary fusion categories}

One case, in which the generalised dichromatic invariant simplifies to Petit's dichromatic invariant is the case of unitary fusion categories,
which are certain non-degenerate $\Co$-linear $\dagger$-categories.
The unitarity condition is important in mathematical physics, and many examples are known.
The theory of unitary fusion categories is well developed, and many important properties are found in the literature, e.g. \cite{OnBraidedFusionCats}.
Instead of giving a self-contained introduction, the relevant known facts are listed.
\begin{itemize}
	\item A fusion $\dagger$-category with a rigid structure has a canonical spherical structure (see \cite[Lemma 7.5]{Selinger:graphical}) defined by the $\dagger$-structure and the chosen duals.
	\item A unitary functor, or $\dagger$-functor, is a functor that preserves the $\dagger$-structure.
	A strong monoidal unitary functor is pivotal, so it preserves the canonical spherical structure.
\end{itemize}

\begin{defn}
	A strong monoidal functor of fusion categories $F\colon \mathcal{C} \to \mathcal{D}$ is called \textbf{dominant} if for any object $Y \in \ob \mathcal{D}$ there exists an object $X \in \ob \mathcal{C}$ such that $Y$ is a subobject of $FX$.
	In \cite{OnFusionCategories} these are also known as ``surjective functors''.
\end{defn}
\begin{lem}
	\label{FPdimensions reduce}
	Let $F\colon \mathcal{C} \to \mathcal{D}$ be a dominant unitary functor of unitary fusion categories.
	Let furthermore both categories have the canonical spherical structure coming from the unitary structure.
	Then the following holds:
	\begin{align}
		F\Omega_\mathcal{C} &= \frac{\qdim{\Omega_\mathcal{C}}}{\qdim{\Omega_\mathcal{D}}} \Omega_\mathcal{D}
	\end{align}
\end{lem}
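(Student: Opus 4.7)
The plan is to decompose $F\Omega_{\mathcal{C}}$ in the basis of simples of $\mathcal{D}$ and show that the coefficient of each simple $Y \in \Lambda_{\mathcal{D}}$ equals exactly $\frac{\qdim{\Omega_{\mathcal{C}}}}{\qdim{\Omega_{\mathcal{D}}}}\qdim{Y}$, from which the claim follows immediately. Expanding $\Omega_{\mathcal{C}}$ and using semisimplicity of $\mathcal{D}$:
\[
[F\Omega_{\mathcal{C}} : Y] \;=\; \sum_{X \in \Lambda_{\mathcal{C}}} \qdim{X} \cdot [FX : Y] \;=\; \sum_{X \in \Lambda_{\mathcal{C}}} \qdim{X} \cdot \dim \mathcal{D}(Y, FX).
\]

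The first step is to pass to an adjoint of $F$. Since $\mathcal{C}$ and $\mathcal{D}$ are finitely semisimple $\Co$-linear categories, the linear functor $F$ admits a two-sided adjoint $R\colon \mathcal{D} \to \mathcal{C}$. By Frobenius reciprocity, $\dim\mathcal{D}(Y, FX) = \dim\mathcal{C}(RY, X) = [RY : X]$, so the multiplicity rewrites compactly as
\[
[F\Omega_{\mathcal{C}} : Y] \;=\; \sum_{X \in \Lambda_{\mathcal{C}}} \qdim{X}\, [RY : X] \;=\; \qdim{RY}.
\]

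The substantive step, where the hypotheses really pay off, is to identify $\qdim{RY}$. For a dominant tensor functor between fusion categories, the right adjoint satisfies $\operatorname{FPdim}(RY) = \frac{\operatorname{FPdim}(\mathcal{C})}{\operatorname{FPdim}(\mathcal{D})}\operatorname{FPdim}(Y)$, a formula established by Etingof, Nikshych and Ostrik in \cite{OnFusionCategories}. Unitarity, together with the choice of canonical spherical structures on both sides, guarantees that the categorical dimension coincides with the Frobenius--Perron dimension on every simple object (and hence on all of the Grothendieck ring by linearity); in particular $\operatorname{FPdim}(\mathcal{C}) = \qdim{\Omega_{\mathcal{C}}}$ and $\operatorname{FPdim}(\mathcal{D}) = \qdim{\Omega_{\mathcal{D}}}$. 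Combining these gives $\qdim{RY} = \frac{\qdim{\Omega_{\mathcal{C}}}}{\qdim{\Omega_{\mathcal{D}}}}\qdim{Y}$.

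Summing over $Y$ then completes the calculation:
\[
F\Omega_{\mathcal{C}} \;=\; \sum_{Y \in \Lambda_{\mathcal{D}}} [F\Omega_{\mathcal{C}} : Y]\, Y \;=\; \frac{\qdim{\Omega_{\mathcal{C}}}}{\qdim{\Omega_{\mathcal{D}}}} \sum_{Y \in \Lambda_{\mathcal{D}}} \qdim{Y}\, Y \;=\; \frac{\qdim{\Omega_{\mathcal{C}}}}{\qdim{\Omega_{\mathcal{D}}}}\Omega_{\mathcal{D}}.
\]
The main obstacle is precisely the identification of $\qdim{RY}$: it genuinely uses both the unitarity hypothesis (without which $\qdim$ and $\operatorname{FPdim}$ need not agree, and there would be no reason for the coefficients to be proportional to $\qdim{Y}$) and the dominance of $F$ (without which the Etingof--Nikshych--Ostrik formula for the right adjoint is unavailable).
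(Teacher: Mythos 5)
Your proof is correct and takes essentially the same route as the paper: the paper's own argument simply cites the Frobenius--Perron analogue of the claimed identity for dominant functors (\cite[Proposition 8.8]{OnFusionCategories}) together with the coincidence of categorical and Frobenius--Perron dimensions in unitary fusion categories with the canonical spherical structure. Your detour through the adjoint $R$ and Frobenius reciprocity is just an unpacking of that cited proposition, since the identity $[F\Omega_\mathcal{C}:Y]=\operatorname{FPdim}(RY)$ is precisely how the regular-object formulation and the adjoint-dimension formulation translate into one another.
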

	\begin{proof}
		An analogous equation holds for the Frobenius-Perron dimensions \cite[Proposition 8.8]{OnFusionCategories}.
		In unitary fusion categories with the canonical spherical structure Frobenius-Perron dimensions and categorical dimensions coincide.
	\end{proof}
\begin{definition}
	For a strong monoidal functor $F$ of fusion categories,
	define the image category $\Ima F$ \cite[Definition 2.1]{OnBraidedFusionCats}.
	Its objects are all objects of $\mathcal{D}$ that are isomorphic to a subobject of $FX$,
	where $X$ is any object in $\mathcal{C}$.
	The morphisms of $\Ima F$ are chosen such that it is a full fusion subcategory of $\mathcal{D}$.
\end{definition}
\begin{lem}
	Let $F\colon \mathcal{C} \to \mathcal{D}$ be a strong monoidal functor of fusion categories.
	Then $F = F_2 \circ F_1$,
	where $F_1\colon \mathcal{C} \to \Ima F$ is a dominant functor,
	and $F_2\colon \Ima F \to \mathcal{D}$ the full inclusion from the previous definition.
\end{lem}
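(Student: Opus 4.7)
The plan is to define $F_1$ as the corestriction of $F$ onto its image and $F_2$ as the full inclusion, then verify that the two factors have the claimed properties. The work splits naturally into checking that $\Ima F$ is actually a fusion subcategory, defining $F_1$, and defining $F_2$.

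First I would verify that $\Ima F$ is genuinely a fusion subcategory of $\mathcal{D}$. Closure under subobjects and direct sums is built into the definition, so it suffices to check that the monoidal structure restricts. Since every $FX$ is trivially a subobject of itself, each such $FX$ lies in $\Ima F$; in particular $I_\mathcal{D} \in \Ima F$ because $F^0$ exhibits it as isomorphic to $FI_\mathcal{C}$. If $Y$ is a subobject of $FX$ and $Y'$ a subobject of $FX'$, then $Y \otimes Y'$ is a subobject of $FX \otimes FX' \cong F(X \otimes X')$ via $F^2_{X,X'}$, so $\Ima F$ is closed under the tensor product. Closure under duals uses the canonical isomorphism $F(X^*) \cong (FX)^*$ that every strong monoidal functor carries. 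Hence $\Ima F$ inherits a fusion category structure from $\mathcal{D}$.

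Next I would define $F_1\colon \mathcal{C} \to \Ima F$ to agree with $F$ on objects and morphisms; this lands in $\Ima F$ by the previous paragraph. The coherences $F^2$ and $F^0$ of $F$ have components that are morphisms in the full subcategory $\Ima F$, so they serve unchanged as coherences of $F_1$, making $F_1$ strong monoidal. Dominance of $F_1$ is immediate from the definition of $\Ima F$: every $Y \in \Ima F$ is by construction a subobject of $FX = F_1(X)$ for some $X \in \mathcal{C}$. Finally, define $F_2\colon \Ima F \to \mathcal{D}$ as the full inclusion; being an inclusion of a sub-fusion-category whose structure was inherited from $\mathcal{D}$, it is strict monoidal with trivial coherences. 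The composite $F_2 \circ F_1$ agrees with $F$ on objects, morphisms, and coherences by construction.

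The proof is essentially bookkeeping, and no step presents a genuine obstacle. The only place that deserves a moment's thought is the closure of $\Ima F$ under tensor product and duals, which depends crucially on $F^2$ and on the canonical isomorphism $F(X^*) \cong (FX)^*$; without the hypothesis that $F$ be strong monoidal (rather than merely lax), these inclusions would not go through and the image would fail to be a fusion subcategory.
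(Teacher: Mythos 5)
Your proposal is correct and follows the same route as the paper: corestrict $F$ to its image to get the dominant factor $F_1$, and take $F_2$ to be the full inclusion. The paper's proof is a one-line appeal to the construction of $\Ima F$; you have simply filled in the bookkeeping (closure of $\Ima F$ under tensor, duals and unit, and the transfer of the coherences $F^2$, $F^0$ to $F_1$), all of which is accurate.
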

\begin{proof}
	By construction of the image category, $F$ factors through it,
	and $F$ restricted to $\Ima F$ is dominant.
\end{proof}

\begin{cor}
	\label{unitary is dichromatic}
	Let $F\colon \mathcal{C} \to \mathcal{D}$ be a strong monoidal unitary functor of unitary fusion categories,
	and again $\mathcal{D}$ ribbon such that its symmetric centre $\mathcal{D}'$ has trivial twist.
	Then $I_F = I_{F_2}$, and so is equal to Petit's dichromatic invariant $I_0$ for the inclusion $F_2\colon \Ima F \hookrightarrow \mathcal{D}$,
	multiplied by the Euler characteristic factor from \eqref{eq:petit euler factor}.
\end{cor}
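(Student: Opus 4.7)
The plan is to apply Proposition \ref{invariant reduces} to the factorisation $F = F_2 \circ F_1$ provided by the preceding lemma, with the trivial identity functor on the third slot. Concretely, I set $\mathcal{A} = \mathcal{C}$, $\mathcal{B} = \Ima F$, and identify both of the remaining categories in the chain with $\mathcal{D}$, so the chain reads
\[
\mathcal{C} \xrightarrow{F_1} \Ima F \xrightarrow{F_2} \mathcal{D} \xrightarrow{\eins_\mathcal{D}} \mathcal{D}.
\]
Then the composite $\eins_\mathcal{D} \circ F_2 \circ F_1$ is exactly $F$, and the middle functor is $F_2$, so the conclusion of the proposition reads $I_F = I_{F_2}$, which is the first half of the claim. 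The second half then follows immediately from Example \ref{dichromatic}, which identifies $I_{F_2}$ with Petit's $I_0$ up to the Euler-characteristic prefactor of \eqref{eq:petit euler factor}.

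The core step is verifying the three hypotheses of Proposition \ref{invariant reduces}. First, $\Ima F$ is a full fusion subcategory of the unitary ribbon fusion category $\mathcal{D}$, so it inherits a unitary structure and hence the canonical spherical structure; the restricted functor $F_1 \colon \mathcal{C} \to \Ima F$ is again strong monoidal and unitary since $F$ is. By construction $F_1$ is dominant, so Lemma \ref{FPdimensions reduce} applies and gives
\[
F_1 \Omega_\mathcal{C} = \frac{\qdim{\Omega_\mathcal{C}}}{\qdim{\Omega_{\Ima F}}}\, \Omega_{\Ima F},
\]
which is exactly the first condition with $n = \qdim{\Omega_\mathcal{C}} / \qdim{\Omega_{\Ima F}}$. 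The second condition, $H\Omega_{\mathcal{D}} = m\,\Omega_\mathcal{D}$, is trivially satisfied with $m=1$ since $H = \eins_\mathcal{D}$. The third condition, $H((F_2 \Omega_{\Ima F})') = (H F_2 \Omega_{\Ima F})'$, reduces to the tautology $(F_2\Omega_{\Ima F})' = (F_2\Omega_{\Ima F})'$.

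The remaining side conditions of the proposition are the ribbon/trivial-twist hypotheses on the last functor $H$ and its source and target symmetric centres. Since $H = \eins_\mathcal{D}$ with $\mathcal{D}$ ribbon and $\mathcal{D}'$ assumed to have trivial twist, both are satisfied. Note that no analogous trivial-twist hypothesis is required for $(\Ima F)'$: the proposition only asks for the symmetric centres of the categories joined by $H$.

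I expect the only mildly subtle point in this plan to be the verification that $F_1$ genuinely lands in a unitary fusion category with the canonical spherical structure, so that Lemma \ref{FPdimensions reduce} is applicable; but this is essentially automatic from the fact that full fusion subcategories of a unitary fusion category inherit all of the relevant structure. Once this is observed, the proof collapses into a direct invocation of Proposition \ref{invariant reduces} followed by Example \ref{dichromatic}.
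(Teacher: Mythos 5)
Your proof is correct and follows essentially the same route as the paper: decompose $F = F_2 \circ F_1$ into a dominant functor followed by a full inclusion, verify the hypotheses of Proposition \ref{invariant reduces} (with the dominant part handled by Lemma \ref{FPdimensions reduce}), and conclude via Example \ref{dichromatic}. Your version merely spells out the instantiation of the chain and the three conditions more explicitly than the paper's terse argument.
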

	\begin{proof}
		Use the previous lemma to decompose $F$ into a dominant functor and a full inclusion.
		By the lemma before, the dominant part satisfies the conditions of Proposition \ref{invariant reduces},
		so $I_F$ is reduced to the invariant for the full inclusion.
		The fusion subcategory inherits the pivotal structure from $\mathcal{D}$.
		An invariant from a full inclusion is a case of Petit's dichromatic invariant, as explained in Example \ref{dichromatic}.
	\end{proof}

\subsection{Modularisation}
This subsection considers examples that will be compared to the Crane-Yetter invariant in Section \ref{examples invariant}.
\begin{defn}
	A ribbon fusion category $\mathcal{D}$ is called \textbf{modularisable} if its symmetric centre $\mathcal{D}'$ has trivial twist and dimensions in $\N$.
	For modularisable categories,
	there exists a faithful functor $H\colon \mathcal{D} \to \widetilde{\mathcal{D}}$ with $\widetilde{\mathcal{D}}$ modular,
	called the \textbf{modularisation} (also ``deequivariantisation'') of $\mathcal{D}$.
	Some standard references are \cite{Bruguieres:Modularisations} or \cite{Mueger:2000ModularClosure}.
\end{defn}
\begin{rems}
	\begin{itemize}
		\item $H$ is usually not full.
		\item The name ``deequivariantisation'' comes from thinking of $\mathcal{D}'$ as the representations of some finite group.
		$H$ restricted to $\mathcal{D}'$ then plays the role of a fibre functor, while not disturbing the nontransparent objects.
		$\widetilde{\mathcal{D}}$ has the same objects as $\mathcal{D}$, but additional isomorphisms from any transparent object to a direct sum of $I$s.
		\item For any symmetric fusion category without twist or pivotal structure,
		one can choose the trivial twist $\theta_X = 1_X$.
		With the corresponding pivotal structure, the categorical dimensions of objects are then in $\Z$.
		Alternatively, one can choose a pivotal structure with categorical dimensions in $\N$, but then the twist will usually not be trivial.
		To adhere to the conditions in Definition \ref{definvariant}, the trivial twist will always be chosen for symmetric fusion categories.
	\end{itemize}
\end{rems}

\begin{pro}
	\label{modularisation}
	Let $F\colon \mathcal{C} \to \mathcal{D}$ be pivotal with $\mathcal{C}$ spherical fusion and $\mathcal{D}$ modularisable.
	Such a functor satisfies the conditions of our invariant in Definition \ref{definvariant}.
	Let $H\colon \mathcal{D} \to \widetilde{\mathcal{D}}$ be the modularisation functor.
	Then $I_F = I_{H \circ F}$.
\end{pro}
	\begin{proof}
		In \cite[Proposition 3.7]{Bruguieres:Modularisations} it is stated that $H\Omega_\mathcal{D} \cong \qdim{\Omega_{\mathcal{D}'}} \Omega_{\widetilde{\mathcal{D}}}$.
		It is easy to check that ${\widetilde{\mathcal{D}}\left(I, H\left(X'\right)\right) = \widetilde{\mathcal{D}}\left(I, (HX)'\right)}$ follows from the original definition, and therefore $H\left(\left(F\Omega_\mathcal{C}\right)'\right) = \left(HF\Omega_\mathcal{C}\right)'$ since both sides are multiples of $I$.
		Thus, Proposition \ref{invariant reduces} can be applied.
	\end{proof}

Intuitively, the transparent objects on the 1-handles can be removed and don't contribute to the invariant.
The modularisation $H$ makes this explicit by sending all objects in $\mathcal{D}'$ to multiples of $I$.

One can make use of this fact by noting that many generalised dichromatic invariants are equal to an invariant arising from a functor into a modular category.
It is necessary to demand all dimensions of simple objects in $\mathcal{D}'$ to be positive, but this is the sole restriction.
In Section \ref{SSM}, it will be shown that invariants with a modular target category can be expressed in terms of a state sum and therefore extend to topological quantum field theories.

\begin{rema}
	The modularisation $H$ is not a full inclusion if the source $\mathcal{D}$ is not modular (and the identity otherwise).
	Therefore, the composition $H \circ F$ will usually not be full either, even if $F$ is.
	However, in the unitary case,
	the following corollary is helpful.
\end{rema}
\begin{cor}
	\label{unitary is CY}
	Let $F\colon \mathcal{C} \to \mathcal{D}$ be a strong monoidal unitary functor of unitary fusion categories.
	Let also $\mathcal{D}$ be modularisable,
	and $H\colon \mathcal{D} \to \widetilde{\mathcal{D}}$ the modularisation functor.
	Then there is a full inclusion $G\colon \Ima (H \circ F) \hookrightarrow \widetilde{D}$,
	and $I_F = I_G$.
\end{cor}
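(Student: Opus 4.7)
The plan is to reduce the statement to a two-step application of results already established in the excerpt. First I would invoke Proposition~\ref{modularisation} to rewrite $I_F = I_{H \circ F}$, using that $F$ is pivotal (unitary strong monoidal functors between unitary fusion categories with canonical spherical structure are automatically pivotal) and that $\mathcal{D}$ is modularisable by hypothesis. This immediately reduces the problem to understanding the invariant associated to the composite functor $H \circ F\colon \mathcal{C} \to \widetilde{\mathcal{D}}$ whose target is modular.

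Next I would apply Corollary~\ref{unitary is dichromatic} to the composite $H \circ F$. For this to work, I need $H \circ F$ to be a strong monoidal unitary functor of unitary fusion categories, which in turn requires that $\widetilde{\mathcal{D}}$ inherits a unitary (hence canonical spherical) structure from $\mathcal{D}$ and that the modularisation functor $H$ is unitary. This is the standard content of the unitary version of Bruguières' modularisation theorem (cf.\ the treatment in \cite{Bruguieres:Modularisations}). Once $H \circ F$ is established as a unitary functor into a unitary fusion category whose symmetric centre is trivial (since $\widetilde{\mathcal{D}}$ is modular, its symmetric centre is just $I$ with trivial twist), Corollary~\ref{unitary is dichromatic} applies verbatim: factor $H \circ F = G \circ (H \circ F)_1$ through the image category, where $(H \circ F)_1$ is dominant and $G\colon \Ima(H \circ F) \hookrightarrow \widetilde{\mathcal{D}}$ is the full inclusion; then Lemma~\ref{FPdimensions reduce} gives $(H \circ F)_1 \Omega_\mathcal{C} = n\, \Omega_{\Ima(H \circ F)}$ for the appropriate scalar, and Proposition~\ref{invariant reduces} (with the middle functor being the identity on $\Ima(H \circ F)$) yields $I_{H \circ F} = I_G$.

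Combining the two steps gives $I_F = I_{H \circ F} = I_G$, which is the claim. The full inclusion $G$ is automatically ribbon, since $\Ima(H \circ F)$ inherits the ribbon structure from the modular category $\widetilde{\mathcal{D}}$, and its symmetric centre, being contained in that of $\widetilde{\mathcal{D}}$, has trivial twist, so $G$ satisfies the hypotheses of Definition~\ref{definvariant}.

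The main obstacle is the verification that $H \circ F$ genuinely lands within the unitary framework required by Corollary~\ref{unitary is dichromatic}; more precisely, that the modularisation of a unitary modularisable category carries a canonical unitary structure compatible with $H$. Everything else is a routine chase through Propositions~\ref{invariant reduces} and~\ref{modularisation}, since the hypothesis that the symmetric centre has trivial twist (needed to apply Proposition~\ref{invariant reduces} with $H$ being ribbon) is automatic for modularisable categories by the very definition.
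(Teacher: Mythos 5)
Your proof is correct and follows exactly the paper's own two-step argument: apply Proposition~\ref{modularisation} to get $I_F = I_{H\circ F}$, then apply Corollary~\ref{unitary is dichromatic} to the composite $H\circ F$. You are in fact somewhat more careful than the paper, which silently assumes the point you flag explicitly, namely that the modularisation of a unitary category is again unitary so that Corollary~\ref{unitary is dichromatic} is applicable to $H\circ F$.
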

\begin{proof}
	From Proposition \ref{modularisation}, $I_{H \circ F} = I_F$, where $H$ is the modularisation.
	Therefore Corollary \ref{unitary is dichromatic} can be applied to $H \circ F$.
\end{proof}

\subsection{Cutting strands}
\label{sec:cutting strands}

If the target category $\mathcal{D}$ of the pivotal functor is modular, each 1-handle is labelled by $\Omega_\mathcal{D}$.
The strands of the 2-handles going through it can be cut, using Lemma \ref{cutting strands}.
This is the algebraic analogue of reverting from Akbulut's dotted handle notation to Kirby's original notation for handle decompositions where each 1-handle is represented by a pair of $D^3$s.
There is now a simpler definition of the generalised dichromatic invariant, which is obtained by cutting the strands through the 1-handles.
\begin{defn}
	\label{kirby direct labelling}
	Let $K$ be a Kirby diagram for a handle decomposition of a smooth, closed four-manifold $M$.
	Choose orientations on the $S^1$ of the attaching boundary of each 2-handle,
	and a choice of + and $-$ signs on the respective 3-balls for each 1-handle.
	\begin{enumerate}
		\item An \textbf{object labelling} is a map $X$ from the set of 2-handles to the set of simple objects in $\mathcal{C}$.
		The object assigned to the $i$-th 2-handle is written $X_i$.
		\item Now, for every 1-handle with 2-handles $i \in \{1, 2, \dots N\}$ entering or leaving the ball labelled with +,
		dual bases for the morphism spaces $\mathcal{D}(FX_1 \otimes FX_2 \otimes \cdots \otimes FX_N, I)$ and $\mathcal{D}\left(I, FX_1 \otimes FX_2 \otimes \cdots \otimes FX_N\right)$ are chosen.
		(The objects on leaving 2-handles are dualised.)
		
		A \textbf{morphism labelling} for a given object labelling is a choice of basis morphism for the +-ball of every 1-handle,
		and the corresponding dual morphism on the ball labelled with $-$.
		\item For a given object and morphism labelling, the evaluation of the labelling is the evaluation of the labelled diagram as a morphism in $\mathcal{D}(I,I) \cong \Co$,
		multiplied with the factor $\prod_{i} \qdim{X_i}$,
		where $i$ ranges over all 2-handles.
		\item The evaluation $\left<K(F)\right>$ of the Kirby diagram $K$ is the sum of evaluations over all labellings.
	\end{enumerate}
\end{defn}
\begin{pro}
	\label{kirby direct definition}
	Let $K$ be a Kirby diagram for a handle decomposition of a smooth, closed four-manifold $M$.
	Let $F\colon \mathcal{C} \to \mathcal{D}$ be a pivotal functor from a spherical fusion category to a modular category, and let $n$ be the multiplicity of $I$ in $F\Omega_\mathcal{C}$.
	Then the generalised dichromatic invariant is:
	\begin{equation}
		I_F(M) = \frac{\left<K(F)\right>}{\qdim{\Omega_\mathcal{C}}^{h_2-h_1}n^{h_1}}
	\end{equation}
\end{pro}
\begin{proof}
	Application of Lemma \ref{cutting strands} to the labelled special framed link $L$ shows:
	\[\left<L\left(\Omega_{\mathcal{D}}, F\Omega_{\mathcal{C}}\right)\right> =  \qdim{\Omega_\mathcal{D}}^{h_1}\left<K(F)\right>\]
	Since $\mathcal{D}$ is modular, $\qdim{\left(F\Omega_{\mathcal{C}}\right)'}=\qdim{nI}=n$ and the result follows.
\end{proof}

This proposition can be used as an alternative definition of the invariant in most cases.
However to prove invariance under all handle slides, it is more convenient to refer to the original Definition \ref{definvariant}.

\section{The state sum model}

\label{SSM}
The Crane-Yetter invariant is originally defined using a state sum model on a triangulation of a four-manifold \cite{CraneYetterKauffman:1997177}.
However, it was not presented as a state sum model in Section \ref{Crane Yetter}.
This is possible using a reformulation of the original definition due to Roberts,
as presented in \cite[Section 4.3]{Roberts:1995SkeinTheoryTV}.
He showed that for modular categories,
the Crane-Yetter state sum $CY$ is equal to the Broda invariant $B$ up to a normalisation involving the Euler characteristic,
through a process called ``chain mail'', which will be described in the following.

This is not true for nonmodular $\mathcal{C}$:
As will be shown in the next section, $CY$ and $B$ indeed differ in this case.
The nonmodular Crane-Yetter invariant arises from Petit's dichromatic invariant and does not depend only on the signature and Euler characteristic, but also at least on the fundamental group.

Previously, it wasn't known how to derive the nonmodular Crane-Yetter invariant from a handle picture.
With the generalised dichromatic invariant,
it is possible to do so.
Through chain mail one can recover a state sum description of the generalised dichromatic invariant $I_F$, whenever $F\colon \mathcal{C} \to \mathcal{D}$ such that $\mathcal{D}$ is modular.
So the generalised dichromatic invariant has a purely combinatorial description in terms of triangulations in that case.
The nonmodular Crane-Yetter invariant will turn out to be a special case.

In general, the state sum model will be useful to understand the physical interpretation of a particular model, while the handle picture is very convenient for calculations.

\subsection{The chain mail process and the generalised 15-j symbol}
Given a four-dimensional manifold $M$ with triangulation $\Delta$,
there is always a
handle decomposition via the following process:
Replace the triangulation by its dual complex, i.e. $4$-simplices $s \in \Delta_4$ by vertices, tetrahedra $t \in \Delta_3$ by edges, triangles $\tau \in \Delta_2$ by polygons and in general $(4-k)$-simplices by $k$-cells.
A $k$-cell, $k\le3$, will then have a valency (the number of adjacent $(k+1)$-cells) of $5-k$, coming from the number of faces of the original simplex.

Then consider the handle decomposition arising from a thickening of this dual complex.
This handle decomposition has $h_0$ 0-handles, where $h_0$ is then the number of 4-simplices in the triangulation, $\Delta_4$.
A handle decomposition of a Kirby diagram has to have only one 0-handle and so is obtained from the previous one by cancelling $(h_0-1)$ 0-1-handle pairs.
As in \cite[Section 4.3]{Roberts:1995SkeinTheoryTV},
the dichromatic invariant obtained from this Kirby diagram is equal to the adjusted formula obtained by
adding to the link a dotted circle for each of the cancelled 1-handles and multiplying by the overall factor $I_F(S^1 \times S^3)^{1-h_0} = \qdim{\Omega_\mathcal{C}}^{1-h_0}$.
Then the formula has an encircling by $\Omega_\mathcal{D}$ coming from a dotted circle for each tetrahedron in the triangulation. 
Inserting morphism boxes for the two $D^3$s at every 1-handle arising from a tetrahedron disconnects the whole link diagram into pentagram-shaped subdiagrams for every 4-simplex.

To arrive at the pentagram shape, first realise that the boundary of a 4-simplex is $S^3 \cong \R^3 \cup \{\infty\}$.
This is the boundary of the 0-handle to which 1-handles and 2-handles are attached.
Visualise the triangulation of the boundary by arranging four vertices of the 4-simplex as a tetrahedron around the origin and putting the remaining vertex at infinity.
Connecting the first four vertices to the vertex at infinity gives the remaining four tetrahedra.
Now draw one copy of $D^3$ for each tetrahedron (the respective copy belonging to a neighbouring 4-simplex) and connect each pair of $D^3$s with lines from the triangles as 2-handles.
The resulting subdiagram is now a big tetrahedron of $D^3$s with a further $D^3$ in the centre of the tetrahedron.
Project this subdiagram onto the plane, for every 4-simplex, and apply Definition \ref{kirby direct definition}.
After applying an arbitrary isotopy in the plane, the evaluation of such a subdiagram labelled with objects $X_i \in \ob \mathcal{C}$ and morphisms $\iota_i$, $i \in \{0,\dots,4\}$, $j \in \{0,\dots,9\}$ in $\mathcal{D}$ is:

\begin{align}
	\pentagram\left(FX_i,\iota_i\right) &\coloneqq
	\pentagram\left(FX_0,\ldots,FX_9,\iota_0,\ldots,\iota_4\right)\nonumber\\
	& \coloneqq \quad \tikzbo{\pic {pentagram};}
\end{align}
The over- and under-braidings follow the convention of Roberts.
It involves a ``splitting convention'' to arrive at a correct blackboard framing,
see \cite[Figure 17]{Roberts:1995SkeinTheoryTV}.

The diagram does not yet correspond to a morphism.
To evaluate it in terms of diagrammatic calculus of the ribbon category $\mathcal{D}$, one has to orient the lines upwards or downwards and insert evaluations and coevaluations as needed, in order to specify where an object or its dual is the source or the target of a morphism.
To arrive at such a choice, fix a total ordering of the vertices.
This ordering induces an orientation on the tetrahedra.
Each tetrahedron occurs as the face of two 4-simplices, which are oriented as submanifolds of $M$, and the tetrahedron inherits two opposite orientations from each of them.
Since a tetrahedron corresponds to a 1-handle, the + and $-$ signs need to be distributed onto the attaching $D^3$s.
Put the + sign on the $D^3$ attaching to the 4-simplex from which the tetrahedron inherits the orientation agreeing with the ordering of the vertices.
Consequently, its morphism is $\iota\colon X_{i_1} \otimes X_{i_2} \otimes X_{i_3} \otimes X_{i_4} \to I$,
while the morphism of the other $D^3$ goes in the other direction.

\subsection{The state sum}
\label{general ssm}
Since the whole diagram is a disconnected sum of diagrams of the above shape, its evaluation will be a product of $\pentagram$-quantities.
Recall Definition \ref{def:colours},
where colours, such as the Kirby colour $\Omega_\mathcal{C}$ are understood in terms of evaluating the diagram as a sum over simple objects.
This sum leads to a state sum formula for $I_F$.
The $X_i$ in the definition of $\pentagram$ are then summands of $F\Omega_\mathcal{C}$, which was labelling the 2-handles.
The $\iota_i$ label the $D^3$s of a 1-handle.
The invariant $I_F$ will then be a big sum over the summands of all these copies of $F\Omega_\mathcal{C}$ and the dual morphism bases.
\begin{defn}
	\label{labelling}
	An $F$-object labelling of the triangulation $\Delta$ is a function
	\begin{align}
		X &\colon \Delta_2 \to \Lambda_\mathcal{C}
	\end{align}
	For a given $F$-object labelling and a total ordering of the vertices $\Delta_0$,
	fix bases of the morphism spaces in the following way:
	For every tetrahedron $t \in \Delta_3$ with vertices $v_0 < v_1 < v_2 < v_3$, denote by $\tau_i$ the face triangle of $t$ where the vertex $v_i$ is left out.
	Now choose dual bases for the space $\mathcal{D}\left(FY(\tau_0) \otimes FY(\tau_2) \otimes FY(\tau_1) \otimes FY(\tau_3), I\right)$ and its dual.

	Then, using the same convention, an $F$-morphism labelling is a function
	\begin{align}
		\iota \colon& \Delta_3 \to \mor \mathcal{D}
	\end{align}
	where $\iota(t)$ is a basis vector of the space $\mathcal{D}\left(FY(\tau_0) \otimes FY(\tau_2) \otimes FY(\tau_1) \otimes FY(\tau_3), I\right)$.
\end{defn}
\begin{defn}
	For given labellings $X$ and $\iota$, define as their \textbf{amplitude} the evaluation of the labelled link diagram:
	\begin{align}
		\left[ X,\iota\right] \coloneqq \prod_{\tau \in \Delta_2} \qdim{X(\tau)} \prod_{s \in \Delta_4} \pentagram(FX(\tau_i), \iota(t_i))
	\end{align}
	Here, the $t_i$ are the faces of $s$ and the $\tau_i$ their faces in turn, in the appropriate order.
	Whenever the orientation of the $D^3$ of a tetrahedron $t_i$ induced from the total ordering matches the face orientation from the 4-simplex, evaluate the $\pentagram$-quantity with the morphism $\iota(t_i)$ and otherwise with its dual basis vector.
	Since every tetrahedron is the face of exactly two 4-simplices, for every morphism $\iota(t)$, its dual will appear exactly once in the labelling, so the sum in the following will indeed range over dual bases.
	
	Note that since the 2-handles are labelled with $F\Omega_\mathcal{C}$, the $\pentagram$-diagram must be labelled with $FX(\tau_i)$.
\end{defn}
	From the normalisation from the multiple 4-simplices (0-handles),
	the evaluation of a Kirby diagram $K$ is:
	\begin{align}
		\left<K(F)\right> & = \qdim{\Omega_\mathcal{C}}^{1-\lvert\Delta_4\rvert} \sum_{\mathclap{\substack{\text{labellings} \\ X, \iota}}} \left[X,\iota \right]
	\end{align}
	This quantity has to be multiplied by the normalisation,
	which is:
	\[ \qdim{\Omega_\mathcal{C}}^{-h_2+h_1}n^{-h_1} = \Omega_\mathcal{C}^{-\lvert \Delta_2 \rvert+\lvert \Delta_3 \rvert}\qdim{(F\Omega_\mathcal{C})'}^{-\lvert \Delta_3 \rvert}\]
\begin{thm}
	For $F\colon \mathcal{C} \to \mathcal{D}$ being a pivotal functor satisfying the conditions of Definition \ref{definvariant} with $\mathcal{D}$ modular,
	the generalised dichromatic invariant has the following state sum formula:
	\begin{align}
		I_F(M) &= \qdim{\Omega_\mathcal{C}}^{1 - \lvert\Delta_2\rvert + \lvert\Delta_3\rvert - \lvert\Delta_4\rvert}
		\qdim{(F\Omega_\mathcal{C})'}^{-\lvert \Delta_3 \rvert}
		\sum_{\substack{\text{labellings} \\ X, \iota}} \left[X,\iota \right]\nonumber\\
		&= \qdim{\Omega_\mathcal{C}}^{1-\chi(M) + \lvert \Delta_0 \rvert - \lvert \Delta_1 \rvert}
		\qdim{(F\Omega_\mathcal{C})'}^{-\lvert \Delta_3 \rvert}\nonumber\\
		& \quad \cdot \sum_{\substack{\text{labellings} \\ X, \iota}}
		\prod_{\tau \in \Delta_2} \qdim{X(\tau)}
		\prod_{s \in \Delta_4} \pentagram(FX(\tau_i), \iota(t_i))
		\label{ssm formula}
	\end{align}
\end{thm}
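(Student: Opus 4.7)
The plan is to turn the chain-mail construction already outlined in the text into a bookkeeping calculation, producing the state sum formula by carefully tracking normalisation factors. First, I would make explicit the handle decomposition obtained from thickening the dual complex of the triangulation $\Delta$: a $(4-k)$-simplex gives a $k$-handle, so $h_k = |\Delta_{4-k}|$. In particular $h_0=|\Delta_4|$, $h_1=|\Delta_3|$, $h_2=|\Delta_2|$, $h_3=|\Delta_1|$, $h_4=|\Delta_0|$. This decomposition typically has many 0-handles, so I would cancel $h_0-1$ pairs of 0- and 1-handles to get a Kirby diagram with a single 0-handle; as the text notes, the price of each cancellation is a factor $I_F(S^1\times S^3)^{-1}=\qdim{\Omega_\mathcal{C}}^{-1}$, so the evaluation picks up an overall factor of $\qdim{\Omega_\mathcal{C}}^{1-|\Delta_4|}$.

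Next, since $\mathcal{D}$ is modular, I would apply Proposition \ref{kirby direct definition} to cut each strand of the resulting diagram passing through the 1-handles (labelled by $\Omega_\mathcal{D}$). This replaces each 1-handle by a pair of $D^3$-morphism-boxes labelled by dual basis vectors, and the sum-over-labellings structure of the evaluation $\langle K(F)\rangle$ matches precisely an $F$-object labelling (on 2-handles, i.e.\ triangles in $\Delta_2$) together with an $F$-morphism labelling (on 1-handles, i.e.\ tetrahedra in $\Delta_3$), weighted by $\prod_{\tau\in\Delta_2}\qdim{X(\tau)}$.

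The crucial geometric step is then to recognise that, after cutting, the diagram breaks into a disjoint union of connected subdiagrams indexed by 4-simplices $s\in\Delta_4$, each contributing the pentagram $\pentagram(FX(\tau_i),\iota(t_i))$. I would justify this by locally analysing a single 4-simplex whose boundary triangulation lives in $S^3\cong\R^3\cup\{\infty\}$: place four of its five vertices as a tetrahedron about the origin and the fifth at $\infty$, then observe that the five dual $D^3$'s (one per boundary tetrahedron) connected by 2-handle arcs (one per boundary triangle) project to the pentagram pattern. The total ordering of $\Delta_0$ then fixes orientations and therefore consistently assigns $\iota(t)$ versus its dual basis vector to the two $D^3$'s of each tetrahedron, since a tetrahedron is a face of exactly two 4-simplices with opposite induced orientations.

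Combining these steps, $\langle K(F)\rangle=\qdim{\Omega_\mathcal{C}}^{1-|\Delta_4|}\sum_{X,\iota}[X,\iota]$. Multiplying by the normalisation $\qdim{\Omega_\mathcal{C}}^{h_1-h_2}\,n^{-h_1}$ from Proposition \ref{kirby direct definition}, and using $n=\qdim{(F\Omega_\mathcal{C})'}$ (since $\mathcal{D}$ is modular, only the copies of $I$ in $F\Omega_\mathcal{C}$ survive in the transparent part), yields the first displayed formula. The second form is then pure arithmetic: substitute $|\Delta_2|=h_2$, $|\Delta_3|=h_3=h_1$, $|\Delta_4|=h_0$ and combine with $\chi(M)=\sum_k(-1)^k|\Delta_k|$ to rewrite the exponent of $\qdim{\Omega_\mathcal{C}}$ as $1-\chi(M)+|\Delta_0|-|\Delta_1|$. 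The main obstacle is not any single calculation but the careful geometric verification that the cutting procedure really does split the diagram along tetrahedra into the pentagram pieces with the correct orientations, framings, and over/under-crossings as in Roberts' splitting convention; everything else is combinatorial bookkeeping.
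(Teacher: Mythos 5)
Your proposal is correct and follows essentially the same route as the paper: thicken the dual complex, cancel the extra $0$-handles at a cost of $\qdim{\Omega_\mathcal{C}}^{1-\lvert\Delta_4\rvert}$, cut the strands through each $1$-handle using modularity of $\mathcal{D}$ (Proposition \ref{kirby direct definition} with $n=\qdim{(F\Omega_\mathcal{C})'}$), identify the resulting disconnected pieces with the $\pentagram$-quantities via Roberts' splitting convention, and finish with the normalisation bookkeeping. The identification of handle counts with simplex counts and the Euler-characteristic rewriting match the paper's argument exactly.
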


\subsection{Trading four-valent for trivalent morphisms}

In order to compare it to the Crane-Yetter model, the state sum needs to be reformulated slightly.
There, the vertices in the $\pentagram$-diagram are trivalent,
which is more convenient when working with $U_qsl(2)$ tilting modules.
The four-valent morphisms appeared when applying Lemma \ref{cutting strands} to the four 2-handles (triangles) going through a 1-handle (tetrahedron) in Proposition \ref{kirby direct definition}.
If one inserts two $\Omega_\mathcal{D}$s instead,
one can produce two trivalent vertices:
\newcommand{\twostrands}[1]{
	\node[morbox] (E\xlabel\ylabel) at (0,-0.4) {#1};
	\coordinate (L\xlabel\ylabel) at (-0.25,0);
	\coordinate (R\xlabel\ylabel) at (0.25,0);
	\draw[thick] (L\xlabel\ylabel) -- (E\xlabel\ylabel) -- (R\xlabel\ylabel);
}
\newcommand{\fourstrands}{
	\foreach \yscale/\ylabel/\ud in {1/a/^,-1/b/_}
	{
		\foreach \xscale/\xlabel/\ij in {1/a/i,-1/b/j}
		{
			\begin{scope}[xscale=\xscale,yscale=\yscale,shift={(-0.5,1.5)}]
				\twostrands{$\iota\ud\ij$}
			\end{scope}
		}
	}
	\foreach \xlabel/\ioffset/\s in {a/0/,b/3/-}
	{
		\foreach \lrlabel/\iformula [evaluate=\iformula as \i] in {L/\ioffset,R/int(\ioffset+(\s1))}
		{
			\node[below] at (\lrlabel\xlabel b) {$F\!X_\i$};
		}
	}
}
\begin{align}
	\pgfkeys{/encircle/label={$\Omega_{\mathcal{D}}$}}
	\tikzbo{
		\begin{scope}[xscale=1.8]
		\encircle{
			\foreach \i in {0,...,3}
			{
				\draw[diagram,directed] ($(-0.75,-1.5)+\i*(0.5,0)$) node[below] {$F\!X_\i$} -- +(0,3);
			}
		}
		\end{scope}
	}
	&=
	\sum_{\mathclap{\substack{\iota_i, \iota_j \\ Y, \tilde Y \in \Lambda_\mathcal{D}}}} \qdim{Y} {\qdim{\tilde Y}}
	\tikzbo{
		\begin{scope}[xscale=1.8]
		\fourstrands
		\encircle[label={$\Omega_{\mathcal{D}}$}]{
			\draw[diagram,directed] (Eab) -- node[left] {$Y$} (Eaa);
			\draw[diagram,directed] (Ebb) -- node[right] {$\tilde Y$} (Eba);
		}
		\end{scope}
	}\nonumber\\
	&=
	\qdim{\Omega_{\mathcal{D}}}\, \sum_{\mathclap{\substack{\iota_i, \iota_j \\ Y \in \Lambda_\mathcal{D}}}} \qdim{Y}
	\tikzbo{
		\begin{scope}[xscale=1.8]
		\fourstrands
		\foreach \yscale/\ylabel/\ud/\omegaplacement/\pole/\direction in {1/a/^/below/south/opdirected,-1/b/_/above/north/directed}
		{
			\begin{scope}[yscale=\yscale]
				\draw[thick,\direction] (Ea\ylabel.\pole) to[out=-90,in=-90] node[\omegaplacement] {$Y$} (Eb\ylabel.\pole);
			\end{scope}
		}
		\end{scope}
	}\label{encircling}
\end{align}
For the last step, Lemma \ref{cutting two strands} has been used,
cancelling the factor $\qdim(\tilde{Y})$.
Note that the additional objects now range over the simple objects in $\mathcal{D}$, not $\mathcal{C}$.

The alternative $\pentagram$-quantity is then defined as:
\begin{align}
	\widetilde\pentagram\left(FX_i,Y_i,\iota_i,\tilde\iota_i\right) &\coloneqq
	\widetilde\pentagram\left(FX_0,\ldots,FX_9,Y_0,\ldots,Y_4,\iota_0,\ldots,\iota_4,\tilde\iota_0,\ldots,\tilde\iota_4\right)\nonumber\\
	& \coloneqq
	\newcommand{\innerradius}{2.5}
	\newcommand{\outerradius}{3.4}
	\tikzset{
		intertwiner/.style={circle,draw,inner sep=0pt,minimum size=5mm},
		loosediag/.style={diagram,looseness=0.3},
		inout2/.style 2 args={in={72*#2},out={180+72*#1}}
	}
	\tikzbo{
		\foreach \angle in {0,1,...,4}
		{
			\node[intertwiner] (iota-\angle) at (90+72*\angle:\innerradius) {$\iota_\angle$};
			\node[intertwiner] (iota-tilde-\angle) at (90+72*\angle:\outerradius) {$\tilde{\iota}_\angle$};
			\draw[thick] (iota-\angle) -- node[rotate=72*\angle,xshift=0.3cm,rotate=-72*\angle] {$Y_\angle$} (iota-tilde-\angle);
		}
		\draw[loosediag] (iota-0) to[inout2={0}{1}] node[above left] {$FX_0$} (iota-tilde-1) (iota-tilde-1) to[inout2={1}{2}] node[below left] {$FX_1$} (iota-2) (iota-2) to[inout2={2}{4}] node[above left=-4pt] {$FX_7$} (iota-tilde-4) (iota-tilde-4) to[inout2={4}{0}] node[above right] {$FX_4$} (iota-0);
		\draw[loosediag] (iota-1) to[inout2={1}{3}] node[above right=-4pt] {$FX_6$} (iota-3) (iota-3) to[inout2={3}{4}] node[right] {$FX_3$} (iota-4) (iota-4) to[inout2={4}{1}] node[below] {$FX_9$} (iota-1);
		\draw[loosediag] (iota-tilde-0) to[inout2={0}{2}] node[right] {$FX_5$} (iota-tilde-2) (iota-tilde-2) to[inout2={2}{3}] node[below] {$FX_2$} (iota-tilde-3) (iota-tilde-3) to[inout2={3}{0}] node[left] {$FX_8$} (iota-tilde-0);
	}
\end{align}
Again, it has to be specified where an object or its dual is the source or the target of a morphism.
Each tetrahedron corresponds to an encirclement.
It occurs as the face of two 4-simplices, which are oriented as submanifolds of $M$, and the tetrahedron inherits two opposite orientations from each of them.
Orient the encircling \eqref{encircling} such that the 4-simplex from which the tetrahedron inherits the orientation agreeing with the ordering of the vertices appears on the top.

Object and morphism labellings now have different definitions than in Section \ref{general ssm}:
\begin{defn}
	An $F$-object labelling of the triangulation $\Delta$ is a pair of functions $(X, Y)$, where
	\begin{align}
		X &\colon \Delta_2 \to \Lambda_\mathcal{C}\\
		Y &\colon \Delta_3 \to \Lambda_\mathcal{D}
	\end{align}
	Choose dual bases for the spaces $\mathcal{D}\left(FX(\tau_0) \otimes FX(\tau_2), Y(t)\right)$ and $\mathcal{D}(FX(\tau_1) \allowbreak \otimes FX(\tau_3),\allowbreak Y(t))$ and their duals.

	An $F$-morphism labelling is a pair of functions $(\iota, \tilde\iota)$
	\begin{align}
		\iota, \tilde\iota \colon \Delta_3 \to \mor \mathcal{D}
	\end{align}
	where $\iota(t)$ is a basis vector of the space $\mathcal{D}\left(FX(\tau_0) \otimes FX(\tau_2), Y(t)\right)$ and $\tilde\iota(t)$ is a basis vector of $\mathcal{D}\left(FX(\tau_1) \otimes FX(\tau_3), Y(t)\right)$.
\end{defn}
\begin{defn}
	For given labellings $(X,Y)$ and $(\iota, \tilde\iota)$, the amplitude is:
	\begin{align}
		\left<(X,Y),(\iota, \tilde\iota)\right> & \coloneqq
		\prod_{\tau \in \Delta_2} \qdim{X(\tau)} 
		\prod_{t \in \Delta_3} \qdim{Y(t)} \qdim{\Omega_\mathcal{D}} \nonumber\\
		& \quad \cdot \prod_{s \in \Delta_4} \widetilde\pentagram(FX(\tau_i), Y(t_i), \iota(t_i), \tilde\iota(t_i))
	\end{align}
\end{defn}
\begin{lem}
	From the Killing property and the normalisation from the multiple vertices, the evaluation of the special framed link $L$ associated to the triangulation is:
	\begin{align}
		\left<L\left(\Omega_\mathcal{D},F\Omega_\mathcal{C}\right)\right> & = \qdim{\Omega_\mathcal{C}}^{1-\lvert\Delta_4\rvert} \sum_{\mathclap{\substack{\text{labellings} \\ (X, Y), (\iota,\tilde\iota)}}} \left<(X,Y),(\iota, \tilde\iota)\right>
	\end{align}
\end{lem}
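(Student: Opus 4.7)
The plan is to mimic the chain--mail argument of Section \ref{general ssm} but to cut strands in pairs rather than in a single bundle of four, so as to produce trivalent morphisms instead of the four-valent $\iota$'s. First, I would start from the Kirby diagram $K$ associated to the triangulation $\Delta$: its $0$-handles are in bijection with $\Delta_4$, its $1$-handles with $\Delta_3$, and its $2$-handles with $\Delta_2$. Cancelling all but one $0$-handle with $1$-handles costs a factor of $\qdim{\Omega_\mathcal{C}}^{1-|\Delta_4|}$ (using $I_F(S^1\times S^3)=\qdim{\Omega_\mathcal{C}}$), and introduces a dotted circle encircling the strands going through each cancelled $1$-handle, coloured with $\Omega_\mathcal{D}$. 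This is exactly the setup already used in Section \ref{general ssm}.

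Next, on each $\Omega_\mathcal{D}$-encirclement associated to a tetrahedron $t\in\Delta_3$, I would apply the two-step cutting identity \eqref{encircling}: insert an $\Omega_\mathcal{D}$ twice, use Lemma \ref{cutting strands} to cut the four strands into two pairs of trivalent vertices joined by a $Y$-labelled edge, and then use Lemma \ref{cutting two strands} to collapse the inner encirclement, paying one factor of $\qdim{\Omega_\mathcal{D}}$ per tetrahedron and summing over $Y(t)\in\Lambda_\mathcal{D}$ with weight $\qdim{Y(t)}$. This precisely produces the two trivalent intertwiners $\iota(t)$ and $\tilde\iota(t)$ specified in the new labelling definition, together with the required normalisation $\prod_{t\in\Delta_3}\qdim{Y(t)}\qdim{\Omega_\mathcal{D}}$.

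Simultaneously, expanding each $2$-handle label $F\Omega_\mathcal{C}$ as $\sum_{X\in\Lambda_\mathcal{C}}\qdim{X}\,FX$, summed over $\tau\in\Delta_2$, yields the object labelling $X\colon\Delta_2\to\Lambda_\mathcal{C}$ together with the prefactor $\prod_{\tau\in\Delta_2}\qdim{X(\tau)}$. After all encirclements have been cut, the remaining diagram is disconnected into one subdiagram per $4$-simplex: arranging the four tetrahedra of a $4$-simplex $s$ around a central tetrahedron as in the pentagram picture, the subdiagram evaluates by construction to $\widetilde\pentagram(FX(\tau_i),Y(t_i),\iota(t_i),\tilde\iota(t_i))$. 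The evaluation of the whole link is then the product over $s\in\Delta_4$ of these pentagram values, multiplied by the prefactors collected above, which reproduces $\langle (X,Y),(\iota,\tilde\iota)\rangle$ summed over labellings.

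The main obstacle will be the bookkeeping of orientations and duals. Each tetrahedron is shared by exactly two adjacent $4$-simplices, whose induced orientations on it disagree; one must check that the convention fixing $+$ and $-$ signs on the two $D^3$'s of the corresponding $1$-handle, combined with the ordering of vertices, makes the morphism on one side the dual basis vector of the morphism on the other, so that the sum over $(\iota,\tilde\iota)$ genuinely runs over dual bases in the sense of Lemma \ref{lem:sph-pair}. Once this is verified, together with the observation that the ribbon over/under-crossings are fixed by Roberts' splitting convention and thus automatically match between neighbouring simplices, the factorisation into $\widetilde\pentagram$-quantities is immediate and the stated formula follows.
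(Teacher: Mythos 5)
Your proposal is correct and follows essentially the same route as the paper: the chain-mail setup from Section \ref{general ssm} (cancelling $0$-handles at the cost of $\qdim{\Omega_\mathcal{C}}^{1-\lvert\Delta_4\rvert}$ and an $\Omega_\mathcal{D}$-encircling per tetrahedron), the double insertion of $\Omega_\mathcal{D}$ together with Lemmas \ref{cutting strands} and \ref{cutting two strands} as in \eqref{encircling} to trade each four-valent vertex for two trivalent ones with the factors $\qdim{Y(t)}\qdim{\Omega_\mathcal{D}}$, and the expansion of $F\Omega_\mathcal{C}$ on the $2$-handles giving $\prod_\tau \qdim{X(\tau)}$ and the factorisation into $\widetilde\pentagram$-quantities. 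Your flagged orientation/dual-basis check is exactly the convention the paper fixes via the vertex ordering and the placement of the agreeing $4$-simplex ``on top'' of each encircling, so nothing is missing.
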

\begin{thm}
	Using the original Definition \eqref{the invariant},
	the state sum formula can also be written as:
	\begin{align}
		I_F(M) &= \qdim{\Omega_{\mathcal{C}}}^{1 - \lvert\Delta_2\rvert + \lvert\Delta_3\rvert - \lvert\Delta_4\rvert} \qdim{\Omega_{\mathcal{D}}}^{-\lvert\Delta_3\rvert} \qdim{\left(F\Omega_{\mathcal{C}}\right)'}^{-\lvert\Delta_3\rvert}
		\sum_{\mathclap{\substack{\text{labellings} \\ (X, Y), (\iota,\tilde\iota)}}} \left<(X,Y),(\iota, \tilde\iota)\right>\nonumber\\
		&= \qdim{\Omega_{\mathcal{C}}}^{1 - \chi(M) + \lvert\Delta_0\rvert - \lvert\Delta_1\rvert} \qdim{\left(F\Omega_{\mathcal{C}}\right)'}^{-\lvert\Delta_3\rvert} \nonumber\\
		& \quad \cdot \sum_{\substack{\text{labellings} \\ (X, Y), (\iota,\tilde\iota)}} \Bigg(
		\prod_{\tau \in \Delta_2} \qdim{X(\tau)}
		\prod_{t \in \Delta_3} \qdim{Y(t)} \nonumber\\
		& \quad \mathbin{\hphantom{\sum_{\substack{\text{labellings} \\ (X, Y), (\iota,\tilde\iota)}}\Bigg(}}
		\cdot \prod_{s \in \Delta_4} \widetilde\pentagram(FX(\tau_i), Y(t_i), \iota(t_i), \tilde\iota(t_i)) \Bigg)
		\label{ssm formula CY style}
	\end{align}
\end{thm}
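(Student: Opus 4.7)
The plan is to feed the immediately preceding lemma into the original definition \eqref{the invariant} and then do routine bookkeeping on exponents. The chain-mail link $L$ associated to the triangulation $\Delta$ has one $1$-handle per tetrahedron and one $2$-handle per triangle, so I will substitute $h_1 = \lvert\Delta_3\rvert$ and $h_2 = \lvert\Delta_2\rvert$ into the denominator of \eqref{the invariant}, while treating the prefactor $\qdim{\Omega_\mathcal{C}}^{1-\lvert\Delta_4\rvert}$ coming from the lemma as the $I_F(S^1\times S^3)^{1-h_0}$ correction that restores a single $0$-handle.

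Concretely, I would write
\[
I_F(M) = \frac{\qdim{\Omega_\mathcal{C}}^{1-\lvert\Delta_4\rvert}\,\sum_{(X,Y),(\iota,\tilde\iota)} \left<(X,Y),(\iota,\tilde\iota)\right>}{\qdim{\Omega_\mathcal{C}}^{\lvert\Delta_2\rvert - \lvert\Delta_3\rvert}\,\bigl(\qdim{\Omega_\mathcal{D}}\,\qdim{(F\Omega_\mathcal{C})'}\bigr)^{\lvert\Delta_3\rvert}}
\]
by plugging the lemma directly into \eqref{the invariant}. Collecting the three independent factors $\qdim{\Omega_\mathcal{C}}$, $\qdim{\Omega_\mathcal{D}}$, and $\qdim{(F\Omega_\mathcal{C})'}$ then produces the first displayed equality of the theorem.

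For the second equality, I would rewrite the exponent of $\qdim{\Omega_\mathcal{C}}$ using the Euler-characteristic identity
\[
\chi(M) = \lvert\Delta_0\rvert - \lvert\Delta_1\rvert + \lvert\Delta_2\rvert - \lvert\Delta_3\rvert + \lvert\Delta_4\rvert,
\]
which gives $1 - \lvert\Delta_2\rvert + \lvert\Delta_3\rvert - \lvert\Delta_4\rvert = 1 - \chi(M) + \lvert\Delta_0\rvert - \lvert\Delta_1\rvert$. Then I would substitute the definition of the amplitude $\left<(X,Y),(\iota,\tilde\iota)\right>$. Each tetrahedron contributes one factor of $\qdim{\Omega_\mathcal{D}}$ inside the amplitude, so pulling $\qdim{\Omega_\mathcal{D}}^{\lvert\Delta_3\rvert}$ out of the sum cancels the prefactor $\qdim{\Omega_\mathcal{D}}^{-\lvert\Delta_3\rvert}$ and leaves exactly the products $\prod_\tau \qdim{X(\tau)}\,\prod_t \qdim{Y(t)}\,\prod_s \widetilde\pentagram(\ldots)$ displayed in the theorem.

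There is no real obstacle; this is essentially an accounting exercise built entirely on the preceding lemma, the definition of the amplitude, and the simplicial Euler characteristic. The only subtlety I will be careful about is that the handle counts $h_1 = \lvert\Delta_3\rvert$ and $h_2 = \lvert\Delta_2\rvert$ used in \eqref{the invariant} are those of the \emph{full} chain-mail link (one dotted circle per tetrahedron, including the auxiliary ones that cancel the extra $0$-handles), and that the compensating factor $\qdim{\Omega_\mathcal{C}}^{1-\lvert\Delta_4\rvert}$ supplied by the lemma is exactly what makes this consistent with the $I_F(S^1\times S^3)^{1-h_0}$ adjustment in the chain-mail procedure.
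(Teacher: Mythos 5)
Your proposal is correct and matches the paper's (implicit) derivation exactly: substitute the preceding lemma into Definition \eqref{the invariant} with $h_1 = \lvert\Delta_3\rvert$ and $h_2 = \lvert\Delta_2\rvert$, collect the dimension factors, rewrite the exponent via the simplicial Euler characteristic, and cancel the per-tetrahedron $\qdim{\Omega_\mathcal{D}}$ in the amplitude against the prefactor $\qdim{\Omega_\mathcal{D}}^{-\lvert\Delta_3\rvert}$. Nothing is missing.
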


\section{Examples}
\label{examples invariant}

\subsection{The Crane-Yetter state sum}
\label{CYSSM}

If $F\colon \mathcal{C} \to \mathcal{D}$ is a full inclusion (Petit's dichromatic invariant, Example \ref{dichromatic}) and $\mathcal{D}$ is already modular, the generalised dichromatic invariant simplifies:
\begin{pro}
	\label{CY-case}
	Let $F\colon \mathcal{C} \hookrightarrow \mathcal{D}$ be a full pivotal inclusion of a spherical fusion category into a modular category.
	\begin{enumerate}
		\item $I_F$ depends only on $\mathcal{C}$, with the inherited ribbon structure.
		It will henceforth be denoted as $\widehat{CY}_\mathcal{C}$.
		\item $\widehat{CY}_\mathcal{C}$ is the Crane-Yetter state sum $CY_\mathcal{C}$ from \cite{CraneYetterKauffman:1997177} for $\mathcal{C}$ up to the Euler characteristic $\chi$:
		\begin{equation}
			\widehat{CY}_\mathcal{C}(M) = CY_\mathcal{C}(M) \cdot \qdim{\Omega_\mathcal{C}}^{1-\chi(M)} \label{proof crane yetter}
		\end{equation}
	\end{enumerate}
\end{pro}
\begin{proof}
	\begin{enumerate}
		\item Since $\mathcal{D}$ is modular,  the simplified definition in Proposition \ref{kirby direct definition} can be used, with $n=1$.
		Object labellings already take values in $\Lambda_\mathcal{C}$.
		Morphism labellings take values in $\mathcal{D}(FX_1 \otimes \cdots \otimes FX_N,I)$, but this is isomorphic to $\mathcal{C}(X_1 \otimes \cdots \otimes X_N,I)$ since $F$ is full.
		The evaluation of the Kirby diagram can thus be carried out in $\mathcal C$ and depends only on data from $\mathcal{C}$ and the  ribbon structure inherited from $\mathcal D$.
		\item In the state sum description, an additional $\Omega_\mathcal{D}$ is inserted in \eqref{encircling} to transform the four-valent vertex into two trivalent vertices,
		introducing additional objects $X$ labelling the tetrahedra.
		Here, this can be achieved instead by using the insertion lemma \ref{insertion lemma}  in $\mathcal C$.
		Thus the labellings of the state sum can be taken to range over $X\colon \Delta_3 \to \Lambda_\mathcal{C}$ and ${\iota, \tilde \iota\colon \Delta_3 \to \mor \mathcal{C}}$.

		A direct comparison of the state sum formula \eqref{ssm formula CY style} to \cite[Theorem 3.2]{CraneYetterKauffman:1997177} shows the equality to $CY_\mathcal{C}$.
		The version of the insertion lemma \ref{insertion lemma} slightly differs by inserting $\Omega_\mathcal{C} = \bigoplus_X \qdim{X} X$ whereas Crane, Yetter and Kauffman insert $\bigoplus_X X$, leading to different dimension factors.
	\end{enumerate}
\end{proof}

\begin{rema}
	Let $\mathcal{C}$ be a ribbon fusion category with braiding $c$.
	Then there is a full inclusion of $\mathcal{C}$ into its Drinfeld centre $\mathcal{Z(C)}$ by mapping an object $X$ to $\left(X, c_{X,-}\right)$.
	So the Crane-Yetter invariant can always be studied as a special case of Petit's dichromatic invariant.
	This is a significant generalisation since the original derivation of the Crane-Yetter state sum from a handlebody picture required $\mathcal{C}$ to be modular, while the version presented here does not.
\end{rema}

\begin{rema}
	Recall that if $\mathcal{D}$ is not modular, but modularisable,
	then the associated state sum model via the modularisation $H$ can be considered.
	But $H \circ F$ will not always be full and may thus fail to give rise to a case of Petit's dichromatic invariant.
	However, if both categories are unitary, Corollary \ref{unitary is CY} can be used to return to a full inclusion,
	but in other cases, a new state sum model might arise.
\end{rema}

\subsubsection{Simply-connected manifolds}

\label{sssection:sc manifolds}

For simply-connected manifolds, the Crane-Yetter invariant reduces to known invariants of the ribbon fusion category.
Recalling the results from Section \ref{ssection:sc manifolds and multiplicativity},
the value for $\CP 2$ is:
\begin{align}
	I\left(\CP 2\right)
	&= \frac{\sum_{X \in \Lambda_\mathcal{C}} \tr \left(\theta_X\right)}{\qdim{\Omega_{\mathcal{C}}}} \nonumber
	\intertext{Since $X$ is simple,
	the morphism $\theta_X$ amounts for multiplying by a complex number,
	which will be denoted by the same symbol:
	}
	&= \frac{\sum_{X \in \Lambda_\mathcal{C}} \qdim{X}^2 \theta_X}{\qdim{\Omega_{\mathcal{C}}}}
\end{align}
The result is also known as the ``normalised Gauss sum'' of the category $\mathcal{C}$.

As another basic example, the manifold $S^2 \times S^2$ has the Hopf link of two 0-framed 2-handles as Kirby diagram, and thus its invariant is:
\begin{align}
	\widehat{CY}_\mathcal{C}\left(S^2 \times S^2\right) &= \frac{\qdim{\Omega_{\mathcal{C}'}}}{ \qdim{\Omega_\mathcal{C}}}
\end{align}
The same value could be calculated from \eqref{eq:simply connected invariant},
but in this case,
the direct calculation is more convenient.

An overview over the Crane-Yetter invariant of several simply-connected 4-mani\-folds is given in Table \ref{tab:CY sc examples}.

\begin{table}[t]
	\begin{center}
		\begin{tabular}{llll}
			\toprule
			Manifold $M$
				& $\widehat{CY}_\mathcal{C}(M)$
					& $\chi(M)$
						& $\sigma(M)$
			\\\midrule
			$\CP 2$
				& $\sum_{X \in \Lambda_\mathcal{C}} \qdim{X}^2 \theta_X \cdot \qdim{\Omega_\mathcal{C}}^{-1}$
					& 3
						& 1
			\\
			$\opCP 2$
				& $\sum_{X \in \Lambda_\mathcal{C}} \qdim{X}^2 \theta_X^{-1} \cdot \qdim{\Omega_\mathcal{C}}^{-1}$
					& 3
						& -1
			\\
			$S^2 \times S^2$
				& $\qdim{\Omega_\mathcal{C}'} \cdot \qdim{\Omega_\mathcal{C}}^{-1}$
					& 4
						& 0
			\\
			$S^2 \tilde{\times} S^2 \cong \CP 2 \connsum \opCP 2$
				& $\qdim{\Omega_\mathcal{C}'} \cdot \qdim{\Omega_\mathcal{C}}^{-1}$
					& 4
						& 0
			\\
			$S^4$ (including exotic candidates)
				& $1$
					& 2
						& 0
			\\\bottomrule
		\end{tabular}
	\end{center}
	\caption{The Crane-Yetter invariant for several simply-connected manifolds.}
	\label{tab:CY sc examples}
\end{table}

\subsection{Non-simply-connected manifolds}
\label{non-simply-connected}
If the four-manifold $M$ is not simply-connected,
then the observation in Lemma \ref{simply-connected}
(that on simply-connected manifolds, the invariant is not stronger than Euler characteristic and signature)
is not applicable any more.
And indeed, already the Crane-Yetter invariant is stronger than the Broda invariant on such manifolds, in that it depends at least on the fundamental group.
This can be seen in the following examples,
and also in the next subsection.

Consider the Crane-Yetter model of a ribbon fusion category $\mathcal{C}$ that is not modular.
This is, up to Euler characteristic and a constant factor,
the generalised dichromatic invariant $\widehat{CY}_\mathcal{C}$ for a full inclusion $F$ of $\mathcal{C}$ into a modular category $\mathcal{D}$.

\subsubsection{Manifolds of the form $S^1 \times M^3$}

Assume for now that our manifold of interest is a product $S^1 \times M$,
for some closed 3-manifold $M$.
Since ${S^1 \times M} = {\partial (D^2 \times M)}$, its signature must be 0.
The Euler characteristic is also ${\chi(S^1 \times M) =} {\chi(S^1) \cdot \chi(M)} = 0$.

Let us study the cases $M = S^3$ and $M= S^1 \times S^2$.
The manifold $S^1 \times S^3$ has a handle decomposition with one 1-handle and no 2-handles and its link diagram in Akbulut notation is the dotted unknot.
Its invariant is therefore:
\begin{align}
	\widehat{CY}_\mathcal{C}\left(S^1 \times S^3\right) &= \qdim{\Omega_\mathcal{C}}
\end{align}
For $S^1 \times S^1 \times S^2$,
a handle decomposition is derived by following \cite[4.3.1, 4.6.8 and 5.4.2]{GompfStipsicz},
and starting from a Heegaard diagram of $S^1 \times S^2$.
It is presented here in the form of a 2-handle attaching curve on the boundary of a solid torus,
which is $\R^2 \cup \{ \infty \}$ with two disks identified.
\begin{align*}
	S^1 \times S^2 &=
	\tikzbo{
		\node[handle] at (0,0) {};
		\node[handle] at (3,0) {};
		\node[circle,thick,draw,minimum size=2cm] at (0,0) {};
	}
	\intertext{The two disks are the attaching disks of the 1-handle in $\partial D^3 = S^2 = \R^2 \cup \{\infty\}$.
	The circle is the attaching circle of the 2-handle.
	Thickening this picture gives a Kirby diagram for $I \times S^1 \times S^2$ and adding a further 1- and 2-handle gives:}
	S^1 \times S^1 \times S^2 &=
	\tikzbo{\pic {S1S1S2};}
\end{align*}
The left and the right 3-ball are the attaching balls of the thickened 1-handle, the front and the back ones come from the additional 1-handle.

The simplified definition of the invariant from Proposition \ref{kirby direct definition} is used.
Since there are the same number of 2-handles and 1-handles and $n = 1$, the normalisation is 1, and the invariant evaluates to
\begin{align*}
	\widehat{CY}_\mathcal{C} \left(S^1 \times S^1 \times S^2\right)\quad
	&= \qquad \left<\tikzbo{\pic[scale=0.7,transform shape] {S1S1S2};}\right> \\
	&= \quad \sum_{\mathclap{\substack{X,Y \in \Lambda_\mathcal{C} \\ \iota_i, \iota_j\colon Y^* \otimes Y \to I}}} \qdim{X} \qdim{Y} \qquad
	\tikzbo{
		\pic {S1S1S2eval};
	}\\
	&= \quad \sum_{\mathclap{X,Y \in \Lambda_\mathcal{C}}} \qdim{X} \qdim{Y}^{-1} \qquad
	\tikzbo{
		\pic {s1s1s2insertev};
	}\\
	&= \quad \sum_{\mathclap{X,Y \in \Lambda_\mathcal{C}}} \qdim{X} \qdim{Y}^{-1} \qquad
	\tikzbo{
		\pic {hopflink};
	}\\
	&= \quad \sum_{\mathclap{\substack{X \in \Lambda_\mathcal{C} \\ Y \in \Lambda_\mathcal{C'}}}} \qdim{X} \qdim{Y}^{-1} \qquad
	\tikzbo{
		\pic {unlinkwithtransparent};
	}\\
	&= \quad \sum_{\mathclap{\substack{X \in \Lambda_\mathcal{C} \\ Y \in \Lambda_\mathcal{C'}}}} \qdim{X}^2\quad\\
	&= \quad \lvert\Lambda_{\mathcal{C}'}\rvert \,\qdim{\Omega_\mathcal{C}}.
	\numberthis \label{dichromatic is stronger}
\end{align*}
If $\mathcal{C}$ is not modular, that is, if $\Lambda_{\mathcal{C}'}$ has more than one element, $I(S^1 \times S^1 \times S^2) \neq I(S^1 \times S^3)$.

\subsubsection{Homology and homotopy}

Since $\widehat{CY}$ is multiplicative under connected sum,
one can easily calculate the invariant on a manifold as the following:
\begin{align}
	\widehat{CY}_\mathcal{C}\left(S^1 \times S^3 \connsum S^1 \times S^3 \connsum S^2 \times S^2\right) &= \qdim{\Omega_\mathcal{C}} \qdim{\Omega_{\mathcal{C}'}}
\end{align}
This example is of interest since the latter manifold has the same first homology and signature as $S^1 \times S^1 \times S^2$,
but a different fundamental group.
The Crane-Yetter invariant is sensitive to this difference exactly iff the symmetric centre $\mathcal{C}'$ contains a simple object of dimension greater than 1.
This situation occurs when $\mathcal{C}'$ is equivalent to the representations of a noncommutative finite group.
An overview is given in Table \ref{tab:CY examples}.

\begin{table}
	\begin{center}
		\begin{tabular}{llll}
			\toprule
			Manifold $M$
				& $\widehat{CY}_\mathcal{C}(M)$
					& $H^1(M)$
						& $\pi_1(M)$
			\\\midrule
			$S^1 \times S^3$
				& $\qdim{\Omega_\mathcal{C}}$
					& $\Z$
						& $\Z$
			\\
			$S^1 \times S^1 \times S^2$
				& $\qdim{\Omega_\mathcal{C}} \cdot \lvert\Lambda_{\mathcal{C}'}\rvert$
					& $\Z \oplus \Z$
						& $\Z \oplus \Z$
			\\
			$S^1 \times S^3 \connsum S^1 \times S^3 \connsum S^2 \times S^2$
				& $\qdim{\Omega_\mathcal{C}} \cdot \qdim{\Omega_{\mathcal{C}'}}$
					& $\Z \oplus \Z$
						& $\Z * \Z$
			\\\bottomrule
		\end{tabular}
	\end{center}
	\caption{The Crane-Yetter invariant for three non-simply-connected manifolds with zero Euler characteristic and signature,
	compared to their first homologies and their fundamental group.
	The notation $\Z * \Z$ stands for the free group product of $\Z$ with itself,
	i.e. the free group on two generators.}
	\label{tab:CY examples}
\end{table}

\subsubsection{Refined Broda invariant}

\label{sssec:refined Broda invariant}
An example of the Crane-Yetter invariant is the refined Broda invariant described in Section \ref{petit broda},
where $\mathcal{C}$ is the subcategory of integer spins in a suitable quotient category $\mathcal{D}$ of tilting modules of $U_qsl(2)$,
at an appropriate root of unity.
According to \cite{Roberts1997:RefinedInvariants}, the invariant for any manifold of the form $S^1\times M^3$, with our normalisation, is:
\begin{equation}
	\widehat{CY}_\mathcal{C} = 2^{b_1-1} \qdim{\Omega_\mathcal{C}}
\end{equation}
$b_1$ is the first $\Z_2$-coefficient Betti number of the four-manifold.
A good example occurs for the root $q = \e^{{\im \pi}/{4}}$ (level 2),
when the simple objects are the half-integer spin representations $\Lambda_\mathcal{D} = \left\{0, \frac{1}{2}, 1\right\}$ and $\Lambda_\mathcal{C} = \{0, 1\}$.
In this example, $\mathcal{C} = \mathcal{C}' \simeq \Rep\left(\Z_2\right)$ is symmetric monoidal.
If one takes a different non-trivial root of unity, $\mathcal{C}$ will not be symmetric monoidal any more, but it still has exactly two transparent objects.

Note that our results differ from those reported in \cite{CraneYetterKauffman:1993ClassicalBroda}, where the authors implicitly assumed that $\mathcal{C}$ is modular, which it isn't.

\subsection{Dijkgraaf-Witten models}
\label{dijkgraaf-witten}

The purpose of this section is to show how Dijkgraaf-Witten models are a special case of the Crane-Yetter model, and therefore of Petit's dichromatic invariant.
The construction uses the representations of a finite group.
The same symbol is used for a representation and its underlying vector space.
If $\rho_1$ and $\rho_2$ are representations, then the trivial braiding is the map $c_{\rho_1,\rho_2}(x \otimes y) = y\otimes x$.

\begin{defn}
	Let $F\colon \Rep(G) \hookrightarrow \mathcal{D}$ be a full ribbon inclusion of the representations of a finite group $G$,  with the trivial braiding and trivial twist, into a modular category.
	Then the invariant $I_F$ is called the \textbf{Dijkgraaf-Witten invariant} associated to $G$.
\end{defn}
\begin{rema}
	This choice of name will be justified subsequently.
	Since $F$ is full, $I_F$ can be denoted as $\widehat{CY}_{\Rep(G)}$ and only depends on $G$, as argued in Section \ref{CYSSM}.
	A suitable modular category to embed $\Rep(G)$ is simply the Drinfeld centre.
	Further comments on Dijkgraaf-Witten invariants as Crane-Yetter or Walker-Wang TQFTs are found in Section \ref{TQFT}.
\end{rema}

\begin{defn}
	The \textbf{regular representation} of a finite group $G$ is denoted as $\Co[G]$ and defined as follows:
	The underlying vector space is the free vector space over the set $G$.
	The action of $G$ is defined on the generators by left multiplication.
	
	It is known that $\Co[G] \cong \Omega_{\Rep(G)} \cong \oplus_{\rho} \rho \otimes \Co^{\qdim{\rho}}$ where $\rho$ ranges over the irreducible representations of $G$.
\end{defn}
\begin{defn}
	Every group element $g \in G$ gives rise to a natural transformation of the fibre functor, $\mu(g)_\rho \colon \rho \to \rho$, given by $\mu(g)_\rho(v) = gv$.
	In fact, $\mu$ is a homomorphism.
\end{defn}
The following two lemmas are basic facts of finite group representation theory.
\begin{lem}
	\label{1handlegenerator}
	For any representation $\rho$, there is a projection on the invariant subspace:
	\begin{align}
		\operatorname{inv}_\rho \coloneqq{}& \sum_i \rho \xrightarrow{\iota_i} I \xrightarrow{\iota^i} \rho\\
		={}& \frac{1}{\lvert G \rvert} \sum_g \mu(g)_\rho
	\end{align}
	The $\iota_i$ and $\iota^j$ range over bases with $\iota_i \circ \iota^j = \delta_{i,j}1_I$.
\end{lem}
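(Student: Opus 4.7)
The plan is to show that both sides of the claimed equality are equal to the canonical projection $\rho\twoheadrightarrow\rho^G\hookrightarrow\rho$ onto the space of $G$-invariant vectors. First I identify the relevant hom spaces. Since $I$ is the trivial representation, $\mathcal{C}(I,\rho)$ consists of $G$-equivariant maps $\Co\to\rho$ which, by evaluation at $1$, is canonically isomorphic to $\rho^G$. Dually, using the semisimple decomposition $\rho\cong\rho^G\oplus W$ (where $W$ is the sum of all nontrivial isotypic components, which exists by Maschke's theorem), Schur's lemma gives $\mathcal{C}(\rho,I)\cong(\rho^G)^*$. The biorthogonality condition $\iota_i\circ\iota^j=\delta_{ij}1_I$ then says exactly that $\{\iota^i(1_I)\}$ and $\{\iota_j\}$ are dual bases of $\rho^G$ and $(\rho^G)^*$ respectively.

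For the first expression, each $\iota_i$ vanishes on $W$ by Schur's lemma, so $\operatorname{inv}_\rho$ vanishes on $W$. On $\rho^G$, expanding an arbitrary invariant vector in the basis $\{\iota^i(1_I)\}$ with coordinates read off via the dual basis $\{\iota_j\}$ recovers the vector, so $\operatorname{inv}_\rho$ restricts to the identity on $\rho^G$. Hence $\operatorname{inv}_\rho$ is the projection onto $\rho^G$ along $W$.

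For the averaging operator $A\coloneqq\tfrac{1}{|G|}\sum_g\mu(g)_\rho$, the computation $\mu(h)\circ A=\tfrac{1}{|G|}\sum_g\mu(hg)_\rho=A$ shows $\operatorname{im}A\subseteq\rho^G$, and $Av=v$ for $v\in\rho^G$ is immediate from $\mu(g)v=v$. Because $A$ is $G$-equivariant with image in $\rho^G$, its restriction to the $G$-stable complement $W$ lands in $\rho^G\cap W=0$. So $A$ is also the projection onto $\rho^G$ along $W$, and therefore coincides with $\operatorname{inv}_\rho$.

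The statement is a categorical reformulation of the classical Reynolds projector, so I do not anticipate any real technical obstacle; the only care needed is making the identification of hom spaces precise enough that the biorthogonality condition in the lemma translates cleanly into duality of bases of $\rho^G$ and $(\rho^G)^*$, and noting that both operators are forced to agree because a projection is determined by its image and its behaviour on a complement.
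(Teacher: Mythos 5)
Your proof is correct. The paper itself offers no argument here: it introduces this and the following lemma with the single remark that they are ``basic facts of finite group representation theory'' and moves on, so there is no in-paper proof to compare against. Your route --- identifying $\mathcal{C}(I,\rho)\cong\rho^G$ and $\mathcal{C}(\rho,I)\cong(\rho^G)^*$, translating the biorthogonality condition into duality of bases, and then checking that both $\sum_i\iota^i\circ\iota_i$ and the Reynolds averaging operator are the projection onto $\rho^G$ along the complement of the trivial isotypic component --- is the standard verification and fills the gap cleanly; the one step worth stating explicitly (which you do cover by invoking equivariance together with $\operatorname{im}A\subseteq\rho^G$) is that $A(W)\subseteq W\cap\rho^G=0$ requires both that $W$ is $G$-stable and that the image of $A$ is invariant.
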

\begin{lem}
	\label{2handletrace}
	The categorical trace over left multiplication on the regular representation, $\mu(g)_{\Co[G]}$, is proportional to the delta function:
	\begin{equation}
		\tr \left(\mu(g)_{\Co[G]}\right) = \lvert G \rvert \delta (g)
	\end{equation}
\end{lem}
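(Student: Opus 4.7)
The plan is to reduce the statement to a direct matrix computation on the regular representation, using the fact that the categorical trace in $\Rep(G)$ agrees with the ordinary linear trace.

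First I would observe that, since $\Rep(G)$ is considered here with its canonical spherical structure (inherited from $\Vect$ via the forgetful fibre functor, with trivial twist), the categorical trace of an endomorphism $f\colon \rho \to \rho$ coincides with the ordinary vector space trace of $f$ viewed as a linear map on the underlying space of $\rho$. This follows because the evaluation and coevaluation in $\Rep(G)$ are the standard ones on the underlying vector spaces, and the pivotal isomorphism $i_\rho$ is the identification $\rho \cong \rho^{**}$ inherited from $\Vect$, under which the expression $\ev_\rho \circ (f \otimes 1_{\rho^*}) \circ \widetilde{\coev}_\rho$ collapses to $\sum_i e_i^*(f(e_i))$ for any basis $\{e_i\}$ of $\rho$.

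Next I would compute the ordinary trace of $\mu(g)_{\Co[G]}$ in the distinguished basis $\{e_h\}_{h\in G}$ of the regular representation. By definition, $\mu(g)_{\Co[G]}(e_h) = e_{gh}$, so the matrix of $\mu(g)_{\Co[G]}$ in this basis has entries $\delta_{h',\, gh}$. Its diagonal entries are $\delta_{h,\, gh}$, which equal $1$ exactly when $gh = h$, i.e.\ exactly when $g = e$, and $0$ otherwise. Summing over the $|G|$ diagonal entries yields $|G|$ when $g = e$ and $0$ when $g \neq e$, which is precisely $|G|\,\delta(g)$.

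There is essentially no obstacle beyond bookkeeping: the only conceptual point is the identification of the categorical trace with the linear one, and the only computational point is reading off the fixed points of left multiplication on $G$. Combining the two steps gives $\tr\bigl(\mu(g)_{\Co[G]}\bigr) = |G|\,\delta(g)$ as claimed.
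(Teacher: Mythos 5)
Your proof is correct. The paper offers no argument for this lemma, stating it only as a ``basic fact of finite group representation theory,'' and your computation is precisely the standard one that justifies it: the categorical trace in $\Rep(G)$ with its canonical pivotal structure (trivial twist, dimensions inherited from $\Vect$) reduces to the ordinary linear trace, and the trace of left multiplication by $g$ in the basis $\{e_h\}_{h\in G}$ counts the fixed points of $h\mapsto gh$, giving $\lvert G\rvert$ for $g=e$ and $0$ otherwise. Both steps are handled correctly, including the one genuinely non-trivial point (why the categorical and linear traces agree here), so the proposal is a complete proof of the statement.
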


\begin{defn}
	For a finite group $G$, a \textbf{flat $G$-connection} on a topological space $M$ is a homomorphism $\pi_1(M) \to G$.
\end{defn}
\begin{rema}
	\label{Connections-from-handles}
	Only connections on four-manifolds will be considered here.
	Recall from Section \ref{Kirby-calculus-fundamental-group} that the generators of the fundamental group $\pi_1(M)$ are given by the 1-handles, while each 2-handle is a relation word.
	Then a homomorphism $\pi_1(X) \to G$ is a choice of a group element for each 1-handle such that for every 2-handle, the group elements according to its relation word compose to the trivial element.
\end{rema}
The following result shows that this invariant depends only on $\pi_1(M)$.
\begin{thm}
	Let $\Rep(G)$ be the representations of a finite group $G$ with the symmetric braiding and trivial twist.
	Then $\widehat{CY}_{\Rep(G)}(M)$ is the number of flat $G$-connections on $M$.
\end{thm}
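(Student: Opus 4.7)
The plan is to evaluate the Kirby diagram directly in $\Rep(G)$ and show the sum reduces to a count of tuples $(g_1,\ldots,g_{h_1}) \in G^{h_1}$ satisfying the relations encoded by the 2-handles. First I would apply Proposition \ref{kirby direct definition}: since the trivial representation appears once in $\Co[G]$, one has $n=1$, and since the categorical dimensions in $\Rep(G)$ with trivial twist equal the ordinary dimensions, $\qdim{\Omega_{\Rep(G)}} = \sum_\rho (\dim\rho)^2 = |G|$. Hence
\[
\widehat{CY}_{\Rep(G)}(M) \;=\; \frac{\langle K(F)\rangle}{|G|^{h_2 - h_1}}.
\]
By fullness of $F$ (Proposition \ref{CY-case}(1)) the evaluation can be performed entirely inside $\Rep(G)$, and because $\Omega_{\Rep(G)}=\Co[G]$ as elements of the fusion algebra, I would bypass the sum over irreducible labellings and simply label every 2-handle with the regular representation $\Co[G]$.

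Next I would identify the role of each 1-handle. In Definition \ref{kirby direct labelling} the $+$ and $-$ balls of a 1-handle are labelled with dual bases of $\mathcal{D}(FV_1\otimes\cdots\otimes FV_N, I)$, and summing over this basis choice inserts the categorical projector from $V_1\otimes\cdots\otimes V_N$ onto its trivial isotypic component, i.e.\ onto the $G$-invariant subspace. By Lemma \ref{1handlegenerator} this projector equals $|G|^{-1}\sum_{g\in G}\mu(g)$ acting simultaneously on all strands through the 1-handle. Thus each 1-handle contributes a factor $|G|^{-1}$ together with a sum $\sum_{g_i\in G}$ indexed by a group element $g_i$ assigned to that 1-handle.

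Once these projectors are in place, each 2-handle is a closed loop coloured by $\Co[G]$. As it traverses the 1-handles in the cyclic order prescribed by the Kirby picture, it accumulates the composite left-multiplication $\mu(g_{i_1}^{\epsilon_1}\cdots g_{i_r}^{\epsilon_r})_{\Co[G]} = \mu(w_k(g_1,\ldots,g_{h_1}))_{\Co[G]}$, where $w_k$ is exactly the relation word attached to that 2-handle in Remark \ref{Connections-from-handles}; the triviality of the braiding and twist in $\Rep(G)$ ensures that crossings and framings introduce no extra factors. Lemma \ref{2handletrace} then evaluates the categorical trace of this endomorphism as $|G|\delta(w_k)$. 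Combining the contributions,
\[
\langle K(F)\rangle \;=\; \frac{1}{|G|^{h_1}}\sum_{g_1,\ldots,g_{h_1}\in G}\prod_{k=1}^{h_2} |G|\,\delta\!\bigl(w_k(g_1,\ldots,g_{h_1})\bigr) \;=\; |G|^{h_2-h_1}\cdot N,
\]
with $N$ the number of tuples $(g_1,\ldots,g_{h_1})$ for which every relation $w_k$ evaluates to $e$. By Remark \ref{Connections-from-handles}, $N$ is exactly the number of flat $G$-connections on $M$, and dividing by $|G|^{h_2-h_1}$ yields the theorem.

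The main obstacle I expect is the careful bookkeeping of orientations at each 1-handle: a 2-handle entering the $+$ ball versus the $-$ ball must be shown to contribute $\mu(g_i)$ versus $\mu(g_i^{-1})$, and one has to verify that this sign convention agrees with the inverse-letter convention used when reading off the relation word $w_k$ from the Kirby diagram. A secondary point to check is that the morphism-space isomorphism $\mathcal{D}(FV_1\otimes\cdots\otimes FV_N, I)\cong \Rep(G)(V_1\otimes\cdots\otimes V_N, I)$ coming from fullness of $F$ genuinely identifies the dual-basis insertion with the invariant-subspace projector of Lemma \ref{1handlegenerator}, so that the factor $|G|^{-1}$ emerges in the correct place.
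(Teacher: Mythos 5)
Your proposal is correct and follows essentially the same route as the paper: it invokes Proposition \ref{kirby direct definition} with $n=1$, uses fullness to work inside $\Rep(G)$, identifies the 1-handle dual-basis insertions with the invariant-subspace projectors of Lemma \ref{1handlegenerator}, evaluates each 2-handle loop via Lemma \ref{2handletrace} as $\lvert G\rvert\,\delta(w_k)$, and concludes with Remark \ref{Connections-from-handles}. The normalisation bookkeeping $\left<K(F)\right> = \lvert G\rvert^{h_2-h_1}N$ matches the paper's cancellation of the $\lvert G\rvert^{-h_1}$ and $\lvert G\rvert^{h_2}$ factors, and the orientation caveat you flag is precisely the $+/-$ ball convention the paper sets up in Section \ref{Kirby-calculus-fundamental-group}.
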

\begin{proof}
	The proof is graphical.
	Since $\mathcal D$ is modular, the simplified definition of the invariant from Proposition \ref{kirby direct definition} can be used.
	Since $F$ is full, the invariant can be calculated using objects and morphisms from $\mathcal C$, as in Proposition \ref{CY-case}.
	The morphism $K(F)$ can be manipulated using the coherence axioms of ribbon categories as isotopies of the link in the plane.
	An example is given in Figure \ref{subfig:dw a}, though one should bear in mind that in general there may be more than two 2-handle attaching curves passing along each 1-handle.
	There may also be crossings that cannot be removed by an isotopy alone.
	\begin{figure}
		\centering
		\begin{subfigure}{0.48\textwidth}
			\[
				\sum_{\mathclap{X,Y, \iota_i, \iota_j}} \qdim{X} \qdim{Y} \cdot
				\tikzbo{\pic[encircstyle/.style={thick},scale=0.95] {S1S1S2eval};}
			\]
			\caption{Evaluation of a handle picture of a non-simply-connected manifold.
			(In this example, $S^1 \times S^1 \times S^2$).}
			\label{subfig:dw a}
		\end{subfigure}
		\quad
		\begin{subfigure}{0.48\textwidth}
			\centering
			\[
				\sum_{\mathclap{Y, \iota_i, \iota_j}} \qdim{Y}\lvert G \rvert  \cdot
				\tikzbo{\pic[s1s1s2noencircle,scale=0.95] {S1S1S2eval};}
			\]
			\caption{Remove the 2-handles not attached to any \mbox{1-handles} to give a global factor.}
			\label{subfig:dw b}
		\end{subfigure}
		\medskip
		\begin{subfigure}{0.48\textwidth}
			\[
				\sum_{\mathclap{Y, \iota_i, \iota_j}} \qdim{Y}\lvert G \rvert  \cdot
				\tikzbo{\pic {dw1};}
			\]
			\caption{Rearrange the 1-handles to recognise the projection morphisms.}
			\label{subfig:dw c}
		\end{subfigure}
		\quad
		\begin{subfigure}{0.48\textwidth}
			\centering
			\[
				\lvert G \rvert^{-1} \cdot
				\tikzbo{\pic {dw2};}
			\]
			\caption{1-handles are generators of the fundamental group.
				Trace with the relation words.}
			\label{subfig:dw d}
		\end{subfigure}
		\caption{For the representations of a finite group,
			$\widehat{CY}$ evaluates to the Dijkgraaf-Witten invariant.}
		\label{dw}
	\end{figure}

	Consider any 2-handle in the link picture that is not linked to a 1-handle.
	Since $\Rep(G)$ is symmetric with trivial twist and $F$ is ribbon,
	the knot on the 2-handle, its framing and links to other 2-handles can be undone,
	and then the morphism can be isotoped away.
	All such 2-handles then give a global numerical factor which cancels parts of the normalisation,
	arriving at a diagram that has only 2-handles which start or end in morphisms coming from 1-handles,
	while evaluating to the same value (Figure \ref{subfig:dw b}).
	
	The morphisms on the 1-handles are lined up horizontally and,
	after an isotopy, recognised as the projection morphisms $\operatorname{inv} = \frac{1}{\lvert G \rvert} \sum_g \mu(g)$ defined in Lemma \ref{1handlegenerator}.
	This is shown in Figures \ref{subfig:dw c} and \ref{subfig:dw d}.
	All of the 1-handles then give a morphism $\frac{1}{\lvert G \rvert^{h_1}} \sum_{g_1} \mu\left(g_1\right) \otimes \sum_{g_2} \mu\left(g_2\right) \otimes \cdots \otimes \sum_{g_{h_1}} \mu\left(g_{h_1}\right)$,
	which are traced over with the 2-handles.
	The factor $\frac{1}{\lvert G \rvert^{h_1}}$ is cancelled by the normalisation as well since $\Omega_\mathcal{C} = \lvert G \rvert$.
	
	To perform the trace for each 2-handle, consider Lemma \ref{2handletrace}.
	If the relation word for the 2-handle $k$ is denoted by $r_1r_2\dots{}r_{m_k}$, the trace for $k$ is $\delta\left(g_{r_1}g_{r_2}\cdots{}g_{r_{m_k}}\right)$.
	Again the remaining normalisation is cancelled with the factor $\lvert G \rvert$.
	After tracing out with all 2-handles, the invariant is then
	\begin{align}
		&& \widehat{CY}_{\Rep(G)}(M) &= \sum_{g_1 \in G} \sum_{g_2 \in G} \cdots \sum_{g_{h_1} \in G} \prod_{\text{2-handles }k} \delta\left(g_{r_1}g_{r_2}\cdots{}g_{r_{m_k}}\right)\nonumber\\
		&& \qquad &= \left\lvert\left\{ \phi\colon \pi_1(M) \to G \right\}\right\rvert
	\end{align}
	using Remark \ref{Connections-from-handles}.
\end{proof}
This result shows that $\widehat{CY}_{\Rep(G)}$ is the partition function of a Dijkgraaf-Witten model, described for example in \cite{Yetter:1992rz}.
In the more common normalisation in the literature, one would divide $\widehat{CY}_{\Rep(G)}$ by $\lvert G \rvert = \qdim{\Omega_\mathcal{C}}$, though.

\begin{rema}
	One would expect a four-dimensional Dijkgraaf-Witten model to depend not only on a finite group $G$, but also on a 4-cocycle on $G$.
	The cocycle in the present model is trivial, though.
	A natural way for a 4-cocycle to arise is as a pentagonator in a tricategory.
	But braided categories are a special case of a tricategory with one 1-morphism, and these have a trivial pentagonator, see e.g. \cite{GurskiCheng:PeriodicTableII}.
	Hence, there seems little hope to introduce the data of a 4-cocycle into the representation category of $G$.
	The model would have to be generalised to fully weak monoidal bicategories, for example, following e.g. \cite{Mackaay:Sph2cats}.
\end{rema}
\begin{rema}
	Due to the Doplicher-Roberts reconstruction (see \cite[Paragraph 2.12]{OnBraidedFusionCats} for a categorical approach),
	it is known that symmetric fusion categories with trivial twist are essentially representation categories of finite supergroups.
	If  the dimensions of all objects are required to be positive, the supergroup is in fact a group.
	So the case studied here is not much more restrictive than demanding that $\mathcal{C}$ be a symmetric fusion category.
\end{rema}

\subsection{Invariants from group homomorphisms}
\label{sec:homomorphism}
It is natural to consider generalising the Dijkgraaf-Witten examples by replacing the group $G$ with a homomorphism $\phi\colon P\to G$.
Any homomorphism can be factored into a surjective homomorphism followed by an inclusion, as $P\to \Ima\phi\to G$.
Taking the categories of unitary finite-dimensional representations leads to a functor 
\[ \phi^*\colon\Rep(G)\to\Rep(P)
\]
given by composition with $\phi$.
It factors into functors $A\colon\Rep(G)\to\Rep(\Ima\phi)$ followed by $B\colon\Rep(\Ima\phi)\to\Rep (P)$.
The first functor $A$ is a restriction functor, which is a dominant functor.
This follows from the fact that any $\Ima\phi$-representation $\rho$
is a subobject of $A(\Ind\rho)$,
where $\Ind$ is the induction functor to $P$-representations.
The second functor $B$ is a full inclusion.

\subsubsection{Trivial braiding}
\label{sec:trivial-braiding}
The first case to consider is when $\Rep(P)$ is augmented with the trivial braiding and trivial twist to make it a ribbon category, as in the Dijkgraaf-Witten invariant.
Let $F\colon \Rep(P) \hookrightarrow \mathcal{D}$ be a full ribbon inclusion of $\Rep(P)$ with the trivial ribbon structure into a modular category.

Then the invariant $I_{F\circ\phi^*}$ generalises the Dijkgraaf-Witten invariant in principle but its evaluation is the same as a Dijkgraaf-Witten invariant.
Indeed $F\circ\phi^*=F\circ B\circ A$.
But $A$ is dominant unitary and can be cancelled using Proposition \ref{invariant reduces}, while $F\circ B$ is a full ribbon inclusion of $\Rep(\Ima\phi)$ in $\mathcal D$, and so defines a Dijkgraaf-Witten invariant. 

Despite the fact that the invariant is not new, the construction is still interesting because it may be a starting point for physical models.
Just as in Proposition \ref{CY-case}, the invariant can be calculated in the category $\mathcal \Rep(P)$.
The object labels are simple objects  $X_i\in\Rep(G)$ and the morphism labels are a basis in $\Rep(P)\left(\phi^*X_1\otimes\ldots\otimes \phi^*X_N,I\right)$, or its dual space.
The invariant is evaluated using the representation $p\mapsto\phi^*\mu_{\Co[G]}(p)=\mu_{\Co[G]}(\phi(p))$ with trace
\[ \tr\mu_{\Co[G]}\left(\phi(p)\right) = \lvert G \rvert\delta\left(\phi(p)\right) \]
 using the delta-function in $G$.
 The projection morphisms are 
\[ \frac{1}{\lvert P \rvert}\sum_p\mu\left(\phi(p)\right).\]
Since the functor $F$ is a full inclusion, the multiplicity $n$ is just the multiplicity of $I$ in $\phi^*\Co[G]$.
This can be calculated as $n=\frac{\lvert G \rvert}{\lvert \Ima\phi\rvert}$.
The formula for the invariant is thus
\begin{align}
	\label{eq:teleparallel}
	&&I_{F\circ\phi^*}(M) &=\frac{1}{\lvert \Ker\phi\rvert^{h_1}} \sum_{p_1 \in P} \sum_{p_2 \in P} \cdots \sum_{p_{h_1} \in P} \prod_{\text{2-handles }k} \delta\left(\phi(p_{r_1}p_{r_2}\cdots{}p_{r_{m_k}})\right)
\end{align}
Immediately, one can see that one can replace the $\delta$-function in $G$ by the one in $\Ima\phi$ without changing the value of the invariant.
Also each group element $\phi(p)$ appears exactly $\lvert \Ker\phi \rvert$ times, cancelling the normalisation.
Thus one sees explicitly that the manifold invariant is the Dijkgraaf-Witten invariant of the subgroup $\Ima\phi\subset G$.

\subsubsection{Non-trivial braiding}

A different construction from a group homomorphism is to consider cases where $\Rep(P)$ is augmented with  a non-trivial braiding.
Then one can consider the invariant $I_{\phi^*}$ directly,
without needing the inclusion into a modular category.
(Of course this also works with the trivial braiding,
but then $I_{\phi^*}$ can be postcomposed with the fibre functor to vector spaces,
Proposition \ref{invariant reduces} can be applied,
and the invariant is equal to $1$.)

\begin{exam}
	If $\phi\colon P\to G$ is injective, then $I_{\phi^*}=I_{1_{\Rep(P)}}$, which is a Broda invariant for the category $\Rep P$ and depends only on the Euler number and signature of the four-manifold.
\end{exam}

\begin{exam}
	\label{ex:surjective}
	If $\phi\colon P\to G$ is surjective, then $I_{\phi^*}$ is a Petit dichromatic invariant.
\end{exam}

Simple examples arise from $P=\Z_n$, the cyclic group of order $n$ with the anyonic braiding \cite[Example 2.1.6]{Majid:book} and the pivotal structure from $\Vect$.
The irreducible representations are one-dimensional and also labelled by $\Z_n$.
The braiding on two irreducibles $k,k'$ is
\[ x\otimes y\mapsto e^{\frac{2\pi i}n kk'} y\otimes x \]
and so the transparent objects are $k=0$, and also $k=n/2$ if $n$ is even.
In the case that $n$ is odd, $\Rep(\Z_n)$ is modular and so the invariant of Example \ref{ex:surjective} only depends on $\Rep(G)$ with its induced ribbon structure.
It is a Crane-Yetter invariant.

There are many more possible braidings \cite{Davydov:1997QuasitriangularCocommutativeHopfAlgs} and it seems an interesting project to explore the corresponding constructions of the invariant and Crane-Yetter models, which is left for future work.

\section{Relations to TQFTs and physical models}
\label{other-models}

This discussion section is written in a more informal style.

The invariants defined in this paper are related to various physical models.
It is not just the value of the invariant that is important but also its construction in terms of data on simplices or handles.
This is because in a physical model one is interested in features that are localised to lower-dimensional subsets, such as boundaries, corners or defects associated to embedded graphs, surfaces or other strata.
In some cases it is possible to identify this data as the discrete version of a field in quantum field theory.
In summary, the same invariant can extend to lower dimensions in different ways.

\subsection{TQFTs from state sum models}

Whenever there is a state sum formula for $I_F$, that is, when $\mathcal{D}$ is modular, it is possible to cast it in the form of a Topological Quantum Field Theory (TQFT) $\mathcal{Z}$, following a standard recipe \cite{TuraevViro:1992865}.
\begin{itemize}
	\item For a boundary manifold $M^3$ with a given triangulation $\Delta$, define the set of labellings $L(M, \Delta)$ exactly like for the state sum model in Definition \ref{labelling}.
	Then define the free complex Hilbert space $Y(M, \Delta) \coloneqq \Co[L(M, \Delta)]$.
	\item For a cobordism $\Sigma^4\colon M_1 \to M_2$ with triangulation $\Delta$,
	the transition amplitude $\braopket{l_1}{U(\Sigma, \Delta)}{l_2}$ is defined for the basis vectors coming from $l_{1,2} \in L\left(M_{1,2}, \Delta\rvert_{1,2}\right)$ via the state sum:
	Sum over all labellings of $\Sigma$ that have $l_1$ and $l_2$ as boundary conditions.
	This gives a linear map $U(\Sigma, \Delta)\colon L\left(M_1, \Delta\rvert_1\right) \to L\left(M_2, \Delta\rvert_2\right)$.
	It is independent of the triangulation in the interior.
	\item $\mathcal{Z}$ assigns to an object $M^3$ the image of $U(I \times M)$.
	These spaces can be identified for different triangulations in a coherent way,
	again using cylinders.
	The resulting vector space is then independent of the triangulation of $M$.
	\item $\mathcal{Z}$ on morphisms $\Sigma$ is defined by the restriction of $U$ to the aforementioned spaces.
	Since a cylinder can always be glued to a cobordism without changing its isomorphism class, this is well-defined.
\end{itemize}

\begin{toricyes}
\subsection{Walker-Wang models and the toric code}
\end{toricyes}
\begin{toricno}
\subsection{Walker-Wang models}
\end{toricno}
\label{TQFT}

By the previous subsection,
Petit's dichromatic invariant $I_F$ for a full inclusion $F\colon \mathcal{C} \hookrightarrow \mathcal{D}$ into a modular category extends to a Topological Quantum Field Theory $\mathcal{Z}$.
More precisely, for a closed cobordism $\Sigma^4$,
\begin{align}
	\mathcal{Z}(\Sigma) &= \frac{\widehat{CY}_\mathcal{C}(\Sigma)}{\qdim{\Omega_\mathcal{C}}^{1-\chi(\Sigma)}}
\end{align}
The denominator $\qdim{\Omega_\mathcal{C}}^{1-\chi(\Sigma)}$ is provided by comparison to the Crane-Yetter state sum \eqref{proof crane yetter}.

It is believed that Walker-Wang TQFTs \cite{WalkerWang} are the Hamiltonian formulation of Crane-Yetter TQFTs.

This would imply that the dimensions of these state spaces for boundary manifolds $M^3$ can be calculated:
\begin{align*}
	\dim \mathcal{Z}(M) &= \tr \eins_{\mathcal{Z}(M)}\\
	&= \mathcal{Z}\left(S^1 \times M\right)\\
	&= \frac{I_\mathcal{C} \left(S^1 \times M\right)}{\qdim{\Omega_\mathcal{C}}}\numberthis
\end{align*}
Non-trivial values of the invariant for manifolds of the form $S^1 \times M^3$ can then be interpreted as dimensions of state spaces of the corresponding TQFT.
Comparing with Section \ref{non-simply-connected} shows that these dimensions can indeed be greater than 1, as in the example of Broda's refined invariant.

As an example, for $M = S^1 \times S^2$, one arrives at $\dim \mathcal{Z}(S^1 \times S^2) = \lvert\Lambda_{\mathcal{C}'}\rvert$.
This result is in excellent agreement with the analysis of Walker-Wang ground state degeneracies in \cite{KeyserlingkBurnellSimon}.
The state space of a TQFT corresponds to the space of ground states of the Hamiltonian.

\begin{toricyes}
A known special case is again the refined Broda invariant, mentioned in Section \ref{non-simply-connected}.
Choosing the deformation parameter $q=\e^{\frac{\im\pi}{4}}$ results in $\mathcal{C} \simeq \Z_2$ and 
the resulting model is the 3+1-dimensional generalisation of the toric code \cite{LevinWen2005} studied in \cite{KeyserlingkBurnellSimon}.
\end{toricyes}

If $\mathcal{C} \simeq \Rep(G)$ for $G$ a finite group, the dimensions can be calculated explicitly, recalling Section \ref{dijkgraaf-witten}:
\begin{align*}
	\frac{\widehat{CY}_{\Rep(G)} \left(S^1 \times M\right)}{\qdim{\Omega_\mathcal{C}}} &= \frac{\left\lvert \left\{ \phi\colon \pi_1\left(S^1 \times M\right) \to G \right\} \right\rvert}{\lvert G \rvert}\\
	&= \frac{\left\lvert \left\{ \phi\colon \Z \times \pi_1(M) \to G \right\} \right\rvert}{\lvert G \rvert}\\
	&= \frac{\left\lvert \left\{ \left(\phi\colon \pi_1(M) \to G, g \in G\right) \lvert \phi = g\phi g^{-1} \right\} \right\rvert}{\lvert G \rvert}\\
	\text{(By Burnside's lemma)} \qquad &= \left\lvert \left\{ \phi\colon \pi_1(M) \to G \right\} / \phi \sim g\phi g^{-1}\right\rvert\numberthis
\end{align*}
The state spaces are thus spanned by conjugacy classes of connections on the boundary manifolds,
as one would expect if $\widehat{CY}_{\Rep(G)}$ extends as a Dijkgraaf-Witten TQFT.

\subsection{Quantum gravity models}
\label{invariants cartangeo}

General relativity can be formulated in terms of connections and so it is natural to construct state sum models, or more generally quantum invariants of manifolds, that are modelled on connections.
Usually the groups are Lie groups,
but their representation categories are not fusion since the number of irreducibles is not finite.
As a toy model therefore one can replace the Lie groups by finite groups to get an easy comparison with some of the invariants constructed above.
A more sophisticated resolution of this problem is to use instead representations of quantum groups at a root of unity, which are indeed fusion categories.
Finite groups are discussed here first and then some comments on the obstruction to using quantum groups in a similar way are made below.

Cartan connections can be thought of as principal $G$-connections that allow only gauge transformations of a subgroup $P \hookrightarrow G$.
One of the motivations for the development of the generalised dichromatic invariant was the hope of arriving at a state sum model that could be interpreted as quantum Cartan geometry.
Since there are formulations of general relativity in terms of Cartan geometry (see e.g. \cite{derek:MacDowell-Mansouri}), this would give an interesting new approach to quantum gravity.
However the constructions in Section \ref{sec:homomorphism} based on an inclusion $P\hookrightarrow G$ do not appear to lead to interesting new models. 

A closely related construction is teleparallel gravity.
This is based on a surjective homomorphism $P\to G$ with kernel $N$.
According to Baez and Wise \cite[theorem 32]{Baezderek:teleparallel} the data for teleparallel gravity is a flat $G$-connection and a 1-form with values in the Lie algebra of $N$.
For them, $P$ is the Poincar\'e group and $N$ the translation subgroup, but here the groups are allowed to be more general.

A flat $G$ connection is easily described as an assignment of an element $g\in G$ to each 1-handle with a relation on each 2-handle, as in the Dijkgraaf-Witten model.
The discrete analogue of the 1-form is the assignment of an element $n\in N$ to each 1-handle, with no relations on this data.
For finite groups, this is exactly the data that is summed over in \eqref{eq:teleparallel}, the invariant associated to the homomorphism $\phi\colon P\to G$ that has kernel $N$.
Two elements $p,p'\in P$ such that $\phi(p)=\phi(p')$ differ by an element $p^{-1}p'\in N$.
This is the discrete analogue of the fact that the difference of two connection forms on a manifold is a 1-form.
Thus the construction in \eqref{eq:teleparallel} is a plausible finite group analogue of a sum over configurations of teleparallel gravity.

\subsubsection{Quantum groups}

Classical geometry works with Lie groups, which have an infinite number of irreducible representations.
One hope would be to use quantum groups at a root of unity as a regularisation.
However, few Lie group homomorphisms carry over to quantum groups.
There are many examples of subgroups of Lie groups, but fewer sub-quantum groups of quantum groups are known.
This is because most Lie group homomorphisms do not preserve the root system of the Lie algebras and thus neither the deformation.
And even for Hopf algebra homomorphisms, the restriction functor is not necessarily pivotal:
\begin{exam}
	As an example of a restriction functor that isn't pivotal, consider the category of tilting modules of $U_qsl(2)$ at an $n$-th root of unity.
	Its simple objects are spins $j \in \{0,\frac{1}{2},\dots{}\}$.
	Recall that $\Co[\Z_n]$ is a sub-Hopf algebra of $U_qsl(2)$.
	Recalling that $S^2 = SU(2)/U(1)$, one would hope that this Hopf algebra inclusion serves as Cartan geometry with a quantum 2-sphere.
	
	The irreducible representations of $\Co[\Z_n]$ are Fourier modes $\dots{}, -1, 0, 1, \dots{}$.
	Consider the restriction functor of representations, $\Res$.
	It is obviously monoidal.
	Then $\Res\left(\frac{1}{2}\right) = -1 \oplus 1$.
	Both summands are invertible and thus have dimensions 1, whereas the quantum dimension of $\frac{1}{2}$ is generally not even an integer.
	Thus $\Res$ does not preserve quantum dimensions and can't be pivotal.
\end{exam}
The crucial problem here is that the inclusion does not map the spherical element of $\Co[\Z_n]$, which is $1$, onto the spherical element of $U_qsl(2)$.
A quantum group homomorphism of spherical quantum groups that preserves the spherical elements always gives rise to a pivotal functor on the representation categories \cite[Example 8.5]{Schaumann:GrayCats}.
However, no such homomorphism that gives rise to an invariant that is not a combination of the previously studied cases is known to the authors.

\subsubsection{Spin foam models}

Spin foam models are state sum models for quantum gravity constructed using representations of a quantum group, originally the ``spins'' of $U_qsl(2)$, hence the name.
Starting with a Crane-Yetter state sum, a popular strategy in spin foam models is to impose constraints on the labels on the triangles and tetrahedra to mimick approaches to gravity as a constrained $BF$-theory \cite{Baez:SFMsBF}.
The unconstrained theory corresponds to the Crane-Yetter state sum, and different quantisation strategies of the classical constraints lead to different constraints, like in the Barrett-Crane \cite{BarrettCrane:19983296} or the EPRL-model \cite{EPRL}.
However, in these models the constraints on objects and morphisms typically spoil the monoidal product and so are not examples of the constructions presented here.
An interesting question is whether it is possible to construct spin foam models of the type considered here, for example a spin foam model for teleparallel gravity.
Such a model would involve studying the question of whether there are interesting quantum group analogues of a surjective homomorphism of groups.

\subsection{Nonunitary theories}
There are two possibilities to arrive at a theory which might be more general than the Crane-Yetter model.
The first is to drop the assumption of the target category being modularisable; however this is a mild assumption which only specialises from supergroups to groups.
Alternatively, when dropping the assumption that the categories are unitary,
Lemma \ref{FPdimensions reduce} is not applicable any more.
To the knowledge of the authors, it is not known whether for a dominant pivotal functor will always satisfy $F\Omega_\mathcal{C} = n \cdot \Omega_\mathcal{D}$, so a counterexample might lead to an invariant that can't be reduced to a Crane-Yetter model.

\subsection{Extended TQFTs}

It is a common assumption that the Crane-Yetter model for modular $\mathcal{C}$ is an invertible four-dimensional extended TQFT.
According to the cobordism hypothesis, it should correspond to an invertible (and therefore fully dualisable) object in a 4-category.
The 4-category in question has as objects braided monoidal categories, as 1-morphisms monoidal bimodule categories (with an isomorphism between left and right action compatible with the braiding), as 2-morphisms linear bimodule categories, and furthermore bimodule functors and natural transformations.

A ribbon fusion category $\mathcal{C}$ acting on itself as a mere fusion category $\mathcal{M}$ from left and right should be an example for a fully dualisable, potentially noninvertible object.
The object is $\mathcal{C}$ itself, while its dualisation data on the 1-morphism level is the bimodule data of $\mathcal{M}$.
Being a fusion category, $\mathcal{M}$ is a bimodule over itself, giving the 2-morphism level of dualisation.
The higher levels of dualisation should correspond to finite semisimplicity.

As has been suggested recently \cite[Section 3.2]{HenriquesPenneysTener:2015CatTracesAndModuleTensorCatsOverBraidedCats},
a good notion of monoidal module structure on a monoidal category $\mathcal{M}$ over a braided category $\mathcal{C}$ is a braided central functor from $\mathcal{C}$ to $\mathcal{M}$,
i.e. a braided functor $F\colon \mathcal{C} \to \mathcal{Z}(\mathcal{M})$.
One would expect that the extended TQFT corresponding to such a bimodule is an extension of our (properly normalised) invariant for $F$, whenever it is also pivotal.
And indeed, the inclusion $\mathcal{C} \to \mathcal{Z}(\mathcal{C})$ yields the Crane-Yetter model for $\mathcal{C}$.

\section{Outlook}
\label{outlook}
The generalised dichromatic invariant is a very versatile invariant in that it contains many previously studied theories as special cases.
Table \ref{table:overview} gives an overview which functors give rise to several special cases.
\begin{table}[!ht]
\begin{tabular}{p{0.35\textwidth}p{0.41\textwidth}p{0.14\textwidth}}
	\toprule
	Model & Pivotal functor $F$ & Discussion\\
	\midrule
	$U_qsl(2)$-Crane-Yetter state sum, Broda invariant & $1_\mathcal{C}\colon \mathcal{C} \to \mathcal{C}$ for $\mathcal{C}$ the tilting modules (spins) of $U_qsl(2)$ & Example \ref{identity}\\\midrule
	\begin{toricno}
	Refined Broda invariant with $q=\e^{\im\pi/4}$
	\end{toricno}
	\begin{toricyes}
	Refined Broda invariant with $q=\e^{\im\pi/4}$, toric code
	\end{toricyes}
	 & Canonical inclusion $\mathcal{C} \hookrightarrow \mathcal{D}$ for $\mathcal{C} \simeq \Rep \Z_2$ generated by spins $\{0,1\}$ and $\mathcal{D}$ all spins $\left\{0,\frac{1}{2},1\right\}$ & Sections \ref{non-simply-connected}, \ref{petit broda} and \ref{TQFT}\\\midrule
	Refined Broda invariant, Crane-Yetter model for integer spins & Canonical inclusion $\mathcal{C} \hookrightarrow \mathcal{D}$ for $\mathcal{C}$ integer spins and $\mathcal{D}$ all spins & Sections \ref{non-simply-connected} and \ref{petit broda}\\\midrule
	Dijkgraaf-Witten TQFT for a finite group $G$ & Any full inclusion of $\Rep(G)$ into a modular category, e.g. canonical inclusion $\Rep(G) \hookrightarrow \mathcal{Z}(\Rep(G))$ & Sections \ref{dijkgraaf-witten} and \ref{TQFT}\\\midrule
	General Crane-Yetter state sum, Walker-Wang TQFT for $\mathcal{C}$ any ribbon fusion category & Any full inclusion of $\mathcal{C}$ into a modular category, e.g. canonical inclusion ${\mathcal{C} \hookrightarrow \mathcal{Z(C)}}$ & Sections \ref{CYSSM} and \ref{TQFT}\\\midrule
	Petit's dichromatic invariant & Any full inclusion $F\colon \mathcal{C} \hookrightarrow \mathcal{D}$ for $\mathcal{C}$ and $\mathcal{D}$ ribbon fusion categories & Example \ref{dichromatic}\\\midrule
	``Generalised dichromatic state sum models'' & Any functor into a modular category & Section \ref{general ssm}\\\bottomrule
\end{tabular}

\caption{Overview of the known special cases of the generalised dichromatic invariant,
up to a factor of the Euler characteristic.}
\label{table:overview}
\end{table}
The generalised dichromatic invariant is at least as strong as the Crane-Yetter invariant, which is stronger than Euler characteristic and signature, although it is not known how strong exactly.
If the additional constraints that the pivotal functor is unitary and the target category is modularisable are imposed,
the generalised dichromatic invariant is exactly as strong as $CY$.
In this situation, an upper bound for the strength of the state sum formula is probably given in \cite{2005FreedmanKitaevNajakSlingerlandWalkerWang4dunitaryTQFTs}:
Unitary four-dimensional TQFTs cannot distinguish homotopy equivalent simply-connected manifolds, or in general, s-cobordant manifolds.
It remains to be demonstrated whether it is possible to construct a stronger, nonunitary TQFT with the present framework.

It is indicated in the literature \cite{WalkerWang} that the Walker-Wang model -- and therefore also $CY$ -- for an arbitrary ribbon fusion category should factor into $CY$ of its modularisation and its symmetric centre.
The former reduces to the signature and the latter has been shown here to depend only on the fundamental group in the case of the symmetric centre being just the representations of a finite group.
With the present framework,
the conjecture can be formulated precisely:
\begin{conj}
	Let $\mathcal{C}$ be a modularisable ribbon fusion category with $\mathcal{C}'$ its symmetric centre and $\widetilde{\mathcal{C}}$ its modularisation.
	Then $\widehat{CY}_\mathcal{C} = \widehat{CY}_{\mathcal{C}'} \cdot \widehat{CY}_{\widetilde{\mathcal{C}}}$.
\end{conj}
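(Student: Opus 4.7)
The plan is to prove the conjecture by comparing the state sum formulas for all three invariants and exploiting the structure of $\mathcal{C}$ as an extension of $\widetilde{\mathcal{C}}$ by $\mathcal{C}'$.

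First, I would apply Proposition \ref{modularisation} to rewrite $\widehat{CY}_\mathcal{C}(M) = I_H(M)$, where $H\colon \mathcal{C} \to \widetilde{\mathcal{C}}$ is the modularisation functor. Since $\widetilde{\mathcal{C}}$ is modular, this yields an honest state sum via \eqref{ssm formula}. An argument analogous to Proposition \ref{CY-case} should allow the state sum to be recast with labellings taken in $\Lambda_\mathcal{C}$ and morphism bases drawn from $\mathcal{C}$ itself, since $H$ behaves like a full inclusion once restricted to the relevant hom-spaces after insertion. Similarly, $\widehat{CY}_{\widetilde{\mathcal{C}}}$ is just the Crane-Yetter state sum of $\widetilde{\mathcal{C}}$ via the identity functor.

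Second, by Deligne's theorem applied to $\mathcal{C}'$ (which has trivial twist and positive dimensions, since $\mathcal{C}$ is modularisable), there is a finite group $G$ with $\mathcal{C}' \simeq \Rep(G)$, and by Section \ref{dijkgraaf-witten}, $\widehat{CY}_{\mathcal{C}'}(M)$ essentially counts flat $G$-connections on $M$. I would then invoke the structural description of $\mathcal{C}$ as a $G$-equivariantisation of $\widetilde{\mathcal{C}}$: simple objects of $\mathcal{C}$ correspond to pairs consisting of a $G$-orbit of simples in $\widetilde{\mathcal{C}}$ together with an irreducible representation of the stabiliser. This should induce a factorisation of the generalised 15-j symbol $\pentagram$ for $\mathcal{C}$ at each 4-simplex into a $\pentagram$-symbol for $\widetilde{\mathcal{C}}$ multiplied by a group-theoretic factor depending on group elements assigned to the tetrahedra via the regular-representation decomposition used in Lemmas \ref{1handlegenerator} and \ref{2handletrace}.

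Third, I would assemble these per-simplex factorisations across the triangulation. The group-theoretic data on the tetrahedra, when summed subject to the compatibility constraints on shared faces, should reproduce exactly the Dijkgraaf-Witten sum computing $\widehat{CY}_{\mathcal{C}'}(M)$, while the modular factors aggregate to $\widehat{CY}_{\widetilde{\mathcal{C}}}(M)$. The normalisation factors require careful bookkeeping; in particular the identity $\qdim{\Omega_\mathcal{C}} = \qdim{\Omega_{\mathcal{C}'}}\cdot \qdim{\Omega_{\widetilde{\mathcal{C}}}}$ from Proposition 3.7 of \cite{Bruguieres:Modularisations} and the Euler characteristic prefactors from \eqref{proof crane yetter} should conspire to produce exactly the multiplicative form claimed.

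The main obstacle is verifying the per-simplex factorisation of the $\pentagram$ symbol. The extension of $\widetilde{\mathcal{C}}$ to $\mathcal{C}$ is classified in general by cohomological data (an $H^3(G,-)$ class of associators plus a twist of the braiding), which in principle could obstruct a clean multiplicative factorisation and force a cocycle correction into the Dijkgraaf-Witten factor. Resolving this either requires proving that this obstruction class always vanishes for modularisable $\mathcal{C}$, or refining the conjecture to replace $\widehat{CY}_{\mathcal{C}'}$ by a \emph{twisted} Dijkgraaf-Witten invariant carrying the obstruction cocycle — a step that, as noted in the discussion after Section \ref{dijkgraaf-witten}, is not yet available within the pivotal-functor framework of this paper and would likely require generalisation to bicategorical data.
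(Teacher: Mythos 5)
The first thing to note is that the paper offers no proof of this statement: it is deliberately left as an open conjecture at the end of Section \ref{outlook}, motivated by remarks in the Walker--Wang literature, so there is no argument of the authors to compare yours against. Your proposal does not close it either --- you concede yourself that the central step, the per-4-simplex factorisation of the $\pentagram$ symbol together with the possible cohomological obstruction coming from the equivariantisation data, is unresolved and might even force a modification of the statement. A plan whose decisive step is flagged as possibly false is a research programme, not a proof.

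More seriously, your opening move is incorrect. Proposition \ref{modularisation} applied with $F = 1_{\mathcal{C}}$ gives $I_{1_{\mathcal{C}}} = I_H$ for the modularisation $H\colon \mathcal{C}\to\widetilde{\mathcal{C}}$, but $I_{1_{\mathcal{C}}}$ is the generalised Broda invariant of Example \ref{identity}, \emph{not} $\widehat{CY}_{\mathcal{C}}$. By Proposition \ref{CY-case}, $\widehat{CY}_{\mathcal{C}}$ is $I_F$ for a \emph{full} pivotal inclusion $F$ of $\mathcal{C}$ into a modular category (e.g.\ the Drinfeld centre $\mathcal{Z}(\mathcal{C})$), and the paper stresses that $H$ is not full when $\mathcal{C}$ is nonmodular; fullness is exactly what Proposition \ref{CY-case} needs in order to identify the morphism labellings with morphisms of $\mathcal{C}$. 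Your claim that ``$H$ behaves like a full inclusion once restricted to the relevant hom-spaces'' is precisely what fails, and if it were true the conjecture would be refuted rather than proved: $I_H = I_{1_{\mathcal{C}}}$ depends only on signature and Euler characteristic, whereas Section \ref{non-simply-connected} computes $\widehat{CY}_{\mathcal{C}}(S^1\times S^1\times S^2) = \lvert\Lambda_{\mathcal{C}'}\rvert\,\qdim{\Omega_{\mathcal{C}}} \neq \qdim{\Omega_{\mathcal{C}}} = \widehat{CY}_{\mathcal{C}}(S^1\times S^3)$ for nonmodular $\mathcal{C}$. Any genuine attack has to start from the state sum for the full inclusion $\mathcal{C}\hookrightarrow\mathcal{Z}(\mathcal{C})$ and then exhibit the factorisation there; the equivariantisation heuristic in your second and third steps is a sensible way to look for it, but as written the argument both starts from the wrong invariant and leaves the one step that carries all the content unproven.
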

The case of supergroups has not been treated here, but one would not expect it to differ much,
except possibly a sensitivity to spin structures in the same manner as in the refined Broda invariant (Section \ref{petit broda}).

The question whether the general case of the framework presented here is stronger than the mentioned special cases still remains open.
Either way, motivated from solid state physics and TQFTs it would still be interesting to study how defects behave in the new models.

\section{Acknowledgements}
The authors wish to thank Bruce Bartlett, Steve Simon, Jamie Vicary, Nick Gurski, Claudia Scheimbauer, Alexei Davydov, André Henriques, Ehud Meir, Ingo Runkel, Christoph Schweigert, David Yetter and the participants of the Oxford ``TQFTea seminar'' and the workshop ``Higher TQFT and categorical quantum mechanics'' at the Erwin Schr\"odinger Institute in Vienna for helpful discussions and correspondence.

\printbibliography

\end{document}